\newtheorem{theorem}{Theorem}[subsection]
\title{A Formal System for the Universal Quantification of Schematic Variables}
\author{Ferruccio Guidi}
\affiliation{%
  \institution{University of Bologna}
  \department{Department of Computer Science and Engineering}
  \streetaddress{Mura Anteo Zamboni 7}
  \city{Bologna}
  \postcode{40126}
  \country{Italy}
}
\email{ferruccio.guidi@unibo.it}
\keywords{%
quantified schematic variables,
explicit substitutions,
extended applicability condition,
infinite degrees of terms,
preservation of validity,
strong normalization,
terms as types
}
\begin{abstract}

We advocate the use of de Bruijn's universal abstraction $\TAbst{\Y}{}{}{}$
for the quantification of schematic variables
in the predicative setting
and we present a typed $\TAbst{}{}{}{}$-calculus featuring the
quantifier $\TAbst{\Y}{}{}{}$ accompanied by other practically useful
constructions like explicit substitutions and expected type annotations.
Our calculus stands just on two notions,
\ie bound rt-reduction and parametric validity,
and has the expressive power of $\LR$.
Thus, while not aiming at being a logical framework by itself,
it does enjoy many desired invariants of logical frameworks
including confluence of reduction, strong normalization, preservation
of type by reduction, decidability, correctness of types and
uniqueness of types up to conversion. 
This calculus belongs to the $\LD{}{}$ family of formal systems,
which borrow some features from the pure type systems
and some from the languages of the Automath tradition,
but stand outside both families.
In particular, our calculus includes and evolves two earlier systems
of this family. Moreover, a machine-checked specification of its theory
is available.

\end{abstract}
\begin{document}

\maketitle


\section{Introduction}
\seclabel{introduction}

Mathematical theories make frequent use of variables that are universally
quantified in the meta-language and that are known as schematic variables.
We see them especially in the predicative or constructive setting,
where impredicative or second-order quantification
must be avoided in the language of the theories.
Taking an informal example,
Hilbert-style intuitionistic logic contains the axiom:
$\MATOM{\phi \to \psi \to \phi \land \psi}$,
in which the variables $\phi$ and $\psi$ are necessarily schematic,
as quantifying them in the axiom, would be impredicative and thus
contrary to the conventions of intuitionism.

On the other hand, in the meta-language of this axiom less restrictions apply
and we can quantify over $\phi$ and $\psi$ even if they are
second-order variables, but the quantification must be predicative.

This is indeed the case when the meta-language takes the form of
a framework enforcing the \emph{propositions-as-types} (PAT) interpretation
\cite{SU06}. In fact $\phi$ and $\psi$, that stand for propositions,
range in this framework over types.
Thus, in order to be adequate, the framework must support
second-order universal quantification in a predicative form.

The literature on typed $\lambda$-calculus
addresses this issue in more than one way.
In the world of pure type systems (PTS) \cite{Brn92}
the quantified axiom $\MALL{\phi}\MALL{\psi}\MATOM{\phi \to \psi \to \phi \land \psi}$
and its instances can reside in different sorts
(for instance, $\SortC$ and $\SortA$ respectively
in $\LQE$ [\ibid{}, example 5.2.4.8]).
With the refined pure type systems of \citeN{KLN03},
the variables $\phi$ and $\psi$ can be parameters,
thus the quantified axiom can take the form:
$\MPAR{\phi}\MPAR{\psi}\MATOM{\phi \to \psi \to \phi \land \psi}$.

In the frameworks of the Automath tradition \cite{SPA94},
the schematic variables are quantified with de Bruijn's abstraction,
denoted hereafter by $\TAbst{\Y}{}{}$.
As the $\TAbst{}{}{}$-abstraction of a PTS,
the $\TAbst{\Y}{}{}$-abstraction is a weak head normal form and undergoes $\beta$-reduction,
but the type of a $\TAbst{\Y}{}{}$-abstraction is itself a $\TAbst{\Y}{}{}$-abstraction.
Hence a $\TAbst{\Y}{}{}$-abstraction occurring at the level of types
acts as a predicative quantification.
More precisely, $\CType{\G}{\TAbst{\Y}{x}{W}T}{W}$ is false for every term $W$.

So, informally, the quantified axiom would take the form:
$\MABST{\phi}\MABST{\psi}\MATOM{\phi \to \psi \to \phi \land \psi}$.

In this article we wish to study the $\TAbst{\Y}{}{}$-abstraction
by building a rather minimal formal system around it.
This system does not aim at being a logical framework,
but just a fragment of some wider framework to be developed in the future.
In this respect, our present goal is just to assert that the system enjoys
many desirable properties already in itself
and to show how these properties must be stated to produce
the inductive hypotheses allowing to prove them.
The reader wishing to glance a conjectured logical framework
that includes our system may look at \citeN{lambdadeltaJ3a}.

We define our system in \secref{definition} and show it enjoys
the three main invariants referred to as \emph{the three problems} in Automath literature.
In particular, confluence is in \secref{transition},
strong normalization is in \secref{normalization}
and preservation, also known as subject reduction, is in \secref{preservation}.

We remark that our system is typed,
but type assignment is not a primitive notion.
So the reader will find our type rules in \secref{types} and \secref{ntas}
where the properties of types are proved.

The reader will also find our version of
the so-called big-tree theorem, stated first by \citeN{SPAc4},
and the confluence of bound rt-reduction, which we are presenting here
for the first time.

Stating and proving such properties requires quite a number of auxiliary
notions that we define in the main text and in the appendices.
One of the most relevant is the arity assignment of \secref{lsubc}.
Using arities, also known as norms or skeletons,
we can show that our system has the expressive power of $\LR$.
And yet, $\TAbst{\Y}{}{}$-abstraction improves $\LR$ by permitting uniformly dependent types.
This means that types can contain universally quantified variables,
but each variable can be instantiated just with terms having a fixed arity
determined in advance by its quantifier.

Our proofs passed a full computer check,
so we just outline them in the article
reporting on proof strategies and on main dependencies.
Most proofs are broken in many easy cases which we omit the details of.
Proofs by cases on a premise are by cases on the last step of its derivation.
Proofs by induction on a premise are by induction on 
the length of its derivation and are by cases as well.
Proofs by induction on a closure are by well-founded induction on the
proper subclosure relation.
Proofs by big-tree induction are by well-founded induction
on proper qrst-reduction of \secref{lsubv}.

The article is organized for the reader willing to see
the definitions and the propositions at first,
so the lengthy explanations on the system's features are in \secref{conclusion}
with our concluding remarks, where we highlight our contribution,
we discuss related work and we outline future perspectives.

Most concepts and methods discussed in this article
are not a novelty in the world of typed $\lambda$-calculus.
However, we would like to stress that the next two ideas seem original to us.

Bound rt-reduction (\secref{cpms})
combines any number of reduction steps and a given number of type inference steps
in a single relation.
A similar notion, allowing any number of type inference steps,
appears in the Automath tradition, and here in \secref{cpxs},
as a marginal device introduced just to prove specific theorems.
On the contrary, we advocate the central role of bound rt-reduction
by making it one of the pillars on which we build the system we are presenting.

Sort irrelevance (\secref {feqx}) equates terms that differ just
in the sorts they contain.
With the help of this equivalence relation,
that to our knowledge does not appear in the literature,
we can prove the big-tree theorem without the degree-based induction
on which its original proof is based.
At this point the notion of degree, recalled for convenience
in \secref{abstraction}, and its related theory can be fully removed
from the system's presentation in front of the reader.

\section{Definition of the System}
\seclabel{definition}

Our formal system,
whose syntax we explain in \secref{syntax},
stands on two notions:
bound rt-reduction (\secref{cpms}) and
parametric validity (\secref{nta}).
The first one comprises both ordinary reduction
and inferred type assignment.
The second one includes expected type checking.

In this article we borrow the distinction between
an expected type and an inferred type from \citeN{Cos96}.
Thus, an inferred type stands for what is also known as a synthesized type.

We recall that expected types are given in advance whereas inferred
types are computed from terms by type inference rules.
In this respect an inferred $2$-type of a term is an inferred type of an inferred type
of that term and in general we will consider inferred $n$-types for any $n \ge 0$.

Moreover, the expected types of a term form an equivalence class with
respect to conversion and the inferred types (usually just one)
of that term are specific representatives in the class.

\subsection{Syntactic Categories}
\seclabel{syntax}

Our grammar features two syntactic categories:
the terms and the environments of \tabref{tl} explained next.
In terms
$\TSRef{s}$ denotes
a generic sort ranging over a set $\Sort$ with at least one element.
$\TNRef{x}$ denotes
a generic variable ranging over a countable set $\SVar$.
$\TAppl{V}T$ denotes
the application of the function $T$ to the argument $V$,
displayed according to the so-called item notation of \citeN{KN96b}
in order to improve the visual understanding of redexes.
$\TAbst{\Y}{x}{W}T$ denotes
de Bruijn's abstraction in $T$ of $\TNRef{x}$ with expected type $W$.
$\TAbbr{x}{V}T$ denotes
the explicit substitution in $T$ of $\TNRef{x}$ with $V$.
$\TCast{U}T$ denotes
the annotation of $T$ with its expected type $U$ (also known as a type cast). 
Moreover in environments
$\LAtom$ denotes the empty environment.
$K\LAbst{x}{W}$ and $K\LAbbr{x}{V}$ denote
the declaration of $\TNRef{x}$ with expected type $W$
and
the definition of $\TNRef{x}$ as $V$,
both in the environment $K$.

\begin{table}
\seccaption{syntax}
{Terms and environments.}
\tablabel{tl}
\begin{tabular}{lr@{\;}l}
Term:& 
$T,U,V,W,X \GDEF$&
$
\TSRef{s} \GOR
\TNRef{x} \GOR
\TAppl{V}T \GOR
\TAbst{\Y}{x}{W}T \GOR
\TAbbr{x}{V}T \GOR
\TCast{U}T
$\nl
Environment:&
$K,L \GDEF$&
$
\LAtom \GOR
K\LAbst{x}{W} \GOR
K\LAbbr{x}{V}
$\\ 
\end{tabular}
\end{table}

Following a widely accepted convention,
meta-variables for names will consist of lowercase letters,
\ie usually $\TNRef{x}$, whereas uppercase letters
will denote meta-variables for terms and other entities.
The symbol $\TAppl{}$ (circled letter \emph{a}) is widely used for application,
the symbol $\TAbst{}{}{}$ is standard for functional abstraction,
the symbol $\TAbbr{}{}$ is taken after $\delta$-expansion,
\ie the operation of unfolding a definition or an explicit substitution, 
the symbol $\TCast{}$ (circled letter \emph{c}) is taken after \emph{cast}.

Our grammar uses the items $\TAbst{\Y}{x}{W}$ and $\TAbbr{x}{V}$
both in environments, where we will term them entries,
and in terms to reduce the number of displayed notations
and to suggest the embedding of environments in terms we envisioned in
\citeN{lambdadeltaJ1a}.
In addition, we introduce the shared notations of \tabref{shared}
to reduce the number of displayed rules.
If these notations occur more than once in a rule or in a statement,
they have the same meaning in every occurrence.

\begin{table}
\seccaption{syntax}
{Shared notations.}
\tablabel{shared}
\begin{tabular}{ll}
$\TPair{\Y}{x}{V}$ means $\TAbst{\Y}{x}{V}$ or $\TAbbr{x}{V}$&
$\TFlat{V}$ means $\TCast{V}$ or $\TAppl{V}$\\
\end{tabular}
\end{table}

Finally, notice that bound and free variable occurrences are
implicitly defined as one expects.
Moreover, we assume the variable convention of \citeN{Brn92},
\ie in every mathematical context
all bound variables are distinct and, moreover, they are different from all free variables.

We recall that the scope of binders
extends as much as possible at their right respecting parentheses.
This convention holds for terms binders and for environment entries as well.
Moreover, the entries of an environment $L$ bind the free variables
of a term $T$ in a closure $\Cl{L}{T}$ (see \secref{fqus}). 

A term $T$ is closed in $L$ when $L$ binds all free variables of $T$. 
The entries of $L$ recursively referred by $T$ are
the entries of $L$ binding the free variables of $T$
and the entries of $L$ recursively referred by the terms in these entries.
The inherited subterms of $T$ in $L$ are
the subterms of $T$ (including the term itself)
and the subterms of the terms in the entries of $L$ recursively referred by $T$.

\subsection{Reduction and Type Inference for Terms}
\seclabel{cpms}

In this section we define reduction and type inference at once
by means of bound rt-reduction, which we introduce in this article for
the first time.
This reduction system is environment-aware and deterministic.
In order to ease the proof of its confluence,
it minimizes the number of critical pairs and avoids the replication of residual redexes. 
It is derived from the reduction system used by \citeN{SPAc4}
to prove the strong normalization of $\VrLL$
and comprises two families of steps in addition to context rules.
The r-steps are for small-step reduction and
the t-steps are for basic type inference.
So, hereafter, r-reduction will comprise r-steps and context rules only,
\ie it will refer to ordinary reduction, while t-reduction will
comprise t-steps and context rules only.

The relation $\CPM{n}{L}{T_1}{T_2}$ defined in \tabref{cpm} denotes
one step of bound rt-reduction from $T_1$ to $T_2$ in $L$.
Here we can prove $n \in \SUBSET{0,1}{}$
and the two numbers have the next meaning.
If $n = 0$ then $T_2$ is an r-reduct of $T_1$,
\ie a reduct in the ordinary sense.
If $n = 1$ then $T_2$ is an r-reduct of an inferred type of $T_1$.
As a matter of fact, we do not need to define inferred types in themselves.
We just need to define their reducts. Information on how to define
inferred types is in \appref{cpts}.
The parameter $\Next{}$ appearing in Rule $\ruleref{cpm}{s}$
is a function from $\Sort$ to $\Sort$ that can be chosen at will.

\begin{table}
\seccaption{cpms}
{Bound rt-reduction for terms (one step).}
\tablabel{cpm}
\begin{tabular}{c}

Reduction rules (r-steps)\nl

\infer[\ruleref{cpm}{\beta}]
{\CPM{0}{L}{\TAppl{V}\TAbst{\Y}{x}{W}T}{\TAbbr{x}{\TCast{W}V}T}}
{}
\sep

\infer[\ruleref{cpm}{\delta}]
{\CPM{0}{K\LAbbr{x}{V}}{\TNRef{x}}{V}}
{}
\sep

\infer[\ruleref{cpm}{\zeta}]
{\CPM{0}{L}{\TAbbr{x}{V}T}{T}}
{}
\nl

\infer[\ruleref{cpm}{\theta}]
{\CPM{0}{L}{\TAppl{V}\TAbbr{x}{W}T}{\TAbbr{x}{W}\TAppl{V}T}}
{}
\sep

\infer[\ruleref{cpm}{\epsilon}]
{\CPM{0}{L}{\TCast{U}T}{T}}
{}
\nl

Type inference rules (t-steps)\nl

\infer[\ruleref{cpm}{s}]
{\CPM{1}{L}{\TSRef{s}}{\TSRef{\Next{s}}}}
{}
\sep

\infer[\ruleref{cpm}{l}]
{\CPM{1}{K\LAbst{x}{W}}{\TNRef{x}}{W}}
{}
\sep

\infer[\ruleref{cpm}{e}]
{\CPM{1}{L}{\TCast{U}T}{U}}
{}
\nl

Context rules\nl

\infer[\ruleref{cpm}{L}]
{\CPM{n}{K\LPair{y}{V}}{\TNRef{x}}{T}}
{\CPM{n}{K}{\TNRef{x}}{T}}
\sep

\infer[\ruleref{cpm}{\TAppl{}l}]
{\CPM{0}{L}{\TAppl{V_1}T}{\TAppl{V_2}T}}
{\CPM{0}{L}{V_1}{V_2}}
\sep

\infer[\ruleref{cpm}{\TAppl{}r}]
{\CPM{n}{L}{\TAppl{V}T_1}{\TAppl{V}T_2}}
{\CPM{n}{L}{T_1}{T_2}}
\nl

\infer[\ruleref{cpm}{Pl}]
{\CPM{0}{L}{\TPair{\Y}{x}{V_1}T}{\TPair{\Y}{x}{V_2}T}}
{\CPM{0}{L}{V_1}{V_2}}
\sep

\infer[\ruleref{cpm}{Pr}]
{\CPM{n}{L}{\TPair{\Y}{x}{V}T_1}{\TPair{\Y}{x}{V}T_2}}
{\CPM{n}{L\LPair{x}{V}}{T_1}{T_2}}
\nl

\infer[\ruleref{cpm}{\TCast{}l}]
{\CPM{0}{L}{\TCast{U_1}T}{\TCast{U_2}T}}
{\CPM{0}{L}{U_1}{U_2}}
\sep

\infer[\ruleref{cpm}{\TCast{}r}]
{\CPM{0}{L}{\TCast{U}T_1}{\TCast{U}T_2}}
{\CPM{0}{L}{T_1}{T_2}}
\sep

\infer[\ruleref{cpm}{\TCast{}b}]
{\CPM{1}{L}{\TCast{U_1}T_1}{\TCast{U_2}T_2}}
{\CPM{1}{L}{U_1}{U_2}&\CPM{1}{L}{T_1}{T_2}}
\nl

Rule $\ruleref{cpm}{\zeta}$:
$x$ not free in $T$.
\sep
Rule $\ruleref{cpm}{L}$:
$y \neq x$ and $y$ not free in $T$.
\\

\end{tabular}
\end{table}

Five Greek letters designate the r-steps:
$\beta$ is standard for function application,
$\delta$ is widely used for unfolding a definition,
$\zeta$ is widely used for removing a not referred definition,
$\theta$ appears in \citeN{CH00} for swapping an application-definition pair,
$\epsilon$, being alphabetically near to $\zeta$, is used here for removing a type annotation.
Three Latin letters designate the t-steps:
$s$, taken after \emph{sort}, is used here to type a sort,
$l$, taken after $\lambda$, is used here to type a declared variable
occurrence by applying the pattern of $\delta$-expansion to a $\TAbst{\Y}{}{}$-item,
$e$, taken after the r-step $\epsilon$, is used here for typing a term of which
the expected type is given by an annotation item.

According to what we stated in \secref{introduction},
the rt-reduction rules show that a $\TAbst{\Y}{}{}$-abstraction
is a weak head normal form.
Rule $\ruleref{cpm}{\beta}$ shows it undergoes $\beta$-reduction
(in delayed form, \ie introducing an explicit substitution)
and Rule $\ruleref{cpm}{Pr}$ for $n=1$ shows that its inferred type
is a $\TAbst{\Y}{}{}$-abstraction.

The usual big-step $\beta$-contraction is realized by
one small-step $\beta$-contraction followed by
zero ore more $\delta$-expansions
and one $\zeta$-contraction.
The $e$-contraction and the expected type $W$
in the $\beta$-reduct
are the key devices to extend without effort
the strong normalization of r-steps alone
to the strong normalization of r-steps and t-steps combined.
The $\theta$-swap is meant to realize
the $\beta$-contraction at a distance, \ie the $\beta_1$-step of \citeN{Ned73}.
Indeed, this notion is convenient in many situations as
the ones explored by \citeN{AGLK18,AL16}. 

The r-steps and the t-steps do not commute in general and
$(\epsilon, e)$ is a critical pair,
which is confluent if the redex $\TCast{U}T$ is valid.
On the other hand, r-reduction alone is always confluent.

Finally, the relation $\CPMS{n}{L}{T_1}{T_2}$
defined in \tabref{cpms} as 
the reflexive and transitive closure of $\CPM{n}{L}{T_1}{T_2}$,
denotes a bound rt-reduction sequence, \ie a computation,
from $T_1$ to $T_2$ in $L$.

\begin{table}
\seccaption{cpms}
{Bound rt-reduction for terms (sequence of steps).}
\tablabel{cpms}
\begin{tabular}{c}

\infer[\ruleref{cpms}{R}]
{\CPMS{0}{L}{T}{T}}
{}
\sep

\infer[\ruleref{cpms}{I}]
{\CPMS{n}{L}{T_1}{T_2}}
{\CPM{n}{L}{T_1}{T_2}}
\sep

\infer[\ruleref{cpms}{T}]
{\CPMS{n_1+n_2}{L}{T_1}{T_2}}
{\CPMS{n_1}{L}{T_1}{T}&\CPMS{n_2}{L}{T}{T_2}}
\\

\end{tabular}
\end{table}

Here $n$ can be any non-negative integer and
$T_2$ is an r-reduct of an inferred $n$-type of $T_1$.

We will say that two terms r-convert in $L$ when they have a common
r-reduct in $L$.

\subsection{Validity and Type Checking for Terms}
\seclabel{nta}

Generally speaking, a validity condition discriminates
the $\lambda$-terms having specific properties.

In our system, the validity predicate $\CNV{\A}{L}{X}$
we define with the rules of \tabref{cnv} 
asserts the next conditions \itref{closed} to \itref{strong}
for the term $X$ in the environment $L$.
The parameter $\A$ appearing in Rule $\ruleref{cnv}{\TAppl{}}$
and termed hereafter the applicability domain
is a subset of integers that can be chosen at will.

\begin{enumerate}

\item
\itlabel{closed}
The term $X$ is closed in $L$;

\item
\itlabel{cast}
if $X$ is $\TCast{U}{T}$,
an inferred type of $T$ r-converts to the expected type $U$ in $L$; 

\item
\itlabel{appl}
if $X$ is $\TAppl{V}{T}$,
an inferred $n$-type of $T$
r-converts to a functional term $\TAbst{\Y}{x}{W}U$ in $L$
and an inferred type of $V$ r-converts to the expected type $W$ in $L$,
provided that $n \in \A$;

\item
\itlabel{strong}
every inherited subterm of $X$ in $L$ is valid in $L$
(this property is known as strong validity).

\end{enumerate}

The distinction between strong and weak validity is due to \citeN{Bru68b}.
The reader will find more remarks in \secref{application}
where we show a term that is weakly (but not strongly) valid. 

\begin{table}
\seccaption{nta}
{Validity for terms.}
\tablabel{cnv}
\begin{tabular}{c}

\infer[\ruleref{cnv}{\TSort{}}]
{\CNV{\A}{L}{\TSRef{s}}}
{}
\sep

\infer[\ruleref{cnv}{\TLRef{}}]
{\CNV{\A}{K\LPair{x}{V}}{\TNRef{x}}}
{\CNV{\A}{K}{V}}
\sep

\infer[\ruleref{cnv}{L}]
{\CNV{\A}{K\LPair{y}{V}}{\TNRef{x}}}
{\CNV{\A}{K}{\TNRef{x}}}
\nl

\infer[\ruleref{cnv}{P}]
{\CNV{\A}{L}{\TPair{\Y}{x}{V}T}}
{\CNV{\A}{L}{V}&\CNV{\A}{K\LPair{x}{V}}{T}}
\sep

\infer[\ruleref{cnv}{\TCast{}}]
{\CNV{\A}{L}{\TCast{U}T}}
{\CNV{\A}{L}{U}&\CNV{\A}{L}{T}&\CPMS{1}{L}{T}{U_0}&\CPMS{0}{L}{U}{U_0}}
\nl

\infer[\ruleref{cnv}{\TAppl{}}]
{\CNV{\A}{L}{\TAppl{V}T}}
{\CNV{\A}{L}{V}&\CNV{\A}{L}{T}&n\in\A&\CPMS{n}{L}{T}{\TAbst{\Y}{x}{W_0}U_0}&\CPMS{1}{L}{V}{W_0}}
\nl

Rule $\ruleref{cnv}{L}$:
$y \neq x$.
\\

\end{tabular}
\end{table}

Typical choices for the applicability domain emerging from the
Automath tradition are $\A \defeq \ACAny$ or $\A \defeq \ACOne$,
but we are aiming to develop our system without fixing $\A$ in advance. 

In the end
we define our type judgment $\NTA{\A}{L}{T}{U}$
in \tabref{nta} as a valid type annotation.

\begin{table}
\seccaption{nta}
{Type checking for terms.}
\tablabel{nta}
\begin{tabular}{c}

\infer[\ruleref{nta}{\TCast{}}]
{\NTA{\A}{L}{T}{U}}
{\CNV{\A}{L}{\TCast{U}T}}
\\

\end{tabular}
\end{table}

Contrary to other frameworks like the $\LCube$,
where the sort $\SortB$ is valid but not typable,
we will see that in our system validity corresponds to typability.
Our validity is a primitive notion mainly because
preservation by r-reduction is easier to prove
for validity than for type assignment
when the parameter $\A$ is not fixed in advance,
as types are assigned up to r-conversion. 
Notice that \citeN{KN96a} define typability
as a primitive notion in the $\LCube$ as well.

The reader will find the axioms of our type judgment in \secref{types}, 
\secref{ntas} and \appref{members}.

Given a term $T$, we remark that an r-reduct $U$ of an inferred type of $T$
in the sense of $\CPMS{1}{L}{T}{U}$
is not in general a type of $T$ in the sense of $\NTA{\A}{L}{T}{U}$.
Strictly speaking, $U$ is just a \emph{pretype} of $T$.
As of \thref{types}{cnv_cpms_nta} of \secref{types},
the term $U$ is indeed a type of $T$ when $T$ itself is valid.

\section{QRST-Reduction and Related Notions}
\seclabel{transition}

Having defined our system in \secref{definition}, 
we devote the remaining sections to the main results of the system's meta-theory.
In this respect our first objective is to reach
our version of the so-called big-tree theorem,
that was first stated by \citeN{SPAc4} and that we will state in
\secref{normalization}. 
This result states the strong normalization of a relation,
introduced in \secref{fpbs} and termed hereafter qrst-reduction,
that generalizes
both the bound rt-reduction relation (\secref{cpxs}, \secref{lpxs})
and the subterm relation (\secref{fqus}).
The distinctive features of qrst-reduction are the generalized t-step of
Rule $\ruleref{cpx}{s}$ and the q-step that we introduce in \secref{req}.
Our development requires to define a couple of auxiliary notions
(\secref{lsubr} and \secref{freep}) that give us the opportunity to
prove the confluence of r-reduction, \ie its Church-Rosser property,
in \secref{lprs}.
The other main result is \thref{fpbs}{fpbs_inv_star} (\secref{fpbs})
about the decomposition of qrst-reduction sequences.

\subsection{Structural Order for Closures}
\seclabel{fqus}

Following \citeN{Ned73},
we observe that, contrary to untyped $\lambda$-calculus,
typed $\lambda$-calculus is concerned just with valid terms,
which are not absolutely open.
Thus, the notion of closure, \ie a pair $\Cl{L}{T}$ of an
environment $L$ binding the free variables of a term $T$,
emerges naturally. 

Induction on the structure of closures is granted by the
strict partial order emerging from the direct subclosure relation
$\FQU{L_1}{T_1}{L_2}{T_2}$ that we define in \tabref{fqu}
and that represents the s-step of our qrst-reduction system.
Its transitive closure $\FQUP{L_1}{T_1}{L_2}{T_2}$,
\ie the proper subclosure relation, is defined in \tabref{fqup}.
Also the reflexive and transitive closure $\FQUS{L_1}{T_1}{L_2}{T_2}$
defined by the rules of \tabref{fqus} is interesting 
as it appears in the decomposition \thref{fpbs}{fpbs_inv_star}.

\begin{table}
\seccaption{fqus}
{Direct subclosure (one s-step).}
\tablabel{fqu}
\begin{tabular}{c}

\infer[\ruleref{fqu}{\TLRef{}}]
{\FQU{K\LPair{x}{V}}{\TNRef{x}}{K}{V}}
{}
\sep

\infer[\ruleref{fqu}{L}]
{\FQU{K\LPair{y}{V}}{T}{K}{T}}
{}
\sep

\infer[\ruleref{fqu}{Fl}]
{\FQU{L}{\TFlat{V}T}{L}{V}}
{}
\nl

\infer[\ruleref{fqu}{Pl}]
{\FQU{L}{\TPair{\Y}{x}{V}T}{L}{V}}
{}
\sep

\infer[\ruleref{fqu}{Pr}]
{\FQU{L}{\TPair{\Y}{x}{V}T}{L\LPair{x}{V}}{T}}
{}
\sep

\infer[\ruleref{fqu}{Fr}]
{\FQU{L}{\TFlat{V}T}{L}{T}}
{}
\nl

Rule $\ruleref{fqu}{L}$:
$y$ not free in $T$.\\

\end{tabular}
\end{table}

By observing that s-steps decrease the sum
of the term constructors occurring in closures,
we easily argue that infinite sequences of s-steps are impossible.
Using other words,
we are saying that the s-reduction system is strongly normalizing
and that $\FQUP{L_1}{T_1}{L_2}{T_2}$ is well-founded.

We want to stress that
restricting $T$ to $x$ in Rule $\ruleref{fqu}{L}$,
as we do in the other $L$-rules
(see for instance Rule $\ruleref{cpm}{L}$ and Rule $\ruleref{cnv}{L}$)
invalidates the important \thref{lpxs}{fqu_cpx_trans}
presented in \secref{lpxs}.

\subsection{Extended RT-Reduction for Terms}
\seclabel{cpxs}

Extended rt-reduction is the union of the reduction relation
and the inferred-type relation.
It emerges by removing the bounds on the number of t-steps
in the definition of bound rt-reduction (\secref{cpms}).
Thus, we define in \tabref{cpx} 
one extended rt-step for terms with the relation $\CPX{L}{T_1}{T_2}$
that, incidentally, does not depend on the parameter $\Next{}$ since
we generalize Rule $\ruleref{cpm}{s}$ as follows:
\begin{equation}
\eqnlabel{cpx_s}
\infer[\ruleref{cpx}{s}]
{\CPX{L}{\TSRef{s_1}}{\TSRef{s_2}}}
{}
\end{equation}
Not surprisingly, we will also need $\CPXS{L}{T_1}{T_2}$, 
\ie the reflexive and transitive closure of the previous relation,
that we define in \tabref{cpxs} as one expects.

\subsection{Refinement for the Preservation of RT-Reduction}
\seclabel{lsubr}

The theory of our system contains
reflexive relations for environments termed here refinements.
They are invoked when proving that rt-reduction preserves some property
and, specifically, they come into play in the case of the $\beta$-rule,
given that the backward application of the corresponding P-rule
moves part of the $\beta$-redex and part of the $\beta$-reduct
in the environment.

The rules of \tabref{lsubr} define 
the relation $\LSubR{L_1}{L_2}$
stating that $L_1$ refines $L_2$ for the preservation of rt-reduction.
This refinement is transitive (see \citeN{lambdadeltaR2c})
and is implied by the other refinements.

\begin{theorem}[refinement for the preservation of rt-reduction]\
\thslabel{lsubr}
\begin{enumerate}

\item\thlabel{lsubr_cpm_trans}
\Caption{strengthening of bound rt-reduction through refinement}
If $\LSubR{K}{L}$
and $\CPM{n}{L}{T_1}{T_2}$
then $\CPM{n}{K}{T_1}{T_2}$.

\item\thlabel{lsubr_cpx_trans}
\Caption{strengthening of extended rt-reduction through refinement}
If $\LSubR{K}{L}$
and $\CPX{L}{T_1}{T_2}$
then $\CPX{K}{T_1}{T_2}$.

\end{enumerate}
\end{theorem}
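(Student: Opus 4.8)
The plan is to prove both claims by induction on the derivation of the rt-reduction step, keeping the refinement hypothesis $\LSubR{K}{L}$ fixed and proceeding by cases on the last rule applied. This works uniformly for the two parts because bound rt-reduction and extended rt-reduction share every environment-sensitive rule; the extended relation merely drops the index $n$ and replaces $\ruleref{cpm}{s}$ by the generalized $\ruleref{cpx}{s}$, and that rule ignores the environment entirely. So I would develop the argument for $\CPM{n}{L}{T_1}{T_2}$ in full and then transcribe it verbatim for $\CPX{L}{T_1}{T_2}$.

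First I would dispatch the rules that never consult the environment: $\ruleref{cpm}{\beta}$, $\ruleref{cpm}{\zeta}$, $\ruleref{cpm}{\theta}$, $\ruleref{cpm}{\epsilon}$, $\ruleref{cpm}{s}$ and $\ruleref{cpm}{e}$, together with the context rules $\ruleref{cpm}{\TAppl{}l}$, $\ruleref{cpm}{\TAppl{}r}$, $\ruleref{cpm}{Pl}$, $\ruleref{cpm}{\TCast{}l}$, $\ruleref{cpm}{\TCast{}r}$ and $\ruleref{cpm}{\TCast{}b}$. In each of these the conclusion fires over $K$ by the very same rule; where a premise is itself an rt-step in the \emph{unchanged} environment $L$ (possibly two premises, as in $\ruleref{cpm}{\TCast{}b}$), I invoke the induction hypothesis with the same refinement $\LSubR{K}{L}$ and rebuild the step. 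Nothing about $K$ beyond $\LSubR{K}{L}$ is needed here.

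The substance of the theorem lives in the four rules that inspect or extend the environment, namely $\ruleref{cpm}{Pr}$, $\ruleref{cpm}{L}$, $\ruleref{cpm}{\delta}$ and $\ruleref{cpm}{l}$, whose treatment forces the precise shape of $\LSubR$. For $\ruleref{cpm}{Pr}$ the premise is a step in $L\LPair{x}{V}$, so applying the induction hypothesis requires $\LSubR{K\LPair{x}{V}}{L\LPair{x}{V}}$; this is supplied by closure of $\LSubR$ under extension by a common entry, after which $\ruleref{cpm}{Pr}$ is reapplied over $K$. For the base rules $\ruleref{cpm}{\delta}$ and $\ruleref{cpm}{l}$ the step reads the innermost entry of $L$, a definition $\LAbbr{x}{V}$ respectively a declaration $\LAbst{x}{W}$, so I would invert $\LSubR{K}{L}$ against this cons shape using the defining rules in \tabref{lsubr} to conclude that $K$ carries the corresponding entry: the same defined value $V$ for $\ruleref{cpm}{\delta}$ and the same declared type $W$ for $\ruleref{cpm}{l}$, so that the identical base rule fires over $K$ with the same $n$ and the same reduct. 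For $\ruleref{cpm}{L}$, inverting $\LSubR{K}{L}$ on the cons $L = K'\LPair{y}{V}$ yields a corresponding innermost entry of $K$ together with $\LSubR{K''}{K'}$ on the tails; I then apply the induction hypothesis over $K'$ and close with $\ruleref{cpm}{L}$, its side conditions being inherited.

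The main obstacle, and the only place where care is needed, is exactly this inversion of $\LSubR$ in the environment-consulting cases. Everything else is bookkeeping; the delicate point is the interaction between the defining rules of $\LSubR$ and the type-inference rule $\ruleref{cpm}{l}$. Since a single $\ruleref{cpm}{l}$-step from a variable produces the declared type with $n=1$, whereas a definition can only be unfolded by $\ruleref{cpm}{\delta}$ with $n=0$, the refinement must preserve declaration types faithfully rather than merely relate declarations to definitions, and it must likewise preserve definition values for $\ruleref{cpm}{\delta}$; checking that \tabref{lsubr} indeed yields inversion lemmas of this strength, and that $\LSubR$ is closed under a common extension, is the \emph{crux} on which both claims rest. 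The second claim then follows by the identical induction, the extra generalized sort rule $\ruleref{cpx}{s}$ being environment-free and hence immediate.
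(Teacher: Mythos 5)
There is a genuine gap, and it sits exactly at the point you single out as the crux. Your inversion claim for the case of Rule $\ruleref{cpm}{l}$ is false: when $L = K_2\LAbst{x}{W}$, inverting $\LSubR{K}{L}$ does \emph{not} let you conclude that $K$ carries the same declaration, because Rule $\ruleref{lsubr}{\beta}$ of \tabref{lsubr} deliberately relates a \emph{definition} $K_1\LAbbr{x}{\TCast{W}V}$ on the left to the declaration $K_2\LAbst{x}{W}$ on the right --- this is the entire purpose of the refinement, which must absorb the entry created by Rule $\ruleref{cpm}{\beta}$. So in the subcase $\ruleref{cpm}{l}$ against $\ruleref{lsubr}{\beta}$ you must produce $\CPM{1}{K_1\LAbbr{x}{\TCast{W}V}}{\TNRef{x}}{W}$, and ``the identical base rule fires over $K$'' is unavailable: in $K$ the variable $x$ is defined, not declared, and $\delta$-expansion alone yields $\TCast{W}V$ with $n=0$, not $W$ with $n=1$. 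Your own sanity check (``the refinement must preserve declaration types faithfully rather than merely relate declarations to definitions'') is precisely the property that \tabref{lsubr} does not have, so the proof as proposed fails here. Your treatment of the $\delta$-case is fine, since Rule $\ruleref{lsubr}{\beta}$ puts a declaration, never a definition, on the right, and the remaining cases are indeed bookkeeping.

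The repair is the one the paper flags: Rule $\ruleref{cpm}{e}$ is essential in exactly this case. In $K_1\LAbbr{x}{\TCast{W}V}$ the variable $\delta$-expands to $\TCast{W}V$, and the $e$-step then extracts the expected type $W$, accounting for the single t-step $n=1$; the annotation $\TCast{W}V$ in the $\beta$-reduct was engineered for this purpose. One further subtlety: with the strictly sequential one-step rules as displayed in \tabref{cpm}, this composite is a two-step sequence, \ie it establishes $\CPMS{1}{K}{\TNRef{x}}{W}$ rather than a single $\CPM{1}$ step. The single-step statement is faithful to the certified specification, which (as \secref{specification} notes) uses parallel reduction: there the $\delta$- and $l$-rules allow the unfolded body to be reduced within the same compound step, so the $\delta$-unfolding followed by the $e$-extraction counts as one step. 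Either you prove the lemma for compound steps, or you weaken the conclusion to the multi-step relation; your verbatim transfer to \thref{lsubr}{lsubr_cpx_trans} then goes through with the same correction, since the combined Rule $\ruleref{cpx}{\#}$ exhibits the identical mismatch against Rule $\ruleref{lsubr}{\beta}$.
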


\begin{proof}
\thref{}{lsubr_cpm_trans}
is proved by induction on the second premise
and by cases on the first premise.
Rule $\ruleref{cpm}{e}$ is essential
in the case of Rule $\ruleref{cpm}{l}$ against Rule $\ruleref{lsubr}{\beta}$.
\thref{}{lsubr_cpx_trans}
is proved as \thref{}{lsubr_cpm_trans}.
\end{proof}

\begin{table}
\seccaption{lsubr}
{Refinement for the preservation of rt-reduction.}
\tablabel{lsubr}
\begin{tabular}{c}

\infer[\ruleref{lsubr}{\LAtom}]
{\LSubR{\LAtom}{\LAtom}}
{}
\sep

\infer[\ruleref{lsubr}{B}]
{\LSubR{K_1\LPair{y}{V}}{K_2\LPair{y}{V}}}
{\LSubR{K_1}{K_2}}
\sep

\infer[\ruleref{lsubr}{\beta}]
{\LSubR{K_1\LAbbr{y}{\TCast{W}V}}{K_2\LAbst{y}{W}}}
{\LSubR{K_1}{K_2}}
\\

\end{tabular}
\end{table}

\subsection{R-Reduction for Environments}
\seclabel{lprs}

R-reduction for environments
is a standard tool in the development of typed $\lambda$-calculus.
For instance \citeN{Brn92} uses it to prove the preservation theorem,
\ie subject reduction, for a PTS
(even if \citeN{BW97} and other authors can avoid it).
In our system we use it more often
since most of our relations are aware of environments
and these contain explicit substitutions.

The rules of \tabref{lpr} define the relation $\LPR{L_1}{L_2}$
denoting one step of r-reduction for environments.
And \tabref{lprs} defines its reflexive and transitive closure $\LPRS{L_1}{L_2}$
in the usual way.

\begin{table}
\seccaption{lprs}
{R-reduction for environments (one step on all entries).}
\tablabel{lpr}
\begin{tabular}{c}

\infer[\ruleref{lpr}{B}]
{\LPR{K_1\LPair{y}{V}}{K_2\LPair{y}{V}}}
{\LPR{K_1}{K_2}}
\sep

\infer[\ruleref{lpr}{P}]
{\LPR{K\LPair{y}{V_1}}{K\LPair{y}{V_2}}}
{\CPM{0}{K}{V_1}{V_2}}
\\

\end{tabular}
\end{table}

Notice that it makes no sense to define these notions for a bound $n \neq 0$
since the premise of Rules $\ruleref{cpm}{\TAppl{}l}$ and $\ruleref{cpm}{Pl}$
is $\CPM{0}{L}{V_1}{V_2}$.
So the bound appears just for uniformity.

R-reduction for environments comes into play considering confluence
of r-reduction for terms.

\begin{theorem}[confluence of r-reduction]\
\thslabel{lprs}
\begin{enumerate}

\item\thlabel{lpr_cpm_trans}
\Caption{transitivity of r-reduction on all entries and bound rt-reduction for terms}
If $\LPR{K}{L}$
and $\CPM{n}{L}{T_1}{T_2}$
then $\CPMS{n}{K}{T_1}{T_2}$.

\item\thlabel{cpr_conf_lpr}
\Caption{diamond confluence of r-reduction for terms with itself}
If $\CPM{0}{L_0}{T_0}{T_1}$
and $\CPM{0}{L_0}{T_0}{T_2}$
and $\LPR{L_0}{L_1}$
and $\LPR{L_0}{L_2}$
with $T_1 \neq T_2$\\
then there exists $T$
such that $\CPM{0}{L_1}{T_1}{T}$
and $\CPM{0}{L_2}{T_2}{T}$.

\item\thlabel{cprs_conf}
\Caption{full confluence of r-reduction for terms with itself}
If $\CPMS{0}{L}{T_0}{T_1}$
and $\CPMS{0}{L}{T_0}{T_2}$
then there exists $T$
such that $\CPMS{0}{L}{T_1}{T}$
and $\CPMS{0}{L}{T_2}{T}$.

\item\thlabel{lpr_conf}
\Caption{diamond confluence of r-reduction on all entries with itself}
If $\LPR{L_0}{L_1}$
and $\LPR{L_0}{L_2}$
with $L_1 \neq L_2$
then there exists $L$
such that $\LPR{L_1}{L}$
and $\LPR{L_2}{L}$.

\item\thlabel{lprs_conf}
\Caption{full confluence of r-reduction on all entries with itself}
If $\LPRS{L_0}{L_1}$
and $\LPRS{L_0}{L_2}$
then there exists $L$
such that $\LPRS{L_1}{L}$
and $\LPRS{L_2}{L}$.

\end{enumerate}
\end{theorem}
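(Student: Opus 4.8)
The plan is to prove the five parts of \thref{}{lprs} in order, since each confluence result builds on the previous ones in the standard Newman/Hindley-style progression: a transitivity lemma, a local (diamond) confluence, a strip lemma giving full confluence, and then the same pattern lifted to environments.

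**First I would** prove \thref{}{lpr_cpm_trans}. The statement says that one step of rt-reduction in $L$ can be replayed as a (possibly longer) rt-reduction sequence in $K$, provided $\LPR{K}{L}$. I would proceed by induction on the derivation of $\CPM{n}{L}{T_1}{T_2}$ and by cases on its last rule. Most r-steps and t-steps are insensitive to the environment and go through unchanged, the output being a single step (\ruleref{cpms}{I}). The interesting cases are the environment-dependent ones: $\ruleref{cpm}{\delta}$ and $\ruleref{cpm}{l}$, where the redex is a variable $\TNRef{x}$ looked up in the environment. Here $\LPR{K}{L}$ replaces the entry body $V$ by some r-reduct, so the single step in $L$ must be followed by a (\ruleref{cpms}{R})-to-(\ruleref{cpms}{T}) sequence of r-steps in $K$ that catches up with the reduced entry — this is exactly why the conclusion is $\CPMS{}{}$ rather than $\CPM{}{}$, and why r-reduction for environments was introduced at all. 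For the context rule $\ruleref{cpm}{Pr}$ I would extend $\LPR{K}{L}$ to $\LPR{K\LPair{x}{V}}{L\LPair{x}{V}}$ via $\ruleref{lpr}{B}$ before applying the inductive hypothesis.

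**Next**, \thref{}{cpr_conf_lpr} is the local diamond, combined with a one-step reduction of the environment on each side. I would argue by cases on the two r-steps $\CPM{0}{L_0}{T_0}{T_1}$ and $\CPM{0}{L_0}{T_0}{T_2}$, analysing the critical pairs. Since the excerpt emphasises that the rule set was engineered to minimise critical pairs and to avoid residual-redex replication, most overlaps are disjoint and the common reduct $T$ is found by completing the square; the genuinely overlapping pairs (for instance a $\beta$- or $\theta$-redex whose argument or body is itself reduced) are closed using \thref{}{lpr_cpm_trans} to transport the residual step across the changed environment, and using $\ruleref{lpr}{P}$ to track the reduction inside an entry. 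From the diamond property \thref{}{cpr_conf_lpr}, \thref{}{cprs_conf} then follows by the usual strip-lemma tiling: I would first strip a single step against a full sequence, then induct on the lengths of the two sequences $\CPMS{0}{L}{T_0}{T_1}$ and $\CPMS{0}{L}{T_0}{T_2}$, pasting diamonds together (taking $\LPR{}{}$ to be reflexive on $L$ so the environment stays fixed throughout term-only confluence).

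**Finally**, the two environment statements mirror the term ones: \thref{}{lpr_conf} is the diamond for $\LPR{}{}$, proved by induction on the environment structure, invoking the term-level diamond \thref{}{cpr_conf_lpr} entry-by-entry where two $\ruleref{lpr}{P}$ steps reduce the same entry body, and \thref{}{lprs_conf} lifts it to the reflexive–transitive closure by the same tiling argument used for \thref{}{cprs_conf}. \textbf{The main obstacle} I anticipate is \thref{}{cpr_conf_lpr}: the $\theta$-swap rule $\ruleref{cpm}{\theta}$ moves an explicit substitution out past an application, so it interacts non-trivially with $\beta$, $\delta$ and $\zeta$ redexes and with the surrounding environment, and getting the residuals to meet requires precisely the coupling between term reduction and environment reduction that \thref{}{lpr_cpm_trans} supplies. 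Carefully enumerating these overlaps — and checking that each closes in a single step on each side, as the diamond formulation demands rather than merely in the transitive closure — is where the real bookkeeping lives.
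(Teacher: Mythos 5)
Your overall architecture coincides with the paper's: \thref{lprs}{lpr_cpm_trans} by induction on the reduction step and cases on $\LPR{K}{L}$ (with the $\delta$/$l$ lookup cases forcing the multi-step conclusion, exactly as you say), \thref{lprs}{lpr_conf} obtained entry-wise from the term diamond, and \thref{lprs}{cprs_conf}, \thref{lprs}{lprs_conf} by strip lemmas (the paper notes that \thref{lprs}{lprs_conf} also re-uses \thref{lprs}{lpr_cpm_trans}). However, there is a genuine gap in your treatment of the critical overlap in \thref{lprs}{cpr_conf_lpr}, and it is precisely the case the paper singles out. When one side contracts $\TAppl{V}\TAbst{\Y}{x}{W}T$ by Rule $\ruleref{cpm}{\beta}$ to $\TAbbr{x}{\TCast{W}V}T$ while the other side reduces inside the abstraction via Rule $\ruleref{cpm}{Pr}$, the residual step on the body is available in the environment $L\LAbst{x}{W}$ but must be replayed, after the square is completed, in the environment $L\LAbbr{x}{\TCast{W}V}$. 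This environment change turns a \emph{declaration} into a \emph{definition}, and that is not an instance of $\LPR{}{}$: Rules $\ruleref{lpr}{B}$ and $\ruleref{lpr}{P}$ preserve the binder kind of every entry and only reduce entry bodies. So \thref{lprs}{lpr_cpm_trans}, which you invoke to "transport the residual step across the changed environment," simply does not apply here. The paper closes this case with the dedicated refinement $\LSubR{L\LAbbr{x}{\TCast{W}V}}{L\LAbst{x}{W}}$ of \secref{lsubr} (Rule $\ruleref{lsubr}{\beta}$) together with the strengthening \thref{lsubr}{lsubr_cpm_trans} --- which moreover yields a \emph{single} step $\CPM{n}{K}{T_1}{T_2}$, so the diamond formulation survives; this is the whole reason \secref{lsubr} exists, as its introduction explains ("the backward application of the corresponding P-rule moves part of the $\beta$-redex and part of the $\beta$-reduct in the environment"). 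Without this lemma your diamond does not close.

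A secondary, smaller divergence: you propose to argue "by cases on the two r-steps," but a bare case analysis cannot discharge the nested context rules; the paper performs a well-founded induction on the closure $\Cl{L_0}{T_0}$ (via the proper subclosure relation of \secref{fqus}), which is the right measure here because reduction under a binder, Rule $\ruleref{cpm}{Pr}$, extends the environment, so structural induction on the term alone is not adequate in this environment-aware setting. Your "completing the square" implicitly presupposes such an induction; make it explicit, and route the $\beta$ cases through $\LSubR{}{}$ rather than $\LPR{}{}$, and the proof matches the paper's.
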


\begin{proof}
\thref{}{lpr_cpm_trans}
is proved by induction on the second premise
and by cases on the first premise.
\thref{}{cpr_conf_lpr}
is proved by induction on the closure $\Cl{L_0}{T_0}$
and by cases on the premises.
Notice that we need \thref{lsubr}{lsubr_cpm_trans}
in the cases of Rule $\ruleref{cpm}{\beta}$.
\thref{}{lpr_conf} follows from \thref{}{cpr_conf_lpr}.
\thref{}{cprs_conf} (the confluence theorem) and \thref{}{lprs_conf}
follow from \thref{}{cpr_conf_lpr} and \thref{}{lpr_conf} respectively
by invoking strip lemmas after \citeN{Brn92}.
Notice that \thref{}{lprs_conf} needs \thref{}{lpr_cpm_trans} as well.
\end{proof}

Notice as well that $\LPRS{L_1}{L_2}$ has the next alternative axiomatization
because of \thref{lprs}{lpr_cpm_trans}.
\begin{equation}
\eqnlabel{lprs_alt}
\vcenter{
\infer[\ruleref{lprs}{\LAtom}]
{\LPRS{\LAtom}{\LAtom}}
{}
}\sep
\vcenter{
\infer[\ruleref{lprs}{P}]
{\LPRS{K_1\LPair{y}{V_1}}{K_2\LPair{y}{V_2}}}
{\LPRS{K_1}{K_2}&\CPMS{0}{K_1}{V_1}{V_2}}
}
\end{equation}

\subsection{Extended RT-Reduction for Environments}
\seclabel{lpxs}

The extended rt-steps are meant to act on closures as the s-steps of
\secref{fqus}, so we need to define them for environments as well as for terms.
We do it with the relation $\LPX{L_1}{L_2}$ of \tabref{lpx} and with
its reflexive and transitive closure $\LPXS{L_1}{L_2}$ of \tabref{lpxs}.
The two emerge by removing the bound $0$ from the corresponding definitions of
bound rt-reduction for environments in \secref{lprs}.

\begin{theorem}[decomposition, part one]\
\thslabel{lpxs}
\begin{enumerate}

\item\thlabel{fqu_cpx_trans}
\Caption{confluence of direct subclosure with extended rt-reduction for terms}
If $\FQU{L}{U_1}{K}{T_1}$
and $\CPX{K}{T_1}{T_2}$
then there exists $U_2$\\
such that $\CPX{L}{U_1}{U_2}$
and $\FQU{L}{U_2}{K}{T_2}$.

\item\thlabel{lpx_fqu_trans}
\Caption{confluence of direct subclosure with extended rt-reduction on all entries}
If $\LPX{L_1}{L_2}$
and $\FQU{L_2}{U}{K_2}{T}$
then there exist $K_1$ and $U_0$\\
such that $\CPX{L_1}{U}{U_0}$
and $\FQU{L_1}{U_0}{K_1}{T}$
and $\LPX{K_1}{K_2}$.

\item\thlabel{lpx_cpx_trans}
\Caption{transitivity of extended rt-reduction on all entries and extended rt-reduction for terms}
If $\LPX{K}{L}$
and $\CPX{L}{T_1}{T_2}$
then $\CPXS{K}{T_1}{T_2}$.

\end{enumerate}
\end{theorem}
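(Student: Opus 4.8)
These three statements are commutation and transitivity lemmas tying the direct subclosure relation of \tabref{fqu} to extended rt-reduction, both for terms (\tabref{cpx}) and, through $\LPX{}{}$, for environments. My plan is to establish each by a structural analysis in the style of the bounded analogue \thref{lprs}{lpr_cpm_trans}, matching every constructor of one relation with the corresponding constructor of the other. The recurring difficulty, and the main obstacle throughout, is the variable case, where reducing the \emph{content} of an environment entry and reducing the \emph{term} that refers to it must be reconciled; this is exactly where the extended form of the lookup rules of \tabref{cpx}, its reflexivity, and a thinning property of $\CPX{}{}{}$ come into play.

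For \thref{}{fqu_cpx_trans} I would argue by cases on the premise $\FQU{L}{U_1}{K}{T_1}$. For the compound-term rules $\ruleref{fqu}{Fl}$, $\ruleref{fqu}{Fr}$, $\ruleref{fqu}{Pl}$, $\ruleref{fqu}{Pr}$ the child $T_1$ is a direct subterm of $U_1$, so I rebuild the parent by placing the reduct $T_2$ in the same position, producing $U_2$ by the matching context rule of \tabref{cpx} (for $\ruleref{fqu}{Pr}$ the child lives under $L\LPair{x}{V}$, handled by the cpx rule that reduces under a binder) and recovering $\FQU{L}{U_2}{K}{T_2}$ by the same fqu-rule. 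The case $\ruleref{fqu}{L}$ is the one whose generality is flagged as essential in \secref{fqus}: here $U_1=T_1$ under an extra entry $\LPair{y}{V}$ with $y$ not free in $T_1$, so I set $U_2\defeq T_2$ and derive $\CPX{K\LPair{y}{V}}{T_1}{T_2}$ from $\CPX{K}{T_1}{T_2}$ by thinning, closing with $\ruleref{fqu}{L}$ since $y$ is still not free in $T_2$. Finally, in $\ruleref{fqu}{\TLRef{}}$ the child is the content of the entry binding $\TNRef{x}$; reducing that content lifts to a single cpx step on $\TNRef{x}$ through the cpx rule for a referenced variable, after which $\ruleref{fqu}{L}$ yields the residual subclosure, using that the bound $x$ is fresh and hence not free in the reduct.

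For \thref{}{lpx_fqu_trans} I would again proceed by cases on the subclosure premise $\FQU{L_2}{U}{K_2}{T}$, inverting $\LPX{L_1}{L_2}$ to expose the structure of $L_1$ whose entries reduce to those of $L_2$. In every case but $\ruleref{fqu}{\TLRef{}}$ the term does not move: I take $U_0\defeq U$ by the reflexivity of extended rt-reduction, choose $K_1$ as the environment prescribed by the same fqu-rule applied to $L_1$ (so $K_1$ is $L_1$, its tail, or $L_1$ under a binder), and obtain $\LPX{K_1}{K_2}$ directly from the inverted step. The delicate case is $\ruleref{fqu}{\TLRef{}}$, where $U=\TNRef{x}$ and $T$ is the reduced content $V_2$ of the entry, while the corresponding entry of $L_1$ still carries a $V_1$ that the inverted step reduces to $V_2$; I transfer this reduction onto the term by reducing $\TNRef{x}$ to $V_2$ through the cpx rule for a referenced variable, take $K_1$ to be the tail of $L_1$, and close with $\ruleref{fqu}{L}$ (again by freshness of $x$) together with the inverted lpx-step.

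Part \thref{}{lpx_cpx_trans} I would prove exactly as \thref{lprs}{lpr_cpm_trans}, by induction on $\CPX{L}{T_1}{T_2}$ and cases on $\LPX{K}{L}$. Every rule but the lookups transports through the induction hypothesis from $L$ to $K$ verbatim; for the lookup of a variable whose entry was reduced by the step from $K$ to $L$, the content seen in $K$ precedes the one seen in $L$, so I first unfold $\TNRef{x}$ to the $K$-content and then apply the recorded reduction, obtaining several cpx steps, which is precisely why the conclusion carries the reflexive and transitive closure $\CPXS{K}{T_1}{T_2}$ rather than a single step. The only genuinely nonroutine points, and thus the crux of the whole theorem, remain these variable cases together with the reflexivity and thinning of extended rt-reduction on which they rely.
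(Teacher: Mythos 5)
Your architecture coincides with the paper's: you prove \thref{lpxs}{fqu_cpx_trans} by case analysis on the subclosure step, and \thref{lpxs}{lpx_fqu_trans}, \thref{lpxs}{lpx_cpx_trans} by induction on the second premise with cases on the first, exactly as the paper does, and you correctly isolate the variable cases, the thinning of $\CPX{}{}{}$ for Rule $\ruleref{fqu}{L}$, and the freshness side conditions as the delicate points. There is, however, one concrete hole, and it is precisely the dependency the paper flags: \thref{lsubr}{lsubr_cpx_trans} in the case of Rule $\ruleref{cpx}{\beta}$. Note first that your variable cases are derivable only if $\CPX{}{}{}$ is read as the \emph{parallel} relation of the certified specification (a licence the paper grants itself in \secref{specification}): the single step $\CPX{K\LPair{x}{V}}{\TNRef{x}}{V_2}$ that you build from $\CPX{K}{V}{V_2}$, and the reflexivity you invoke in part two, do not follow from the sequential rules displayed in \tabref{cpx}, whose rule $\ruleref{cpx}{\#}$ unfolds the content $V$ \emph{unreduced}. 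Indeed, with the literal sequential rules the case $\ruleref{fqu}{\TLRef{}}$ of part one is unprovable when $V \neq V_2$: the only one-step reduct of $\TNRef{x}$ in $K\LPair{x}{V}$ is $V$ itself (Rule $\ruleref{cpx}{L}$ requires $y \neq x$), and no rule of \tabref{fqu} yields $\FQU{K\LPair{x}{V}}{V}{K}{V_2}$.

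Once you adopt the parallel reading, which your own lookup and reflexivity steps presuppose, your claim in part three that every non-lookup rule ``transports through the induction hypothesis verbatim'' fails for $\beta$: the parallel $\beta$-rule reduces the body $T_1$ under the entry $\LAbst{x}{W_1}$, while in the reduct $\TAbbr{x}{\TCast{W_2}V_2}T_2$ the body sits under the $\beta$-abbreviation. Gluing the inductive hypotheses into a sequence $\CPXS{K}{\TAppl{V_1}\TAbst{\Y}{x}{W_1}T_1}{\TAbbr{x}{\TCast{W_2}V_2}T_2}$ requires transporting the body's reduction from the environment $K\LAbst{x}{W_1}$ to $K\LAbbr{x}{\TCast{W_1}V_1}$, and no context rule of \tabref{cpx} can do this; it is exactly what Rule $\ruleref{lsubr}{\beta}$ together with \thref{lsubr}{lsubr_cpx_trans} provides, and the same transport is what the paper's proof of part one invokes. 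So: right skeleton, right delicate cases, but the $\beta$ case is genuinely missing, and as written your proposal oscillates between the sequential displayed rules and the parallel certified ones --- under neither single reading do all of your steps go through.
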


\begin{proof}
\thref{}{fqu_cpx_trans}
is proved by induction on the first premise
and by cases on the second premise.
with the help of \thref{lsubr}{lsubr_cpx_trans}
in the case of Rule $\ruleref{cpx}{\beta}$.
\thref{}{lpx_fqu_trans} and \thref{}{lpx_cpx_trans}
are proved by induction on the second premise
and by cases on the first premise.
\end{proof}

Notice that $\LPXS{L_1}{L_2}$ has the next alternative axiomatization
because of \thref{lpxs}{lpx_cpx_trans}.
\begin{equation}
\eqnlabel{lpxs_alt}
\vcenter{
\infer[\ruleref{lpxs}{\LAtom}]
{\LPXS{\LAtom}{\LAtom}}
{}
}\sep
\vcenter{
\infer[\ruleref{lpxs}{P}]
{\LPXS{K_1\LPair{y}{V_1}}{K_2\LPair{y}{V_2}}}
{\LPXS{K_1}{K_2}&\CPXS{K_1}{V_1}{V_2}}
}
\end{equation}

\subsection{Inherited Free Variables}
\seclabel{freep}

For a term $T$ which is part of a closure $\Cl{L}{T}$,
we define the finite subset, call it $\FreeP{L}{T}$,
of the inherited free variables of $T$ in $L$ as the union of
the free variables of $T$ and the free variables of the terms in the
entries of $L$ recursively referred by $T$. 
The formal definition of $\FreeP{L}{T}$ is in \tabref{freep}.
In the following we will also use $\f$ and $\g$ for generic finite subsets of variables.

\begin{table}
\seccaption{freep}
{Inherited free variables.}
\tablabel{freep}
\begin{tabular}{c}

$\FreeP{L}{\TSRef{s}} \defeq \SBOT$
\ $\MATOM{\ruleref{freep}{\TSort{}}}$
\sep

$\FreeP{K\LPair{x}{V}}{\TNRef{x}} \defeq
 \FreeP{K}{V} \SOR \SUBSET{x}{}
$
\ $\MATOM{\ruleref{freep}{\TLRef{}}}$
\nl

$\FreeP{\LAtom}{\TNRef{x}} \defeq \SUBSET{x}{}$
\ $\MATOM{\ruleref{freep}{\LAtom}}$
\sep

$\FreeP{K\LPair{y}{V}}{\TNRef{x}} \defeq
 \FreeP{K}{\TNRef{x}}
$
\ $\MATOM{\ruleref{freep}{L}}$
\nl

$\FreeP{L}{\TPair{\Y}{x}{V}T} \defeq
 \FreeP{L}{V} \SOR \FreeP{L\LPair{V}{x}}{T}
 \SDIFF \SUBSET{x}{}
$
\ $\MATOM{\ruleref{freep}{P}}$
\nl

$\FreeP{L}{\TFlat{V}T} \defeq
 \FreeP{L}{V} \SOR \FreeP{L}{T}
$
\ $\MATOM{\ruleref{freep}{F}}$
\nl

Rule $\ruleref{freep}{L}$:
$y \neq x$.
\\

\end{tabular}
\end{table}

With some machinery presented in \appref{exclusion},
we can prove the next equivalent of Rule $\ruleref{freep}{P}$:
\begin{equation}
\eqnlabel{freep}
\FreeP{L}{\TPair{\Y}{x}{V}T} =
\FreeP{L}{V} \SOR \FreeP{L}{T}
\SDIFF \SUBSET{x}{}
\sep\MATOM{\text{$x$ not bound in $L$}}
\end{equation}

\subsection{Syntactic Equivalence on Referred Entries}
\seclabel{req}

As we see in \tabref{cnv} (\secref{nta}),
the validity predicate $\CNV{\A}{L}{T}$
does not depend on the entries of the environment $L$
that are not recursively referred by the term $T$.
Thus, a relation $\REQ{T}{L_1}{L_2}$ equating two environments
just on the basis of their entries recursively referred by a term
emerges naturally.
In order to define it,
we first define $\REQ{\f}{L_1}{L_2}$ in \tabref{req}
for an arbitrary finite subset $\f$ of variables.
Then we use inherited free variables to define
$\REQ{T}{L_1}{L_2}$ as $\REQ{\FreeP{L_1}{T}}{L_1}{L_2}$.

\begin{table}
\seccaption{req}
{Syntactic equivalence for environments (on selected entries).}
\tablabel{req}
\begin{tabular}{c}

\infer[\ruleref{req}{\LAtom}]
{\REQ{\f}{\LAtom}{\LAtom}}
{}
\sep

\infer[\ruleref{req}{B}]
{\REQ{\f}{K_1\LPair{y}{V_1}}{K_2\LPair{y}{V_2}}}
{\REQ{\f}{K_1}{K_2}}
\sep

\infer[\ruleref{req}{P}]
{\REQ{\f\SOR\SUBSET{y}{}}{K_1\LPair{y}{V_1}}{K_2\LPair{y}{V_2}}}
{\REQ{\f}{K_1}{K_2}&\TEQ{V_1}{V_2}}
\nl

Rules $\ruleref{req}{B}$ and $\ruleref{req}{P}$: $y \notin \f$.
\\

\end{tabular}
\end{table}

Indeed, working under the assumption that
every entry of $L$ is valid when $\CNV{\A}{L}{T}$
simplifies the whole meta-theory significantly,
but we aim at showing that this assumption is not necessary.

We term a step from $\Cl{L_1}{T}$ to $\Cl{L_2}{T}$ when $\REQ{T}{L_1}{L_2}$ 
a q-step of our qrst-reduction system.

\begin{theorem}[decomposition, part two]\
\thslabel{req}
\begin{enumerate}

\item\thlabel{req_fqu_trans}
\Caption{confluence of direct subclosure with syntactic equivalence on referred entries}
If $\FQU{L_2}{U}{K_2}{T}$
and $\REQ{U}{L_1}{L_2}$
then there exists $K_1$\\
such that $\FQU{L_1}{U}{K_1}{T}$
and $\REQ{T}{K_1}{K_2}$.

\item\thlabel{cpx_req_conf_sn}
\Caption{extended rt-reduction preserves syntactic equivalence on referred entries}
If $\CPX{L_1}{T_1}{T_2}$
and $\REQ{T_1}{L_1}{L_2}$
then $\REQ{T_2}{L_1}{L_2}$.

\item\thlabel{cpx_req_conf}
\Caption{syntactic equivalence on referred entries preserves extended rt-reduction}
If $\CPX{L_1}{T_1}{T_2}$
and $\REQ{T_1}{L_1}{L_2}$
then $\CPX{L_2}{T_1}{T_2}$.

\item\thlabel{lpx_req_conf}
\Caption{confluence of extended rt-reduction on all entries with syntactic equivalence on referred entries}
If $\LPX{L_0}{L_1}$
and $\REQ{T}{L_0}{L_2}$ 
then there exists $L$
such that $\REQ{T}{L_1}{L}$
and $\LPX{L_2}{L}$.

\end{enumerate}
\end{theorem}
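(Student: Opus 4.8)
I would prove the four parts in the stated order, relying throughout on two elementary facts about inherited free variables and one about $\REQ$. First, $\FreeP{L}{T}$ is monotone along the inherited-subterm and direct-subclosure structure, and extended rt-reduction never enlarges it: $\CPX{L_1}{T_1}{T_2}$ yields $\FreeP{L_1}{T_2}\subseteq\FreeP{L_1}{T_1}$ (checked rule by rule, noting that the variable bound by the entry consumed in an $l$- or $\delta$-step is never inherited-free in the reduct). Second, $\REQ{\f}{L_1}{L_2}$ is antimonotone in $\f$, so agreement on a larger set of variables entails agreement on any subset; this is immediate by induction on the derivation. Third, a small extension step (essentially \ruleref{req}{P} with a fresh $x$, followed by antimonotonicity) lifts $\REQ{\f}{L_1}{L_2}$ to $\REQ{\f\SOR\SUBSET{x}{}}{L_1\LPair{x}{V}}{L_2\LPair{x}{V}}$, and this is the quiet workhorse for every case that enters a binder.

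For \thref{}{req_fqu_trans} the plan is to proceed by cases on the s-step $\FQU{L_2}{U}{K_2}{T}$. The rules leaving the environment fixed (\ruleref{fqu}{Fl}, \ruleref{fqu}{Pl}, \ruleref{fqu}{Fr}) are settled by taking $K_1\defeq L_1$, since $T$ is then an inherited subterm of $U$ and the goal follows by antimonotonicity; rule \ruleref{fqu}{L} is analogous after peeling the last, unreferred entry with \ruleref{req}{B}. In the variable case \ruleref{fqu}{\TLRef{}} the hypothesis $\REQ{\TNRef{x}}{L_1}{L_2}$ forces $L_1$ to carry an entry for $x$ with a matching body, and inverting \ruleref{req}{P} delivers exactly the residual equivalence on the bodies. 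The delicate case is \ruleref{fqu}{Pr}, where the environment grows by $\LPair{x}{V}$: here I would set $K_1\defeq L_1\LPair{x}{V}$ and close the goal with the extension step, rewriting $\FreeP{L_1}{\TPair{\Y}{x}{V}T}$ through \ruleref{freep}{P} so that it matches $\FreeP{L_1\LPair{x}{V}}{T}$.

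\thref{}{cpx_req_conf_sn} is then immediate: from $\FreeP{L_1}{T_2}\subseteq\FreeP{L_1}{T_1}$ and antimonotonicity, $\REQ{T_1}{L_1}{L_2}$ gives $\REQ{T_2}{L_1}{L_2}$. For \thref{}{cpx_req_conf} I would induct on $\CPX{L_1}{T_1}{T_2}$ and split on the last rule. The context rules follow from the inductive hypothesis after restricting the equivalence to the referred variables of the relevant subterm and, for the rules entering a binder such as \ruleref{cpx}{Pr}, after the extension step carries $\REQ$ under $\LPair{x}{V}$. The rules that consult the environment, namely the $l$- and $\delta$-steps at a variable occurrence and the parent-lookup rule \ruleref{cpx}{L}, are precisely where $\REQ{\TNRef{x}}{L_1}{L_2}$ guarantees that $L_2$ exposes the same declaring or defining entry, so the identical step is available there.

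The last part, \thref{}{lpx_req_conf}, is the crux, and I would prove it by induction on $\LPX{L_0}{L_1}$, inverting $\REQ{T}{L_0}{L_2}$ in step, and splitting on whether the single entry touched by the $\LPX$ step is referred by $T$. If it is referred, then by $\REQ{T}{L_0}{L_2}$ the matching entry of $L_2$ carries the same body, \thref{}{cpx_req_conf} transports the reducing step into the corresponding sub-environment of $L_2$ to produce $\LPX{L_2}{L}$, and \thref{}{cpx_req_conf_sn} controls the referred variables of the common reduct so that $\REQ{T}{L_1}{L}$ holds. If the reduced entry is not referred (in particular if the step does not act on the referred part at all), then $\REQ{T}{L_0}{L_1}$ already holds, the unconstrained body of the corresponding entry of $L_2$ is left untouched, and $L\defeq L_2$ closes the goal by transitivity of $\REQ$. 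The main obstacle I anticipate is exactly the bookkeeping in this last part: keeping the variable bound by the reduced entry aligned across $L_0$, $L_1$, $L_2$ and $L$, and keeping the inherited-free-variable sets synchronized as entries are peeled, so that the two preservation lemmas apply to the right subterms and the referred/unreferred dichotomy is preserved under the induction.
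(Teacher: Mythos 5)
Your proposal is correct, and on three of the four parts it coincides with the paper's own proof: \thref{req}{req_fqu_trans} and \thref{req}{cpx_req_conf} are proved in the paper exactly as you do, by induction on the first premise and by cases on the $\REQ$ premise (for the subclosure relation all rules of \tabref{fqu} are axioms, so the induction degenerates to your case analysis, and your treatment of the inversion cases --- Rule $\ruleref{req}{P}$ at a referred variable, Rule $\ruleref{req}{B}$ when peeling an unreferred entry, the extension step under a binder --- is the right one); and \thref{req}{lpx_req_conf} is dispatched in the paper with the single remark that it follows ``with quite an effort'' from \thref{req}{cpx_req_conf}, an effort that your referred/unreferred case split, transporting the step with \thref{req}{cpx_req_conf} and controlling the reduct's referred variables with \thref{req}{cpx_req_conf_sn}, plausibly reconstructs. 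The genuine divergence is \thref{req}{cpx_req_conf_sn} itself: the paper proves it directly by induction on the closure $\Cl{L_1}{T_1}$ and by cases on the premises, whereas you factor it through two auxiliary lemmas, namely that $\CPX{L_1}{T_1}{T_2}$ implies $\FreeP{L_1}{T_2} \subseteq \FreeP{L_1}{T_1}$, and that $\REQ{\f}{L_1}{L_2}$ is antitone in $\f$. Both lemmas are true but each needs an induction of its own (the first on the $\CPX$ derivation, since the context rules require inductive hypotheses; the second on the $\REQ$ derivation, downgrading Rule $\ruleref{req}{P}$ to Rule $\ruleref{req}{B}$ on discarded variables), so no work is saved overall; what the factoring buys is modularity: the same two lemmas silently discharge the restriction and extension steps that every other case of yours invokes when projecting to a subterm or entering a binder, steps the paper leaves implicit inside its case analyses.

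One caveat on the unreferred case of \thref{req}{lpx_req_conf}: taking $L \defeq L_2$ requires $\LPX{L_2}{L_2}$, and with the rules as printed ($\LPX$ built from Rules $\ruleref{lpx}{B}$ and $\ruleref{lpx}{P}$ over a $\CPX$ with no reflexivity rule) exactly one entry must take a genuine step, so $\LPX$ is not reflexive; worse, since $\REQ$ deliberately places no validity constraint on unreferred entries, $L_2$ may admit no $\LPX$ step at all (say its sole entry has body $\TNRef{z}$ with $z$ unbound). Under that literal reading the statement itself fails, so it must be read --- as in the certified specification, where reduction is parallel and hence reflexive (\appref{specification}) --- with a reflexive $\LPX$. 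Your choice of $L \defeq L_2$ is therefore the intended one, but you should flag that it appeals to reflexivity of the parallel formulation rather than to the sequential rules displayed in the tables.
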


\begin{proof}
\thref{}{req_fqu_trans} and \thref{}{cpx_req_conf}
are proved by induction on the first premise
and by cases on the other premises.
\thref{}{cpx_req_conf_sn}
is proved by induction on the closure $\Cl{L_1}{T_1}$
and by cases on the premises.
\thref{}{lpx_req_conf}
follows with quite an effort from \thref{}{cpx_req_conf}.
\end{proof}

\subsection{QRST-Reduction for Closures}
\seclabel{fpbs}

In this section we introduce our \emph{big trees} extending the ideas of \citeN{SPAc4}.
The nodes of an extended big tree are closures $\Cl{L}{T}$
and the arcs are the qrst-steps denoted by the relation $\FPB{L_1}{T_1}{L_2}{T_2}$
defined by the rules of \tabref{fpb}.
This relation comprises the steps we introduced in
\secref{fqus}, \secref{cpxs}, \secref{lpxs}, \secref{req}
and is reflexive because of Rule $\ruleref{fpb}{lq}$.

\begin{table}
\seccaption{fpbs}
{Extended qrst-reduction for closures (one step).}
\tablabel{fpb}
\begin{tabular}{c}

\infer[\ruleref{fpb}{ex}]
{\FPB{L}{T_1}{L}{T_2}}
{\CPX{L}{T_1}{T_2}}
\sep

\infer[\ruleref{fpb}{lx}]
{\FPB{L_1}{T}{L_2}{T}}
{\LPX{L_1}{L_2}}
\nl

\infer[\ruleref{fpb}{lq}]
{\FPB{L_1}{T}{L_2}{T}}
{\REQ{T}{L_1}{L_2}}
\sep

\infer[\ruleref{fpb}{cs}]
{\FPB{L_1}{T_1}{L_2}{T_2}}
{\FQU{L_1}{T_1}{L_2}{T_2}}
\\

\end{tabular}
\end{table}

The transitive closure $\FPBS{L_1}{T_1}{L_2}{T_2}$,
defined in \tabref{fpbs},
is interesting because of the decomposition \thref{fpbs}{fpbs_inv_star},
\ie in a sequence of qrst-steps
the rt-steps for terms may precede the s-steps,
these may precede the rt-steps for environments and
these may precede the q-steps. 

\begin{theorem}[decomposition, part three]\
\thslabel{fpbs}
\begin{enumerate}

\item\thlabel{fpbs_inv_star}
\Caption{decomposition of qrst-reduction}
If $\FPBS{L_1}{T_1}{L_2}{T_2}$
then there exist $T$, $L$ and $L_0$\\
such that $\CPXS{L_1}{T_1}{T}$
and $\FQUS{L_1}{T}{L}{T_2}$
and $\LPXS{L}{L_0}$
and $\REQ{T_2}{L_0}{L_2}$.

\item\thlabel{fpbs_intro_star}
\Caption{composition of qrst-reduction}
If $\CPXS{L_1}{T_1}{T}$
and $\FQUS{L_1}{T}{L}{T_2}$
and $\LPXS{L}{L_0}$
and $\REQ{T_2}{L_0}{L_2}$
then $\FPBS{L_1}{T_1}{L_2}{T_2}$.

\end{enumerate}
\end{theorem}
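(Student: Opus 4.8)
My plan is to prove the two parts of \thref{fpbs}{} separately, since the decomposition (part one) and the composition (part two) have genuinely different characters. The composition \thref{}{fpbs_intro_star} should be the easier direction, so I would dispose of it first to warm up and to fix notation. The decomposition \thref{}{fpbs_inv_star} is where the real work lies, and it is the step I expect to be the main obstacle.

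For \thref{}{fpbs_intro_star}, the idea is to reassemble a single $\FPBS{}{}{}{}$-sequence out of the four component sequences, reading them left to right. Each of the hypotheses $\CPXS{L_1}{T_1}{T}$, $\FQUS{L_1}{T}{L}{T_2}$, $\LPXS{L}{L_0}$ and $\REQ{T_2}{L_0}{L_2}$ is, by the definitions of the respective reflexive-transitive closures, a chain of individual steps, and each individual step is exactly one of the four constructors of $\FPB{}{}{}{}$ in \tabref{fpb}: $\ruleref{fpb}{ex}$ for an extended rt-step on the term, $\ruleref{fpb}{cs}$ for a direct subclosure step, $\ruleref{fpb}{lx}$ for an extended rt-step on the environment, and $\ruleref{fpb}{lq}$ for a q-step. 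Thus I would prove, by an easy induction on the length of each closure, four little monotonicity lemmas of the shape ``if $\CPXS{L}{T_1}{T_2}$ then $\FPBS{L}{T_1}{L}{T_2}$'', and the analogous statements for $\FQUS{}{}{}{}$, $\LPXS{}{}{}{}$ and the q-closure; concatenating the four resulting $\FPBS{}{}{}{}$-sequences by transitivity of $\FPBS{}{}{}{}$ yields the claim. No critical-pair reasoning is needed here because we are only composing, never permuting.

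For \thref{}{fpbs_inv_star}, the strategy is induction on the derivation of $\FPBS{L_1}{T_1}{L_2}{T_2}$, maintaining as invariant the existence of the sorted factorization $\CPXS{}{}{}{} \cdot \FQUS{}{}{}{} \cdot \LPXS{}{}{}{} \cdot \REQ{}{}{}{}$ promised by the statement. The base (reflexive) case is trivial by taking $T \defeq T_1$, $L \defeq L_1$, $L_0 \defeq L_1$ and using reflexivity of each relation. In the inductive step we have a factored sequence followed by one more $\FPB{}{}{}{}$-step, and we must push that new step, which enters on the right, leftward past the already-sorted blocks so that it lands in its correct class. This is where the commutation lemmas of \secref{transition} are the engine: an extended rt-step on the term must be commuted past the $\FQUS{}{}{}{}$-block using \thref{lpxs}{fqu_cpx_trans} (iterated to an s-sequence), and past the environment blocks and q-blocks using the confluence results \thref{req}{cpx_req_conf} and the like; a trailing s-step is absorbed by \thref{lpxs}{lpx_fqu_trans}; a trailing environment rt-step or q-step is merged using \thref{req}{lpx_req_conf} and \thref{req}{req_fqu_trans} to slide it into the rightmost blocks. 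Each commutation may, as in \thref{lpxs}{fqu_cpx_trans}, replace the moved step by a modified step together with an auxiliary step, so I expect the factorization to grow rather than merely reorder.

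The hard part will be the bookkeeping in this permutation argument: every time a step is commuted past a block it may change the intermediate terms and environments and may even spawn an additional step of a \emph{different} class, so I must verify that no such spawned step ever migrates into a block to its left (which would break the sorted order and prevent the induction from closing). The governing principle is that the four confluence lemmas of \thref{lpxs}{} and \thref{req}{} are each oriented so that a right-hand step of one class, when pushed past a block of a \emph{later} class, reappears as a step of the \emph{same} (earlier) class, never an earlier-than-permitted one; in other words the ordering $\mathrm{r}\mathrm{t}\text{-term} \prec \mathrm{s} \prec \mathrm{r}\mathrm{t}\text{-env} \prec \mathrm{q}$ is respected by every commutation. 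Making this monotonicity explicit, and checking it against the four trailing-step cases, is the crux; once it is established the induction goes through mechanically, which matches the paper's remark that the development has been fully machine-checked.
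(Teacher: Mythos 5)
Your proposal is correct and follows essentially the same route as the paper: the composition \thref{fpbs}{fpbs_intro_star} is the immediate concatenation of the four blocks via the constructors of $\FPB{}{}{}{}$ and transitivity, and the decomposition \thref{fpbs}{fpbs_inv_star} is obtained by permuting each newly appended step leftward into its block using precisely the commutation lemmas the paper invokes, namely those of \thref{lpxs}{} and \thref{req}{} (your ``and the like'' covering the two you do not name, \thref{lpxs}{lpx_cpx_trans} and \thref{req}{cpx_req_conf_sn}). One cosmetic slip only: \thref{lpxs}{fqu_cpx_trans} is a clean swap, and the commutation that actually spawns an auxiliary term rt-step is \thref{lpxs}{lpx_fqu_trans}; since that spawned step belongs to the leftmost class it migrates correctly past the s-block, so your bookkeeping invariant and the induction still close as described.
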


\begin{proof}
\thref{}{fpbs_inv_star} (the decomposition theorem) follows from
\thref{lpxs}{fqu_cpx_trans}, \thref{lpxs}{lpx_fqu_trans}, \thref{lpxs}{lpx_cpx_trans},
\thref{req}{req_fqu_trans}, \thref{req}{cpx_req_conf_sn}, \thref{req}{cpx_req_conf}
and \thref{req}{lpx_req_conf}.
\thref{}{fpbs_intro_star}
is the inverse statement of \thref{}{fpbs_inv_star} and is immediate. 
\end{proof}

\section{Strong Normalization of QRST-Reduction and Related Notions}
\seclabel{normalization}

The reader will surely notice that
the generalized t-step of Rule $\ruleref{cpx}{s}$ and
the q-step of Rule $\ruleref{fpb}{lq}$ apply endlessly.
Thus a closure can be strongly normalizing with respect to
qrst-reduction (\secref{fpbs})
only up to an equivalence relation that we define in \secref{feqx}.
With this equivalence set up,
we can define normal forms and strongly normalizing forms in \secref{fsb},
where we prove that a closure $\Cl{L}{T}$ is strongly qrst-normalizing
if $T$ is strongly rt-normalizing in $L$.
So we focus on strongly rt-normalizing terms in \secref{gcr} where we
prove that they form a Tait-style reducibility candidate \cite{Tai75}.
This differs from a Girard-style reducibility candidate \cite{GTL89}
in that it contains untyped terms and
it is not closed by forward r-reduction (Girard's condition CR2).
This simplification gives us more freedom for constructing its elements.
Then we establish our reducibility theorem in \secref{lsubc},
\ie a candidate $C$ contains all terms to which $C$  
can be assigned according to well-established type rules derived from $\LR$.
The types of this assignment, that we interpret as candidates,
are known as arities or norms in the literature.
Normable terms are therefore strongly rt-normalizing and we argue
in the end that they form a decidable superset of the valid terms.
Connecting the previous results, strong normalization of qrst-reduction
from valid closures is immediate.
This is our extension of the so-called big-tree theorem of \citeN{SPAc4}
and its importance is fundamental because it provides for
a very powerful induction principle that is
the basis for proving the most important properties of validity
as we will see in \secref{preservation}.

In the end our system is strongly normalizing because so is $\LR$.
Indeed, we could prove normalization with methods specific to $\LR$
(for example using the sets of computable expressions of \citeN{Tai67}).
Nevertheless, we prefer to present a proof based on reducibility candidates,
which should have the advantage of being reusable 
for some extension of our system evolving beyond $\LR$.

\subsection{Sort Irrelevance for Terms, Environments and Closures}
\seclabel{feqx}

Given that q-steps and generalized t-steps are always possible,
the most sensible definition of a normal form in our qrst-reduction system
is that of a form to which only these steps apply.

Now we observe that,
if the reduction $\FPBS{L_1}{T_1}{L_2}{T_2}$ consists just of these steps,
then $T_1$ and $T_2$ differ only in their sort items.
Moreover $L_1$ and $L_2$ differ only in the sort items
of their entries recursively referred by $T_1$ (or equivalently by $T_2$),
while they can differ at will in the other entries.

So the notion of sort irrelevance emerges naturally
and it is captured by the three equivalence relations
$\TEQX{T_1}{T_2}$ (for terms),
$\REQX{T}{L_1}{L_2}$ (for environments) and
$\FEQX{L_1}{T_1}{L_2}{T_2}$ (for closures)
that we define with the rules of
\tabref{teqx}, \tabref{reqx} and \tabref{feqx} respectively.
In particular, we define $\REQX{T}{L_1}{L_2}$ as $\REQX{\FreeP{L_1}{T}}{L_1}{L_2}$
following the pattern we use for $\REQ{T}{L_1}{L_2}$ in \secref{req}.

\begin{table}
\seccaption{feqx}
{Sort irrelevance for terms.}
\tablabel{teqx}
\begin{tabular}{c}

\infer[\ruleref{teqx}{\TSort{}}]
{\TEQX{\TSRef{s_1}}{\TSRef{s_2}}}
{}
\sep

\infer[\ruleref{teqx}{\TLRef{}}]
{\TEQX{\TNRef{x}}{\TNRef{x}}}
{}
\nl

\infer[\ruleref{teqx}{P}]
{\TEQX{\TPair{\Y}{x}{V_1}T_1}{\TPair{\Y}{x}{V_2}T_2}}
{\TEQX{V_1}{V_2}&\TEQX{T_1}{T_2}}
\sep

\infer[\ruleref{teqx}{F}]
{\TEQX{\TFlat{V_1}T_1}{\TFlat{V_2}T_2}}
{\TEQX{V_1}{V_2}&\TEQX{T_1}{T_2}}
\\

\end{tabular}
\end{table}

\begin{table}
\seccaption{feqx}
{Sort irrelevance for environments (on selected entries).}
\tablabel{reqx}
\begin{tabular}{c}

\infer[\ruleref{reqx}{\LAtom}]
{\REQX{\f}{\LAtom}{\LAtom}}
{}
\sep

\infer[\ruleref{reqx}{B}]
{\REQX{\f}{K_1\LPair{y}{V_1}}{K_2\LPair{y}{V_2}}}
{\REQX{\f}{K_1}{K_2}}
\sep

\infer[\ruleref{reqx}{P}]
{\REQX{\f\SOR\SUBSET{y}{}}{K_1\LPair{y}{V_1}}{K_2\LPair{y}{V_2}}}
{\REQX{\f}{K_1}{K_2}&\TEQX{V_1}{V_2}}
\nl

Rules $\ruleref{reqx}{B}$ and $\ruleref{reqx}{P}$: $y \notin \f$.
\\

\end{tabular}
\end{table}

\begin{table}
\seccaption{feqx}
{Sort irrelevance for closures.}
\tablabel{feqx}
\begin{tabular}{c}

\infer[\ruleref{feqx}{I}]
{\FEQX{L_1}{T_1}{L_2}{T_2}}
{\REQX{T_1}{L_1}{L_2}&\TEQX{T_1}{T_2}}
\\

\end{tabular}
\end{table}

We want to stress that the small-step reductions we are using in our system
make the converse hold as well.
In particular, if $\FEQX{L_1}{T_1}{L_2}{T_2}$, then
only q-steps and generalized t-steps are possible in the reduction
$\FPB{L_1}{T_1}{L_2}{T_2}$.
So one-step cycles are impossible up to sort irrelevance.

\subsection{Normal Forms and Strongly Normalizing Forms for QRST-Reduction}
\seclabel{fsb}

The key observation we made in \secref{feqx}
that one-step cycles are impossible up to sort irrelevance
gives us the right to consider
a term $T$ rt-normal in the environment $L$
when it satisfies the condition $\CNX{L}{T}$
defined by Rule $\ruleref{csx}{N}$ of \tabref{csx}.
The condition $\CSX{L}{T}$ satisfied by
a strongly rt-normalizing term $T$ in $L$ follows immediately
and is defined by Rule $\ruleref{csx}{S}$ of \tabref{csx}.

Notice that the non-recursive introduction rule of this relation is:
if $\CNX{L}{T}$ then $\CSX{L}{T}$.

\begin{table}
\seccaption{fsb}
{Normal and strongly normalizing terms for extended rt-reduction.}
\tablabel{csx}
\begin{tabular}{c}

\infer[\ruleref{csx}{N}]
{\CNX{L}{T_1}}
{\MALL{T_2}
 \MATOM{\CPX{L}{T_1}{T_2}} \MIMP
 \MATOM{\TEQX{T_1}{T_2}}
}
\sep

\infer[\ruleref{csx}{S}]
{\CSX{L}{T_1}}
{\MALL{T_2}
 \MATOM{\CPX{L}{T_1}{T_2}} \MIMP
 \MATOM{\TNEQX{T_1}{T_2}} \MIMP
 \MATOM{\CSX{L}{T_2}}
}
\\

\end{tabular}
\end{table}

Following the same pattern, we introduce
a strongly qrt-normalizing environment $L$ with respect to a term $T$
with the predicate $\RSX{T}{L}$ defined in \tabref{rsx}
and moreover a strongly qrst-normalizing closure $\Cl{L}{T}$
with the predicate $\FSB{L}{T}$ defined in \tabref{fsb}.

\begin{table}
\seccaption{fsb}
{Strongly normalizing environments for extended qrt-reduction (on referred entries).}
\tablabel{rsx}
\begin{tabular}{c}

\infer[\ruleref{rsx}{S}]
{\RSX{T}{L_1}}
{\MALL{L_2}
 \MATOM{\LPX{L_1}{L_2}} \MIMP
 \MATOM{\RNEQX{T}{L_1}{L_2}} \MIMP
 \MATOM{\RSX{T}{L_2}}
}
\\

\end{tabular}
\end{table}

\begin{table}
\seccaption{fsb}
{Strongly normalizing closures for extended qrst-reduction.}
\tablabel{fsb}
\begin{tabular}{c}

\infer[\ruleref{fsb}{S}]
{\FSB{L_1}{T_1}}
{\MALL{L_2, T_2}
 \MATOM{\FPB{L_1}{T_1}{L_2}{T_2}} \MIMP
 \MATOM{\FNEQX{L_1}{T_1}{L_2}{T_2}} \MIMP
 \MATOM{\FSB{L_2}{T_2}}
}
\\

\end{tabular}
\end{table}

Unfortunately, the introduction Rule $\ruleref{rsx}{P}$ for $\RSX{T}{L}$
differs from the other $P$-rules in that
the entry $\LPair{x}{V}$ disappears in the conclusion
of its right inverse $\ruleref{rsx}{P\dx}$. Thus $L$ must not bind $x$ in it.
\begin{equation}
\eqnlabel{rsxP}
\vcenter{
\infer[\ruleref{rsx}{P}]
{\RSX{\TPair{\Y}{x}{V}T}{L}}
{\RSX{V}{L}&\RSX{T}{L\LPair{x}{V}}}
}\sep
\vcenter{
\infer[\ruleref{rsx}{P\sn}]
{\RSX{V}{L}}
{\RSX{\TPair{\Y}{x}{V}T}{L}}
}\sep
\vcenter{
\infer[\ruleref{rsx}{P\dx}]
{\RSX{T}{L}}
{\RSX{\TPair{\Y}{x}{V}T}{L}}
}
\end{equation}

This means that when $\RSX{T_1}{L}$ and $\CPX{L}{T_1}{T_2}$ appear
as premises in a statement proved by induction, for instance:
if $\CPX{L}{T_1}{T_2}$ then $\RSX{T_1}{L}$ implies $\RSX{T_2}{L}$,
the premise $\RSX{T_1}{L}$ must be generalized as $\RSX{T_1}{K}$
and a compatibility relation $\JSX{L}{K}$ between the two
environments must be set up.
The result is the next \thref{fsb}{rsx_cpx_trans_jsx}.
This reflexive and transitive relation,
that we define with the rules of \tabref{jsx},
accommodates the proof when $T_1 = \TNRef{x}$ or $T_1 = \TPair{\Y}{x}{V}T$.

\begin{table}
\seccaption{fsb}
{Compatibility relation for strongly normalizing environments.}
\tablabel{jsx}
\begin{tabular}{c}

\infer[\ruleref{jsx}{\LAtom}]
{\JSX{\LAtom}{\LAtom}}
{}
\sep

\infer[\ruleref{jsx}{B}]
{\JSX{K_1\LPair{y}{V}}{K_2\LPair{y}{V}}}
{\JSX{K_1}{K_2}}
\sep

\infer[\ruleref{jsx}{P}]
{\JSX{K_1\LPair{y}{V}}{K_2}}
{\JSX{K_1}{K_2}&\RSX{K_2}{V}}
\nl

Rule $\ruleref{jsx}{P}$: $y$ not bound in $K_2$.\\

\end{tabular}
\end{table}

\begin{theorem}[strongly normalizing forms, part one]\
\thslabel{fsb}
\begin{enumerate}

\item\thlabel{rsx_cpx_trans_jsx}
\Caption{rt-reduction for terms preserves normalization with respect to rt-reduction on referred entries}
If $\CPX{L}{T_1}{T_2}$ and $\JSX{L}{K}$ 
and $\RSX{T_1}{K}$ then $\RSX{T_2}{K}$.

\item\thlabel{rsx_lref_pair_lpxs}
\Caption{normalization with respect to rt-reduction of an entry referred by a variable occurrence}
If $\CSX{K_1}{V}$
and $\RSX{V}{K_2}$
and $\LPXS{K_1}{K_2}$
then $\RSX{\TNRef{x}}{K_2\LPair{x}{V}}$.

\item\thlabel{csx_rsx}
\Caption{normalization for terms implies normalization on referred entries} 
If $\CSX{L}{T}$ then $\RSX{T}{L}$.

\item\thlabel{csx_fsb_fpbs}
\Caption{normalization with respect to rt-reduction implies normalization with respect to qrst-reduction}
If $\CSX{L_1}{T_1}$
and $\FPBS{L_1}{T_1}{L_2}{T_2}$ 
then $\FSB{L_2}{T_2}$.

\end{enumerate}
\end{theorem}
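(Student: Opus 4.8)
The four statements are interdependent and I would prove them in the order listed, building up to \thref{}{csx_fsb_fpbs}, which is the term-level core of the big-tree theorem, while \thref{}{rsx_cpx_trans_jsx}, \thref{}{rsx_lref_pair_lpxs} and \thref{}{csx_rsx} supply the machinery that transports the separate normalization witnesses --- $\CSX{}{}$ for term rt-reduction, $\RSX{}{}$ for rt-reduction on the referred entries, and finally $\FSB{}{}$ for full qrst-reduction --- along the moves of the system. Throughout, the inductions are the well-founded inductions built into these predicates, interleaved with a case analysis on the single step that drives the statement.

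\emph{Part \thref{}{rsx_cpx_trans_jsx}.} I would argue by induction on the extended rt-step $\CPX{L}{T_1}{T_2}$, threading the compatibility hypothesis $\JSX{L}{K}$ structurally. The sort rule and the flat context rules are routine, because the inherited free variables split additively over $\TAppl{}{}$ and $\TCast{}{}$ and $\JSX{L}{K}$ is inherited unchanged. The delicate families are the ones flagged in \secref{fsb}. In the variable rules a $\delta$- or $l$-step (possibly under the $L$-context rule) extracts the content of the entry binding the variable; here I would invert $\JSX{L}{K}$: its $B$-rule leaves the entry in $K$ and the conclusion follows by monotonicity of $\RSX{}{}$ in the referred set, whereas its $P$-rule has already dropped the entry and supplies the required normalization of its content as a side condition. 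The genuinely subtle case is the pair $T_1 = \TPair{\Y}{x}{V}T$ with the step inside the body: the right inverse $\ruleref{rsx}{P\dx}$ of \eqnref{rsxP} returns $\RSX{T}{K}$ \emph{without} the entry $\LPair{x}{V}$, while the inductive hypothesis and the reassembly rule $\ruleref{rsx}{P}$ want the entry present. It is precisely rule $\ruleref{jsx}{P}$, dropping $\LPair{x}{V}$ on the left with the side condition furnished by $\ruleref{rsx}{P\sn}$, that bridges this asymmetry, while extending by $\LPair{x}{V}$ on both sides when needed is rule $\ruleref{jsx}{B}$. This mismatch between the elimination and the introduction forms of the $P$-rule for $\RSX{}{}$ is, in my estimation, the principal obstacle of the whole group, and everything downstream depends on getting it right.

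\emph{Parts \thref{}{rsx_lref_pair_lpxs} and \thref{}{csx_rsx}.} For \thref{}{rsx_lref_pair_lpxs} I would construct $\RSX{\TNRef{x}}{K_2\LPair{x}{V}}$ straight from rule $\ruleref{rsx}{S}$ by a nested well-founded induction on the two measures $\CSX{K_1}{V}$ and $\RSX{V}{K_2}$: an $\LPX$-step out of $K_2\LPair{x}{V}$ that is non-trivial on the entries referred by $\TNRef{x}$ either acts inside $V$, reducing the $\CSX{K_1}{V}$ measure, or inside $K_2$, reducing the $\RSX{V}{K_2}$ measure, and the third premise $\LPXS{K_1}{K_2}$ is what lets a reduction of $V$ observed in $K_2$ be reflected back into $K_1$, where $V$ is known to be strongly rt-normalizing. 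For \thref{}{csx_rsx} I would induct on $\CSX{L}{T}$ with a case analysis on the shape of $T$: the sort case is immediate (a sort refers to no entries), the variable case is discharged by \thref{}{rsx_lref_pair_lpxs}, the pair case by the $P$-rules of $\RSX{}{}$ together with \thref{}{rsx_cpx_trans_jsx}, and the flat case by additivity of inherited free variables. The governing intuition is that any $\LPX$-reduction of an entry of $L$ referred by $T$ induces, through the occurrence pointing at it, a genuine rt-reduction of $T$, so strong normalization of the term bounds strong normalization of the environment.

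\emph{Part \thref{}{csx_fsb_fpbs}.} Here I would first apply the decomposition \thref{fpbs}{fpbs_inv_star} to split $\FPBS{L_1}{T_1}{L_2}{T_2}$ into a term rt-prefix $\CPXS{L_1}{T_1}{T}$, a subclosure phase $\FQUS{L_1}{T}{L}{T_2}$, an environment rt-suffix $\LPXS{L}{L_0}$ and a closing q-step $\REQ{T_2}{L_0}{L_2}$, and then transport normalization through each phase: $\CSX{}{}$ survives the rt-prefix by iterating $\ruleref{csx}{S}$ up to sort irrelevance and descends to the subclosure because an inherited subterm of a strongly rt-normalizing term is again strongly rt-normalizing; \thref{}{csx_rsx} converts the surviving $\CSX{}{}$ into $\RSX{}{}$, which then survives the rt-suffix directly by $\ruleref{rsx}{S}$; and the closing q-step, touching only non-referred entries, preserves both witnesses. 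From the resulting $\CSX{L_2}{T_2}$ and $\RSX{T_2}{L_2}$ I would finally discharge $\FSB{L_2}{T_2}$ from rule $\ruleref{fsb}{S}$ by a combined well-founded induction over the four kinds of qrst-step: a genuine term rt-step decreases the $\CSX{}{}$ measure (and $\RSX{}{}$ is carried along by \thref{}{rsx_cpx_trans_jsx}), a genuine environment rt-step decreases the $\RSX{}{}$ measure, a subclosure step decreases the s-step order of \secref{fqus}, while every remaining q-step or generalized t-step is absorbed, lying within the sort irrelevance $\FEQX{}{}{}{}$ that the side condition $\FNEQX{}{}{}{}$ of rule $\ruleref{fsb}{S}$ excludes.
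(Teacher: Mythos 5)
Your proposal follows essentially the same route as the paper's proof in all four parts: induction on the $\CPX{L}{T_1}{T_2}$ step with the $\ruleref{jsx}{P}$/$\ruleref{rsx}{P\dx}$ bridge for the first clause, nested induction on the first two premises with cases on the third for the second, reduction of the variable case to the second clause for the third, and the decomposition \thref{fpbs}{fpbs_inv_star} combined with well-founded induction on the strongly rt-normalizing premise for the fourth. The one small correction: in the third clause the paper inducts on the closure $\Cl{L}{T}$ (well-founded by proper subclosure), not on the predicate $\CSX{L}{T}$ as you state, and this matters because the pair and flat cases of your case analysis need inductive hypotheses for structural subclosures, which reducts of $\CSX{L}{T}$ alone do not supply.
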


\begin{proof}
\thref{}{rsx_cpx_trans_jsx}
is proved by induction on the first premise
and by cases on the other premises
with some invocations of Rule $\ruleref{rsx}{P\dx}$,
\thref{req}{lpx_req_conf} and \thref{lpxs}{lpx_cpx_trans}.
\thref{}{rsx_lref_pair_lpxs}
is proved by induction on the first premise,
then by induction on the second premise
and by cases on the third premise
by means of \thref{}{rsx_cpx_trans_jsx} and \thref{lpxs}{lpx_cpx_trans}.
\thref{}{csx_rsx}
is proved by induction on the closure $\Cl{L}{T}$
with the help of \thref{}{rsx_lref_pair_lpxs}
when $L = K\LPair{x}{V}$ and $T = \TNRef{x}$.
\thref{}{csx_fsb_fpbs} is proved by induction on the first premise
with the help of \thref{}{csx_rsx} and \thref{fpbs}{fpbs_inv_star}.
\end{proof}

Notice in the end that without loss of generality we can replace
$\CPX{L}{T_1}{T_2}$ with $\CPXS{L}{T_1}{T_2}$ in
Rule $\ruleref{csx}{N}$ and Rule $\ruleref{csx}{S}$.
We can also replace $\LPX{L_1}{L_2}$ with $\LPXS{L_1}{L_2}$
in Rule $\ruleref{rsx}{S}$
and moreover we can replace
$\FPB{L_1}{T_1}{L_2}{T_2}$ with $\FPBS{L_1}{T_1}{L_2}{T_2}$
in Rule $\ruleref{fsb}{S}$.

\subsection{Reducibility Candidates, Neutral Terms and Inner RT-Reduction}
\seclabel{gcr}

The most important property of strongly rt-normalizing terms is that
$\CSX{}{} \defeq \SUBSET{\Cl{L}{T}}{\CSX{L}{T}}$ is a reducibility candidate
(\thref{gcr}{csx_gcr}),
\ie a subset $\C$ of $\CSX{}{}$ verifying the six saturation conditions
expressed by the rules of \tabref{gcr}.
In this definition we use some vector notation:
$\V{V}$ is a list of terms,
$\TApplV{\V{V}}T$ is the n-ary application
and $\CSXV{L}{\V{V}}$ means $\CSX{L}{V}$ for every $V \in \V{V}$.

\begin{table}
\seccaption{gcr}
{Reducibility candidate.}
\tablabel{gcr}
\begin{tabular}{c}

\infer[\ruleref{gcr}{J}]
{\CSX{L}{T}}
{\Cl{L}{T} \in \C}
\sep

\infer[\ruleref{gcr}{\TLRef{}}]
{\Cl{L}{\TApplV{\V{V}}\TNRef{x}} \in \C}
{\Drops{L}{K\LPair{x}{V}}&\Cl{L}{\TApplV{\V{V}}V} \in \C}
\nl

\infer[\ruleref{gcr}{N}]
{\Cl{L}{\TApplV{\V{V}}T} \in \C}
{\CSXV{L}{\V{V}}&\Simple{T}&\CNX{L}{T}}
\sep

\infer[\ruleref{gcr}{\beta}]
{\Cl{L}{\TApplV{\V{V}}\TAppl{V}\TAbst{\Y}{x}{W}T} \in \C}
{\Cl{L}{\TApplV{\V{V}}\TAbbr{x}{\TCast{W}V}T} \in \C}
\nl

\infer[\ruleref{gcr}{\TAbbr{}{}}]
{\Cl{L}{\TApplV{\V{V}}\TAbbr{x}{V}T} \in \C}
{\CSX{L}{V} & \Cl{L\LAbbr{x}{V}}{\TApplV{\V{V}}T} \in \C}
\sep

\infer[\ruleref{gcr}{\TCast{}}]
{\Cl{L}{\TApplV{\V{V}}\TCast{U}T} \in \C}
{\Cl{L}{\TApplV{\V{V}}U} \in \C & \Cl{L}{\TApplV{\V{V}}T} \in \C}
\\

\end{tabular}
\end{table}

Rule $\ruleref{gcr}{J}$ is Girard's condition CR1,  
Rule $\ruleref{gcr}{N}$ is Tait's condition iii or Girard's condition CR4, 
Rule $\ruleref{gcr}{\beta}$ generalizes Tait's condition ii.
The neutral term $T$ in Rule $\ruleref{gcr}{N}$ does not make a redex
with a preceding application,
so it is neither an abstraction (because of $\beta$-reduction)
nor a definition (because of $\theta$-reduction).
The formal definition of $\Simple{T}$ is given by the rules of \tabref{simple}.

When we say in Rule $\ruleref{gcr}{\TLRef{}}$ that
the environment $L$ extends the environment $K$,
we mean that all entries $\LPair{x}{V}$ of $K$ are in $L$ in the same order.
Moreover, the entries of $L$ that are not in $K$
do not depend on the variables bound in $K$.
We stress that the explicit substitution occurring in 
the second premise of Rule $\ruleref{gcr}{\delta}$
prevents us from assuming that $K\LPair{x}{V}$ is simply an initial segment of $L$.

Compound reducibility candidates are built through well-established constructions.
In this case we need just the functional construction $\CFun{\C_1}{\C_2}$
defined by the rule of \tabref{cfun},
whose relevant property is \thref{gcr}{acr_gcr} stating that
if $\C_1$ and $C_2$ are candidates, so is $\CFun{\C_1}{\C_2}$. 
Notice that $L$ must extend $K$ to prove the case of
Rule $\ruleref{gcr}{\delta}$,
in which $L$ and $L\LAbbr{x}{V}$ have different lengths.

\begin{table}
\seccaption{gcr}
{Functional construction for subsets of closures.}
\tablabel{cfun}
\begin{tabular}{c}

\infer[\ruleref{cfun}{I}]
{\Cl{K}{T} \in \CFun{\C_1}{\C_2}}
{\MALL{L,V}
 \MATOM{\Drops{L}{K}} \MIMP
 \MATOM{\Cl{L}{V} \in \C_1} \MIMP
 \MATOM{\Cl{L}{\TAppl{V}T} \in \C_2}
}
\nl

Rule $\ruleref{cfun}{I}$:
$T$ does not depend on the variables bound in $L$ but not in $K$.
\\

\end{tabular}
\end{table}

When we prove that $\CSX{}{}$ is a reducibility candidate,
for example in the case of Rule $\ruleref{gcr}{\beta}$,
we need to separate the reductions involving the top redex
(the outer reductions) from the reductions preserving the top redex
(the inner reductions).
Here it is enough to say that
$\CPXS{L}{T_1}{T_2}$ is inner when $T_1$ and $T_2$ have the same top construction
in the sense of $\TEQO{T_1}{T_2}$ defined by the rules of \tabref{teqo}.
For the sake of elegance $\TEQX{T_1}{T_2}$ implies $\TEQO{T_1}{T_2}$
and we show its usage in \thref{gcr}{cpxs_fwd_beta}.

\begin{table}
\seccaption{gcr}
{Equivalence for inner rt-reduction.}
\tablabel{teqo}
\begin{tabular}{c}

\infer[\ruleref{teqo}{\TSort{}}]
{\TEQO{\TSRef{s_1}}{\TSRef{s_2}}}
{}
\sep

\infer[\ruleref{teqo}{\TLRef{}}]
{\TEQO{\TNRef{x}}{\TNRef{x}}}
{}
\nl

\infer[\ruleref{teqo}{P}]
{\TEQO{\TPair{\Y}{x}{V_1}T_1}{\TPair{\Y}{x}{V_2}T_2}}
{}
\sep

\infer[\ruleref{teqo}{F}]
{\TEQO{\TFlat{V_1}T_1}{\TFlat{V_2}T_2}}
{}
\\

\end{tabular}
\end{table}

\begin{theorem}[reducibility candidates]\
\thslabel{gcr}
\begin{enumerate}

\item\thlabel{cpxs_fwd_beta}
\Caption{the extended rt-reducts of the $\beta$-redex}
If $\CPXS{L}{\TAppl{V}\TAbst{\Y}{x}{W}T_1}{T_2}$
then either $\TEQO{\TAppl{V}\TAbst{\Y}{x}{W}T_1}{T_2}$
or $\CPXS{L}{\TAbbr{x}{\TCast{W}V}T_1}{T_2}$.

\item\thlabel{csx_gcr}
\Caption{strongly normalizing terms with respect to extended rt-reduction form a reducibility candidate}
$\CSX{}{}$ is a reducibility candidate.

\item\thlabel{acr_gcr}
\Caption{the functional reducibility candidate}
If $\C_1$ and $\C_2$ are reducibility candidate, so is $\CFun{\C_1}{\C_2}$.

\end{enumerate}
\end{theorem}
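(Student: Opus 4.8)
I would prove the three parts in the stated order, since the first feeds the second while the third is independent machinery. For \thref{}{cpxs_fwd_beta} I would induct on the length of the computation $\CPXS{L}{\TAppl{V}\TAbst{\Y}{x}{W}T_1}{T_2}$, strengthening the first alternative to the invariant that, as long as the head redex is untouched, $T_2$ is again a redex $\TAppl{V'}\TAbst{\Y}{x}{W'}T_1'$ obtained by reducing the three components in place. The decisive observation is that the head symbol of $\TAppl{V}\TAbst{\Y}{x}{W}T_1$ can only be rewritten by Rule $\ruleref{cpm}{\beta}$: the swap of Rule $\ruleref{cpm}{\theta}$ is inapplicable because the function part is an abstraction rather than a definition, and an abstraction is a weak head normal form, so no inner step can turn it into a definition. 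Hence every step that does not fire the head redex is an inner reduction of $V$, $W$ or $T_1$ and preserves the redex shape; the moment the head redex fires we reach $\TAbbr{x}{\TCast{W'}V'}T_1'$, which I then exhibit as a reduct of the stated $\beta$-reduct $\TAbbr{x}{\TCast{W}V}T_1$ by reducing its components, and afterwards we stay among the reducts of that reduct. The single delicate point is that the body reduction $T_1 \to T_1'$ is produced under the declaration $L\LAbst{x}{W}$, whereas the $\beta$-reduct carries the body under the definition $L\LAbbr{x}{\TCast{W}V}$; transferring the former into the latter is exactly \thref{lsubr}{lsubr_cpx_trans} applied through Rule $\ruleref{lsubr}{\beta}$. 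Weakening the strengthened first alternative finally yields $\TEQO{\TAppl{V}\TAbst{\Y}{x}{W}T_1}{T_2}$.

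For \thref{}{csx_gcr} I would check the six saturation rules of \tabref{gcr} for $\C \defeq \CSX{}{}$. Rule $\ruleref{gcr}{J}$ holds definitionally, since $\CSX{}{}$ is the set of strongly rt-normalizing closures. Each remaining rule asserts that a spine $\TApplV{\V{V}}(\cdot)$ is strongly normalizing given that certain structurally simpler closures are, and I would prove every such implication by well-founded induction on the combined strong-normalization derivations of the hypotheses, showing that each $\CPX$-reduct of the conclusion is again strongly normalizing or sort-irrelevant to it. Inner reductions decrease the induction measure and are closed under a reapplication of the same rule; the head reductions are discharged case by case. In Rule $\ruleref{gcr}{N}$ the head is simple and rt-normal, so it forms no redex with the spine and admits only sort steps. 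In Rule $\ruleref{gcr}{\TLRef{}}$ the only head step rewrites $\TNRef{x}$ by the $\delta$- or $l$-step, landing among the reducts of the hypothesis $\Cl{L}{\TApplV{\V{V}}V} \in \CSX{}{}$. In Rule $\ruleref{gcr}{\TCast{}}$ the $\epsilon$- and $e$-steps land on the reducts of $\Cl{L}{\TApplV{\V{V}}T}$ and $\Cl{L}{\TApplV{\V{V}}U}$. In Rule $\ruleref{gcr}{\TAbbr{}{}}$ the $\zeta$- and $\theta$-steps are absorbed by the hypothesis $\Cl{L\LAbbr{x}{V}}{\TApplV{\V{V}}T} \in \CSX{}{}$ together with weakening between $L$ and $L\LAbbr{x}{V}$. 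Finally, in Rule $\ruleref{gcr}{\beta}$ the head step is governed by \thref{}{cpxs_fwd_beta}, which routes every reduct either to a redex-preserving form, handled by the induction, or to the reducts of the $\beta$-reduct $\Cl{L}{\TApplV{\V{V}}\TAbbr{x}{\TCast{W}V}T} \in \CSX{}{}$ assumed in the premise.

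For \thref{}{acr_gcr} I would verify the same six rules for $\CFun{\C_1}{\C_2}$ by unfolding the defining Rule $\ruleref{cfun}{I}$: a closure $\Cl{K}{T}$ lies in $\CFun{\C_1}{\C_2}$ exactly when $\Cl{L}{\TAppl{V}T} \in \C_2$ for every $L$ with $\Drops{L}{K}$ and every $\Cl{L}{V} \in \C_1$. Each candidate rule for $\CFun{\C_1}{\C_2}$ then reduces to the corresponding rule for $\C_2$ after prefixing the fresh application $\TAppl{V}$ and using that $\C_2$ is a candidate; Rule $\ruleref{gcr}{J}$ for $\C_2$ supplies the normalization needed to feed Rule $\ruleref{gcr}{N}$, while the inclusion $\C_1 \subseteq \CSX{}{}$ supplies strong normalization of the fresh argument $V$. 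The one place demanding care is the definition rule $\ruleref{gcr}{\TAbbr{}{}}$, where the environment grows from $L$ to $L\LAbbr{x}{V}$ and the two have different lengths; here the side condition of Rule $\ruleref{cfun}{I}$, that $T$ not depend on the variables bound below $K$, together with the fact that $L$ extends $K$, is what lets the required extended environment still be formed.

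I expect the crux to be \thref{}{csx_gcr}, and within it the $\beta$-rule, because it is the only case where the head redex genuinely rewrites the term and where the applicative spine $\V{V}$, the explicit substitution created by $\beta$, and the ensuing $\theta$-commutations all interact. Organizing the argument so that \thref{}{cpxs_fwd_beta} cleanly separates outer from inner reductions while keeping the multiset measure over the strong-normalization derivations of $\V{V}$ decreasing is the delicate part, and \thref{lsubr}{lsubr_cpx_trans} is the tool that keeps the environment bookkeeping from derailing it.
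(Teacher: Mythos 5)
Your proposal is correct and matches the paper's proof in all essentials: \thref{gcr}{cpxs_fwd_beta} by following the sequence until the head $\beta$ fires (the $\theta$-swap being blocked because the function part is and remains a $\TAbst{\Y}{}{}$-abstraction, hence a weak head normal form), with \thref{lsubr}{lsubr_cpx_trans} via Rule $\ruleref{lsubr}{\beta}$ transferring the body reductions; the saturation conditions of \tabref{gcr} checked case by case with the $\beta$ case routed through \thref{gcr}{cpxs_fwd_beta}; and $\CFun{\C_1}{\C_2}$ handled by unfolding Rule $\ruleref{cfun}{I}$, including the paper's own remark about the length mismatch between $L$ and $L\LAbbr{x}{V}$ in the definition case. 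The only dependencies you leave implicit are \thref{lpxs}{lpx_cpx_trans}, which the paper invokes alongside \thref{lsubr}{lsubr_cpx_trans} in the $\beta$-contraction case --- your invariant that the redex components are ``reduced in place'' silently needs it to re-anchor body reductions performed under progressively reduced entries back to the original environment --- and \thref{req}{cpx_req_conf}, which plays the analogous environment-bookkeeping role in the paper's proof of \thref{gcr}{csx_gcr}.
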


\begin{proof}
\thref{}{cpxs_fwd_beta}
is proved by cases on $T_2$.
We need \thref{lsubr}{lsubr_cpx_trans} and \thref{lpxs}{lpx_cpx_trans}
when $T_2$ comes from a $\beta$-contraction.
\thref{}{csx_gcr} (the normalization theorem)
is proved by \thref{req}{cpx_req_conf} and by
lemmas like \thref{}{cpxs_fwd_beta}.
\thref{}{acr_gcr} is immediate.
\end{proof}

\subsection{Arity Assignment and Related Refinements}
\seclabel{lsubc}

Arities, also known as norms, are simple types representing
the abstract syntax of our reducibility candidates,
which are built from $\C \defeq \CSX{}{}$ and $\C \defeq \CFun{\C_1}{\C_2}$
of \secref{gcr}.
We introduce them in \tabref{acr} with their interpretation and
we assign them to terms according to the rules of \tabref{aaa},
that are an adaptation to our system of the well-established type rules of $\LR$.

\begin{table}
\seccaption{lsubc}
{Arities and their interpretation as reducibility candidates.}
\tablabel{acr}
\begin{tabular}{lr@{\;}l}
Arity:& 
$A,B \GDEF$&
$
\AAtom \GOR
\APair{B}{A}
$\\ 
\end{tabular}\nl

\begin{tabular}{c}
$\ACR{\AAtom} \defeq \CSX{}{}$\sep
$\ACR{\APair{B}{A}} \defeq \CFun{\ACR{B}}{\ACR{A}}$\\
\end{tabular}
\end{table}

\begin{table}
\seccaption{lsubc}
{Arity assignment.}
\tablabel{aaa}
\begin{tabular}{c}

\infer[\ruleref{aaa}{\TSort{}}]
{\AAA{L}{\TSRef{s}}{\AAtom}}
{}
\sep

\infer[\ruleref{aaa}{\TLRef{}}]
{\AAA{K\LPair{x}{V}}{\TNRef{x}}{A}}
{\AAA{K}{V}{A}}
\sep

\infer[\ruleref{aaa}{L}]
{\AAA{K\LPair{y}{V}}{\TNRef{x}}{A}}
{\AAA{K}{\TNRef{x}}{A}}
\nl

\infer[\ruleref{aaa}{\TAbst{}{}{}}]
{\AAA{L}{\TAbst{\Y}{x}{W}T}{\APair{B}{A}}}
{\AAA{L}{W}{B}&\AAA{L\LAbst{x}{W}}{T}{A}}
\sep

\infer[\ruleref{aaa}{\TAbbr{}{}}]
{\AAA{L}{\TAbbr{x}{V}T}{A}}
{\AAA{L}{V}{B}&\AAA{L\LAbbr{x}{V}}{T}{A}}
\nl

\infer[\ruleref{aaa}{\TAppl{}}]
{\AAA{L}{\TAppl{V}T}{A}}
{\AAA{L}{V}{B}&\AAA{L}{T}{\APair{B}{A}}}
\sep

\infer[\ruleref{aaa}{\TCast{}}]
{\AAA{L}{\TCast{U}T}{A}}
{\AAA{L}{U}{A}&\AAA{L}{T}{A}}
\nl

Rule $\ruleref{aaa}{L}$: $y \neq x$.\\

\end{tabular}
\end{table}

Arities clarify that our valid terms can have a simple type
with one base type $\AAtom$ and that our strong normalization
amounts to strong normalization in $\LR$.
In this respect the type assignment $\AAA{L}{T}{A}$ has two interpretations:
either $A$ is the simple type of the object $T$,
or $A$ is the simple type associated to the type $T$. 
\thref{lsuba}{} shows that this type assignment has important expected properties
like uniqueness, preservation by extended rt-reduction, decidability. 

The proof of preservation requires the refinement $\LSubA{L_1}{L_2}$
defined by the rules of \tabref{lsuba},
whose main properties are
\thref{lsuba}{lsuba_aaa_conf} and \thref{lsuba}{lsuba_aaa_trans}
that make $\LSubA{L_1}{L_2}$ transitive.

\begin{table}
\seccaption{lsubc}
{Refinement for preservation of arity.}
\tablabel{lsuba}
\begin{tabular}{c}

\infer[\ruleref{lsuba}{\LAtom}]
{\LSubA{\LAtom}{\LAtom}}
{}
\sep

\infer[\ruleref{lsuba}{B}]
{\LSubA{K_1\LPair{y}{V}}{K_2\LPair{y}{V}}}
{\LSubA{K_1}{K_2}}
\sep

\infer[\ruleref{lsuba}{\beta}]
{\LSubA{K_1\LAbbr{y}{\TCast{W}V}}{K_2\LAbst{y}{W}}}
{\LSubA{K_1}{K_2}&\AAA{K_1}{\TCast{W}V}{B}&\AAA{K_2}{W}{B}}
\\

\end{tabular}
\end{table}

\begin{theorem}[arities]\
\thslabel{lsuba}
\begin{enumerate}

\item\thlabel{aaa_mono}
\Caption{the arity of a term is unique}
If $\AAA{L}{T}{A_1}$
and $\AAA{L}{T}{A_2}$
then $A_1 = A_2$.

\item\thlabel{aaa_dec}
\Caption{arity inference is decidable}
We can decide $\MEX{A} \AAA{L}{T}{A}$.

\item\thlabel{lsuba_aaa_conf}
\Caption{weakening of arity assignment through its refinement}
If $\LSubA{L}{K}$
and $\AAA{L}{T}{A}$
then $\AAA{K}{T}{A}$.

\item\thlabel{lsuba_aaa_trans}
\Caption{strengthening of arity assignment through its refinement}
If $\LSubA{K}{L}$
and $\AAA{L}{T}{A}$
then $\AAA{K}{T}{A}$.

\item\thlabel{cpx_aaa_conf_lpx}
\Caption{arity is preserved by extended rt-reduction}
If $\AAA{L_1}{T_1}{A}$
and $\CPX{L_1}{T_1}{T_2}$
and $\LPX{L_1}{L_2}$
then $\AAA{L_2}{T_2}{A}$.

\end{enumerate}
\end{theorem}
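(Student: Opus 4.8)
The plan is to decouple the two simultaneous reductions. Since the premises $\CPX{L_1}{T_1}{T_2}$ and $\LPX{L_1}{L_2}$ both originate from the same environment $L_1$, I would first reduce the term in a fixed environment and then reduce the environment for the fixed resulting term. Concretely I would isolate two auxiliary facts: (i) if $\AAA{L}{T_1}{A}$ and $\CPX{L}{T_1}{T_2}$ then $\AAA{L}{T_2}{A}$, so arity is preserved by rt-reduction of the term in a fixed environment; and (ii) if $\AAA{L_1}{T}{A}$ and $\LPX{L_1}{L_2}$ then $\AAA{L_2}{T}{A}$, so arity is preserved by rt-reduction of the environment for a fixed term. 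Applying (i) to the first two premises gives $\AAA{L_1}{T_2}{A}$, and then applying (ii) with the third premise gives $\AAA{L_2}{T_2}{A}$, which is the statement.

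For (i) I would proceed by induction on the derivation of $\CPX{L}{T_1}{T_2}$ and by cases on its last rule, inverting the arity premise $\AAA{L}{T_1}{A}$ in each case. The generalized sort step $\ruleref{cpx}{s}$ is immediate since both sorts receive $\AAtom$; the t-steps $l$ and $e$ and the r-steps $\delta$, $\theta$, $\zeta$, $\epsilon$ reduce to the structural arity rules together with a routine thinning property of arity assignment, namely that an entry not referred by the term may be inserted or deleted, as is already sanctioned for variables by Rule $\ruleref{aaa}{L}$; the context rules follow from the inductive hypothesis, observing that Rule $\ruleref{cpx}{Pr}$ enters the extended environment $L\LPair{x}{V}$, which remains fixed across the sub-step.

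The crux of (i) is Rule $\ruleref{cpx}{\beta}$, where $\TAppl{V}\TAbst{\Y}{x}{W}T$ reduces to $\TAbbr{x}{\TCast{W}V}T$. Inverting the arity premise through Rules $\ruleref{aaa}{\TAppl{}}$ and $\ruleref{aaa}{\TAbst{}{}{}}$ yields some $B$ with $\AAA{L}{V}{B}$, $\AAA{L}{W}{B}$ and $\AAA{L\LAbst{x}{W}}{T}{A}$. Rule $\ruleref{aaa}{\TCast{}}$ then gives $\AAA{L}{\TCast{W}V}{B}$, and these data are exactly the three premises of Rule $\ruleref{lsuba}{\beta}$, so $\LSubA{L\LAbbr{x}{\TCast{W}V}}{L\LAbst{x}{W}}$. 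Strengthening through the refinement, that is \thref{lsuba}{lsuba_aaa_trans}, transports $\AAA{L\LAbst{x}{W}}{T}{A}$ into $\AAA{L\LAbbr{x}{\TCast{W}V}}{T}{A}$, whence Rule $\ruleref{aaa}{\TAbbr{}{}}$ assembles $\AAA{L}{\TAbbr{x}{\TCast{W}V}T}{A}$. This is precisely the configuration the refinement $\LSubA$ was introduced to handle, and I expect this $\beta$-case to be the main obstacle, alongside the subsidiary task of phrasing the thinning property at a generality that also dispatches the definition-introducing r-steps $\delta$, $\theta$ and $\zeta$.

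For (ii) I would induct on the derivation of $\AAA{L_1}{T}{A}$ and analyse $\LPX{L_1}{L_2}$. The only delicate case is a variable $\TNRef{x}$ resolved by Rule $\ruleref{aaa}{\TLRef{}}$ from $\AAA{K}{V}{A}$ with $L_1 = K\LPair{x}{V}$: if the $\LPX$-step rewrites the prefix $K$, the inductive hypothesis applies; if it rewrites the entry body $V$ by a $\CPX$-step, I invoke the already-established fact (i) on $V$ to obtain $\AAA{K}{V'}{A}$ and re-apply Rule $\ruleref{aaa}{\TLRef{}}$. The remaining cases are structural and propagate the environment reduction unchanged, extending it by the bound entry under the binder rules, with weakening \thref{lsuba}{lsuba_aaa_conf} available whenever a superfluous entry must be discarded. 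Composing (i) and (ii) as described then closes the proof.
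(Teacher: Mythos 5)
Your proposal proves only the last clause, \thref{lsuba}{cpx_aaa_conf_lpx}, and leaves the other four untouched: the paper proves \thref{lsuba}{aaa_mono}, \thref{lsuba}{lsuba_aaa_conf} and \thref{lsuba}{lsuba_aaa_trans} by induction on the arity premise and cases on the refinement premise (the latter two depending on \thref{lsuba}{aaa_mono}), and \thref{lsuba}{aaa_dec} by induction on the closure $\Cl{L}{T}$. Since you invoke \thref{lsuba}{lsuba_aaa_trans}, your argument presupposes those clauses are established first, which is consistent with the paper's own order of dependencies, but a complete answer should at least sketch them.

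For \thref{lsuba}{cpx_aaa_conf_lpx} itself your route is correct but genuinely different from the paper's. The paper proves the bundled statement directly, by a single induction on the arity derivation with simultaneous case analysis on the $\CPX$-step and the $\LPX$-step; you decouple it into (i) preservation under a term step in a fixed environment and (ii) preservation under an environment step for a fixed term, composed through the common source $L_1$. The composition is sound, and your $\beta$-case coincides exactly with the paper's crux: the only external ingredient the paper flags for Rule $\ruleref{cpx}{\beta}$ is \thref{lsuba}{lsuba_aaa_trans}, which you deploy on the refinement $\LSubA{L\LAbbr{x}{\TCast{W}V}}{L\LAbst{x}{W}}$ built by Rule $\ruleref{lsuba}{\beta}$ (reflexivity of $\LSubA$ coming from Rules $\ruleref{lsuba}{\LAtom}$ and $\ruleref{lsuba}{B}$). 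The thinning lemma you posit is genuinely needed --- for instance the case of Rule $\ruleref{cpx}{\#}$ must turn $\AAA{K}{V}{A}$ into $\AAA{K\LPair{x}{V}}{V}{A}$ --- and the paper's proof needs its analogue as well, so this is not an extra cost of your route. Two caveats. First, $\LSubA$ relates environments of equal length (all three rules of \tabref{lsuba} preserve length), so \thref{lsuba}{lsuba_aaa_conf} cannot ``discard a superfluous entry'' as you suggest; insertion and deletion of unreferred entries is entirely the thinning lemma's job. Second, your decoupling is valid for the sequential rules displayed in \tabref{cpx}, where Rules $\ruleref{cpx}{Pl}$ and $\ruleref{cpx}{Pr}$ touch one subterm at a time, but the certified specification uses parallel reduction (\appref{specification}); there a single step may rewrite both $V$ and $T$ in $\TPair{\Y}{x}{V}T$, and the inductive hypothesis for the body then needs the environment to step from $L\LPair{x}{V_1}$ to $L\LPair{x}{V_2}$ at the same time. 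That is exactly what the paper's combined statement buys --- an induction hypothesis stable under parallelism --- whereas your version buys two conceptually simpler lemmas that survive only in the sequential reading.
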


\begin{proof}
\thref{}{aaa_mono}, \thref{}{lsuba_aaa_conf} and \thref{}{lsuba_aaa_trans}
are proved by induction on the second premise
and by cases on the first premise.
Notice that \thref{}{lsuba_aaa_conf} and \thref{}{lsuba_aaa_trans}
depend on \thref{}{aaa_mono}.
\thref{}{aaa_dec}
is proved by induction on the closure $\Cl{L}{T}$
with some invocations of \thref{}{aaa_mono}.
\thref{}{cpx_aaa_conf_lpx}
is proved by induction on the first premise
and by cases on the other premises.
Notice that \thref{}{lsuba_aaa_trans}
is needed in the case of Rule $\ruleref{cpx}{\beta}$.
\end{proof}

Arities are the main ingredient of our reducibility theorem,
which states that $\AAA{L}{T}{A}$ implies $\Cl{L}{T} \in \ACR{A}$,
but we must generalize it as \thref{lsubc}{acr_aaa_lsubc_lifts}
in order to obtain suitable inductive hypotheses.
The refinement $\LSubC{L_1}{L_2}$ appearing in that statement
is defined by the rules of \tabref{lsubc}
and expresses a simultaneous substitution like the one we find 
in the reducibility theorem for $\SF$,
which is stated using the so-called parametric reducibility of \citeN{GTL89}.


\begin{table}
\seccaption{lsubc}
{Refinement for reducibility.}
\tablabel{lsubc}
\begin{tabular}{c}

\infer[\ruleref{lsubc}{\LAtom}]
{\LSubC{\LAtom}{\LAtom}}
{}
\sep

\infer[\ruleref{lsubc}{B}]
{\LSubC{K_1\LPair{y}{V}}{K_2\LPair{y}{V}}}
{\LSubC{K_1}{K_2}}
\nl

\infer[\ruleref{lsubc}{\beta}]
{\LSubC{K_1\LAbbr{y}{\TCast{W}V}}{K_2\LAbst{y}{W}}}
{\LSubC{K_1}{K_2}&\Cl{K_1}{W}\in\ACR{B}&\Cl{K_1}{V}\in\ACR{B}&\AAA{K_2}{W}{B}}
\\

\end{tabular}
\end{table}

Finally, the picture of our big-tree \thref{lsubc}{cnv_fwd_fsb}
is completed by \thref{lsubc}{cnv_fwd_aaa} stating that
$\CNV{\A}{L}{T}$ implies $\AAA{L}{T}{A}$ for some $A$,
\ie every valid term is normable in its environment.

We stress that the converse does not hold
since there are normable terms which are not valid.

\begin{theorem}[strongly normalizing forms, part two]\
\thslabel{lsubc}
\begin{enumerate}

\item\thlabel{cnv_fwd_aaa}
\Caption{valid terms have an arity}
If $\CNV{\A}{L}{T}$
then there exists $A$
such that $\AAA{L}{T}{A}$.

\item\thlabel{acr_aaa_lsubc_lifts}
\Caption{terms with an arity belong to the corresponding reducibility candidate}
If $\LSubC{L_1}{L_2}$
and $\Drops{L_2}{K_2}$
and $\AAA{K_2}{T}{A}$
then $\Cl{L_1}{T}\in\ACR{A}$.

\item\thlabel{cnv_fwd_fsb}
\Caption{valid terms are strongly normalizing with respect to qrst-reduction}
If $\CNV{\A}{L}{T}$
then $\FSB{L}{T}$.

\end{enumerate}
\end{theorem}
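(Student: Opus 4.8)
The plan is to prove the three parts in the listed order, treating \thref{}{acr_aaa_lsubc_lifts} as the real content (the reducibility theorem) and deriving \thref{}{cnv_fwd_fsb} at once by combining it with \thref{}{cnv_fwd_aaa} and earlier results. For \thref{}{cnv_fwd_aaa} I would proceed by induction on the derivation of $\CNV{\A}{L}{T}$ and by cases on its last rule, matching each validity rule of \tabref{cnv} with the corresponding arity rule of \tabref{aaa}. The sort, variable and pair cases are routine. The delicate cases are Rule $\ruleref{cnv}{\TCast{}}$ and Rule $\ruleref{cnv}{\TAppl{}}$, where validity only guarantees r-conversion to a matching shape rather than syntactic identity. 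Here I would iterate preservation of arity by extended rt-reduction (\thref{lsuba}{cpx_aaa_conf_lpx}, with the environment component taken reflexively) along the $\CPMS$ computations, and invoke uniqueness of arity (\thref{lsuba}{aaa_mono}): the inferred type reached by the t-steps and the annotation $U$ (respectively, the functional shape $\TAbst{\Y}{x}{W_0}U_0$ and the type $W_0$ of the argument) must then carry exactly the arities demanded by Rule $\ruleref{aaa}{\TCast{}}$ (respectively Rule $\ruleref{aaa}{\TAppl{}}$), so the arity of $T$ can be assembled.

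The core is \thref{}{acr_aaa_lsubc_lifts}, which I would prove by induction on the arity assignment $\AAA{K_2}{T}{A}$ and by cases on its last rule, exploiting that every $\ACR{A}$ is a reducibility candidate (\thref{gcr}{csx_gcr} and \thref{gcr}{acr_gcr}), so that all the saturation rules of \tabref{gcr} and the functional introduction Rule $\ruleref{cfun}{I}$ are available. The cast and application cases follow directly from Rule $\ruleref{gcr}{\TCast{}}$ and from the functional construction (instantiating $\ruleref{cfun}{I}$ with the reflexive drop and the argument obtained from the inductive hypothesis). The definition case extends the refinement by Rule $\ruleref{lsubc}{B}$ and then applies Rule $\ruleref{gcr}{\TAbbr{}{}}$, whose premise $\CSX{L_1}{V}$ is the candidate membership of $V$ read through Girard's condition CR1 (Rule $\ruleref{gcr}{J}$). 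The variable case traces the entry for $x$ through $\LSubC{L_1}{L_2}$ and $\Drops{L_2}{K_2}$: along the $\ruleref{lsubc}{B}$ path the value is unchanged, while along the $\ruleref{lsubc}{\beta}$ path it is a cast $\TCast{W}V$, whose membership in $\ACR{A}$ is recovered from the recorded premises $\Cl{L_1}{W}\in\ACR{B}$ and $\Cl{L_1}{V}\in\ACR{B}$ via Rule $\ruleref{gcr}{\TCast{}}$; in both subcases Rule $\ruleref{gcr}{\TLRef{}}$ then places $\Cl{L_1}{\TNRef{x}}$ in the candidate.

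The main obstacle is the abstraction case, which is exactly the parametric-reducibility step in Girard's style. To show $\Cl{L_1}{\TAbst{\Y}{x}{W}T}\in\CFun{\ACR{B}}{\ACR{A}}$ I would unfold Rule $\ruleref{cfun}{I}$: for every extension $\Drops{L}{L_1}$ and every $\Cl{L}{V}\in\ACR{B}$ one must establish $\Cl{L}{\TAppl{V}\TAbst{\Y}{x}{W}T}\in\ACR{A}$, which by Rule $\ruleref{gcr}{\beta}$ reduces to $\Cl{L}{\TAbbr{x}{\TCast{W}V}T}\in\ACR{A}$. I would obtain the latter from the inductive hypothesis on the body $T$ (which has arity $A$ in $K_2\LAbst{x}{W}$) after extending the refinement by Rule $\ruleref{lsubc}{\beta}$, discharging that rule's premises $\Cl{L}{W}\in\ACR{B}$ and $\Cl{L}{V}\in\ACR{B}$ from the inductive hypothesis on $W$ (suitably lifted along the drop) and from the given membership of $V$, and matching the arity premise from the subderivation of Rule $\ruleref{aaa}{\TAbst{}{}{}}$. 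The genuinely hard part is the synchronization of four moving pieces under the binder, namely the drop $\Drops{}{}$, the refinement $\LSubC{}{}$, the candidate memberships and the arity assignment, so that the generalized statement (this is what the $\LSubC$ and $\Drops$ hypotheses are designed for) keeps them in step as entries are added on both sides.

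Finally, \thref{}{cnv_fwd_fsb} is immediate. From \thref{}{cnv_fwd_aaa} a valid $T$ has some arity $A$ in $L$, and instantiating \thref{}{acr_aaa_lsubc_lifts} with the reflexive refinement $\LSubC{L}{L}$ and the reflexive drop $\Drops{L}{L}$ yields $\Cl{L}{T}\in\ACR{A}$. By Rule $\ruleref{gcr}{J}$ (Girard's CR1) this gives $\CSX{L}{T}$, and \thref{fsb}{csx_fsb_fpbs} applied to the reflexive qrst-reduction $\FPBS{L}{T}{L}{T}$ concludes $\FSB{L}{T}$.
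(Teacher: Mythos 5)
Your proposal reproduces the paper's proof essentially step for step: part (1) by induction on the validity premise with \thref{lsuba}{aaa_mono} and \thref{lsuba}{cpx_aaa_conf_lpx} invoked exactly in the cast and application cases, part (2) as the reducibility theorem via the statement generalized over $\LSubC{L_1}{L_2}$ and $\Drops{L_2}{K_2}$ with \thref{lsuba}{aaa_mono} doing the arity bookkeeping at the variable entries, and part (3) by precisely the paper's chain of \thref{lsubc}{cnv_fwd_aaa}, \thref{lsubc}{acr_aaa_lsubc_lifts}, Rule $\ruleref{gcr}{J}$ and \thref{fsb}{csx_fsb_fpbs}. The only, immaterial, deviation is that the paper runs part (2) by well-founded induction on the closure $\Cl{K_2}{T}$ rather than by structural induction on the arity derivation; since every premise of an arity rule sits at a proper subclosure, the two inductions supply the same hypotheses here.
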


\begin{proof}
\thref{}{cnv_fwd_aaa}
is proved by induction on the premise
with invocations of \thref{lsuba}{aaa_mono} and \thref{lsuba}{cpx_aaa_conf_lpx}
in the cases of Rule $\ruleref{aaa}{\TAppl{}}$
and Rule $\ruleref{aaa}{\TCast{}}$.
\thref{}{acr_aaa_lsubc_lifts} (the reducibility theorem)
is proved by induction on the closure $\Cl{K_2}{T}$
and by cases on the premises
with an invocation of \thref{lsuba}{aaa_mono}.
\thref{}{cnv_fwd_fsb} (the very-big-tree theorem)
follows from \thref{fsb}{csx_fsb_fpbs}, \thref{gcr}{csx_gcr},
\thref{}{acr_aaa_lsubc_lifts} and \thref{}{cnv_fwd_aaa}.
Notice that $\Cl{L}{T}\in\ACR{A}$ implies $\CSX{L}{T}$ by Rule $\ruleref{gcr}{J}$,
\end{proof}

\section{Preservation and Confluence for Bound RT-Reduction}
\seclabel{preservation}

In this section we present two mutually dependent main results on bound rt-reduction
(\secref{lsubv}) and their general consequences on the calculus.
Remarkably, among these we find the decidability of parametric validity
(\secref{cpes}) that stands on the existence of suitable normal forms
(\secref{cpmuwe}).

Other consequences include the fact that the type judgment
we defined in \secref{nta} has many desired properties (\secref{types}).
An axiomatization of this judgment in the general case,
when the applicability domain is not fixed in advance,
requires to introduce iterated typing (\secref{ntas}).

\subsection{Bound RT-Reduction of Valid Terms and Related Notions}
\seclabel{lsubv}

In this section we discuss the fact that bound rt-reduction
(\secref{cpms}) preserves parametric validity (\secref{nta}),
\ie if $\CNV{\A}{L}{T_1}$ and $\CPM{n}{L}{T_1}{T_2}$ then $\CNV{\A}{L}{T_2}$.
A difficulty arises because this property mutually depends on
another interesting invariant: the confluence of rt-reduction,
\ie if $\CNV{\A}{L}{T_0}$ then $\CPMS{n_1}{L}{T_0}{T_1}$ and
$\CPMS{n_2}{L}{T_0}{T_2}$ imply $\CPMS{n_2-n_1}{L}{T_1}{T}$ and
$\CPMS{n_1-n_2}{L}{T_2}{T}$ for some term $T$. Thus the two properties
must be proved together. Moreover the proof needs the well-founded induction
on the qrst-reducts of $\Cl{L}{T_0}$ provided by the very-big-tree
\thref{lsubc}{cnv_fwd_fsb} (\secref{lsubc}).
Notice that in our setting $0-n = 0$, thus $n_1-n_2 = 0$ or $n_2-n_1 = 0$.
Furthermore, notice that the premise $\CNV{\A}{L}{T_0}$
is essential in this confluence theorem since the critical pair
$(\epsilon, e)$ on $\Cl{L}{\TCast{U_0}T_0}$ may not be confluent
without $\CNV{\A}{L}{\TCast{U_0}T_0}$.
We remark that in the special case of r-reduction
(\secref{lprs}), where $n_1 = n_2 = 0$, the premise is not
needed because the critical pair does not appear.
Another difficulty arises since bound rt-reduction has the kite property
rather than the diamond property, hence we cannot reach its confluence
through a so-called strip lemma.

Intuitively, if we are proving the confluence
of $\CPMSsn{n_1}{m_1}{L}{T_0}{T_1}{U_1}$
and $\CPMSsn{n_2}{m_2}{L}{T_0}{T_2}{U_2}$,
we can apply the kite lemma to the closure $\Cl{L}{T_0}$,
but the inductive hypotheses
apply to $\Cl{L}{T_1}$ only if $\TNEQX{T_0}{T_1}$ and 
apply to $\Cl{L}{T_2}$ only if $\TNEQX{T_0}{T_2}$.
In either case we can easily proceed, but
a difficulty arises when $\TEQX{T_0}{T_1}$ and $\TEQX{T_0}{T_2}$.
In that event we are lucky because the reduction $\CPM{n}{L}{T_1}{T_2}$
with the restriction $\TEQX{T_1}{T_2}$ has the diamond property 
(\thref{lsubv}{cnv_cpm_teqx_conf_lpr_aux}),
and not just the kite property.
So the confluence of its sequences easily comes from a strip lemma.
In addition, by \thref{lsubv}{cnv_cpm_teqx_cpm_trans_sub}
we can move these restricted reductions at the end of reduction sequences. 

We stress that the mentioned diamond property holds because
just generalized t-steps of Rule $\ruleref{cpx}{s}$
are possible in the aforesaid restricted reduction,
thus the critical pair $(\epsilon, e)$ does not appear.

As a side remark, notice that the confluence theorem
stands on the next arithmetical property:
\begin{equation}
\eqnlabel{arith_l4}
(n_2+m_2)-(n_1+m_1) = (n_2-n_1)-m_1+(m_2-(n_1-n_2)-(m_1-(n_2-n_1)))
\end{equation}

Following \citeN{Dln80}, we introduce in \tabref{sub}
some definitions for the predicates of interest.
They are $\PropD{}{}{}$ (diamond), $\PropS{}{}{}$ (swap),
$\PropK{}{}{}$ (kite), $\PropC{}{}{}$ (confluence)
and $\PropP{}{}{}$ (preservation).

\begin{table}
\seccaption{lsubv}
{Predicates for preservation and confluence of bound rt-reduction.}
\tablabel{sub}
\begin{tabular}{lll}

$\PropD{\A}{L_0}{T_0}$&is&
if $\CNV{\A}{L_0}{T_0}$
and $\LPR{L_0}{L_1}$
and $\LPR{L_0}{L_2}$\\&&
and $\CPM{n_1}{L_0}{T_0}{T_1}$ and $\TEQX{T_0}{T_1}$\\&&
and $\CPM{n_2}{L_0}{T_0}{T_2}$ and $\TEQX{T_0}{T_2}$ and $\TNEQX{T_1}{T_2}$
then there exists $T$\\&&
such that $\CPM{n_2-n_1}{L_1}{T_1}{T}$ and $\TEQX{T_1}{T}$
and $\CPM{n_1-n_2}{L_2}{T_2}{T}$ and $\TEQX{T_2}{T}$\nl

$\PropS{\A}{L}{T_1}$&is&
if $\CNV{\A}{L}{T_1}$
and $\CPM{n_1}{L}{T_1}{T_0}$
and $\TEQX{T_1}{T_0}$
and $\CPM{n_2}{L}{T_0}{T_2}$\\&&
then there exists $T$
such that $\CPM{n_2}{L}{T_1}{T}$
and $\CPM{n_1}{L}{T}{T_2}$
and $\TEQX{T}{T_2}$\nl

$\PropK{\A}{L_0}{T_0}$&is&
if $\CNV{\A}{L_0}{T_0}$
and $\LPR{L_0}{L_1}$
and $\LPR{L_0}{L_2}$\\&&
and $\CPM{n_1}{L_0}{T_0}{T_1}$
and $\CPM{n_2}{L_0}{T_0}{T_2}$ 
then there exists $T$\\&&
such that $\CPMS{n_2-n_1}{L_1}{T_1}{T}$
and $\CPMS{n_1-n_2}{L_2}{T_2}{T}$\nl

$\PropC{\A}{L_0}{T_0}$&is&
if $\CNV{\A}{L_0}{T_0}$
and $\LPR{L_0}{L_1}$
and $\LPR{L_0}{L_2}$\\&&
and $\CPMS{n_1}{L_0}{T_0}{T_1}$
and $\CPMS{n_2}{L_0}{T_0}{T_2}$ 
then there exists $T$\\&&
such that $\CPMS{n_2-n_1}{L_1}{T_1}{T}$
and $\CPMS{n_1-n_2}{L_2}{T_2}{T}$\nl

$\PropP{\A}{L_1}{T_1}$&is&
if $\CNV{\A}{L_1}{T_1}$
and $\LPR{L_1}{L_2}$
and $\CPM{n}{L_1}{T_1}{T_2}$
then $\CNV{\A}{L_2}{T_2}$\nl

\multicolumn{3}{c}{Variables on the right
not occurring on the left are universally quantified.}\\

\end{tabular}
\end{table}

Moreover, we introduce the handy relation
$\FPBG{T_1}{L_1}{T_2}{L_2}$
defined by the rule of \tabref{fpbg}.
These definitions simplify the statements of the main \thref{lsubv}{cnv_preserve}
and of its prerequisites.

It might be the case that
the premise $\CNV{\A}{L_0}{T_0}$ of $\PropD{\A}{L_0}{T_0}$ and
the premise $\CNV{\A}{L}{T_1}$ of $\PropS{\A}{L}{T_1}$
could be avoided. 
Anyway, the proofs we developed so far use these premises indeed.

\begin{table}
\seccaption{lsubv}
{Extended qrst-reduction for closures (proper sequence of steps).}
\tablabel{fpbg}
\begin{tabular}{c}

\infer[\ruleref{fpbg}{I}]
{\FPBG{L_1}{T_1}{L_4}{T_4}}
{\FPBS{L_1}{T_1}{L_2}{T_2}&
 \FPB{L_2}{T_2}{L_3}{T_3}&
 \FPBS{L_3}{T_3}{L_4}{T_4}&\FNEQX{L_2}{T_2}{L_3}{T_3}}
\\

\end{tabular}
\end{table}

In the proof of the preservation \thref{lsubv}{cnv_cpm_trans_lpr_aux},
\ie $\MALL{L,T} \PropP{\A}{L}{T}$ under hypotheses, a refinement
$\LSubV{\A}{L_1}{L_2}$ is needed to handle the case of the $\beta$-rule
as it happens for \thref{lprs}{cpr_conf_lpr} (\secref{lprs})
and for \thref{lsuba}{cpx_aaa_conf_lpx} (\secref{lsubc}).
This refinement is defined by the rules of \tabref{lsubv}
and, notably, is transitive because of \thref{lsubv}{lsubv_cnv_trans},
which is its most important property.

\begin{table}
\seccaption{lsubv}
{Refinement for the preservation of validity.}
\tablabel{lsubv}
\begin{tabular}{c}

\infer[\ruleref{lsubv}{\LAtom}]
{\LSubV{\A}{\LAtom}{\LAtom}}
{}
\sep

\infer[\ruleref{lsubv}{B}]
{\LSubV{\A}{K_1\LPair{x}{V}}{K_2\LPair{x}{V}}}
{\LSubV{\A}{K_1}{K_2}}
\sep

\infer[\ruleref{lsubv}{\beta}]
{\LSubV{\A}{K_1\LAbbr{x}{\TCast{W}V}}{K_2\LAbst{x}{W}}}
{\LSubV{\A}{K_1}{K_2}&\CNV{\A}{K_1}{\TCast{W}V}}
\\

\end{tabular}
\end{table}

\begin{theorem}[bound rt-reduction of valid terms]\
\thslabel{lsubv}
\begin{enumerate}

\item\thlabel{lsubv_cnv_trans}
\Caption{strengthening of validity through its refinement}
If $\LSubV{\A}{K}{L}$
and $\CNV{\A}{L}{T}$
then $\CNV{\A}{K}{T}$.

\item\thlabel{cnv_cpm_teqx_conf_lpr_aux}
\Caption{diamond confluence of restricted bound rt-reduction with itself under hypotheses}
If $\MALL{L_2,T_2} \FQUP{L_1}{T_1}{L_2}{T_2} \MIMP \PropD{\A}{L_2}{T_2}$
then $\PropD{\A}{L_1}{T_1}$.

\item\thlabel{cnv_cpm_teqx_cpm_trans_sub}
\Caption{diamond confluence of restricted bound rt-reduction with bound rt-reduction under hypotheses}
If $\MALL{L_2,T_2} \FPBG{L_1}{T_1}{L_2}{T_2} \MIMP \PropP{\A}{L_2}{T_2}$\\
and $\MALL{L_2,T_2} \FQUP{L_1}{T_1}{L_2}{T_2} \MIMP \PropS{\A}{L_2}{T_2}$
then $\PropS{\A}{L_1}{T_1}$.

\item\thlabel{cnv_cpm_conf_lpr_aux}
\Caption{kite confluence of bound rt-reduction with itself under hypotheses}
If $\MALL{L_2,T_2} \FPBG{L_1}{T_1}{L_2}{T_2} \MIMP \PropC{\A}{L_2}{T_2}$\\
and $\MALL{L_2,T_2} \FPBG{L_1}{T_1}{L_2}{T_2} \MIMP \PropP{\A}{L_2}{T_2}$
then $\PropK{\A}{L_1}{T_1}$.

\item\thlabel{cnv_cpm_trans_lpr_aux}
\Caption{preservation of validity by bound rt-reduction under hypotheses}
If $\MALL{L_2,T_2} \FPBG{L_1}{T_1}{L_2}{T_2} \MIMP \PropC{\A}{L_2}{T_2}$\\
and $\MALL{L_2,T_2} \FPBG{L_1}{T_1}{L_2}{T_2} \MIMP \PropP{\A}{L_2}{T_2}$
then $\PropP{\A}{L_1}{T_1}$.

\item\thlabel{cnv_cpms_conf_lpr_aux}
\Caption{full confluence of bound rt-reduction with itself under hypotheses}
If $\MALL{L_2,T_2} \FPBG{L_1}{T_1}{L_2}{T_2} \MIMP \PropC{\A}{L_2}{T_2}$\\
and $\MALL{L_2,T_2} \FPBG{L_1}{T_1}{L_2}{T_2} \MIMP \PropP{\A}{L_2}{T_2}$
then $\PropC{\A}{L_1}{T_1}$.

\item\thlabel{cnv_preserve}
\Caption{bound rt-reduction preserves validity and is confluent on valid terms}
If $\CNV{\A}{L}{T}$ then $\PropC{\A}{L}{T}$ and $\PropP{\A}{L}{T}$.

\end{enumerate}
\end{theorem}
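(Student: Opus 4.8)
The plan is to close the whole development with a single big-tree induction, using the two ``under hypotheses'' lemmas of this group as the inductive step. The key observation is that \thref{}{cnv_cpm_trans_lpr_aux} and \thref{}{cnv_cpms_conf_lpr_aux} take exactly the same two antecedents, namely $\MALL{L_2,T_2}\FPBG{L_1}{T_1}{L_2}{T_2}\MIMP\PropC{\A}{L_2}{T_2}$ and $\MALL{L_2,T_2}\FPBG{L_1}{T_1}{L_2}{T_2}\MIMP\PropP{\A}{L_2}{T_2}$, and conclude respectively $\PropP{\A}{L_1}{T_1}$ and $\PropC{\A}{L_1}{T_1}$. Hence both desired conclusions follow at once from the conjunction of confluence and preservation on all proper qrst-reducts of the current closure, which is precisely what a well-founded induction on proper qrst-reduction makes available.

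First I would secure the well-founded measure. By the very-big-tree \thref{lsubc}{cnv_fwd_fsb}, the hypothesis $\CNV{\A}{L}{T}$ yields $\FSB{L}{T}$, so the closure $\Cl{L}{T}$ is strongly qrst-normalizing. Since accessibility is downward closed, this makes proper qrst-reduction $\FPBG$ well-founded below $\Cl{L}{T}$ and licenses the big-tree induction described in \secref{introduction}; in particular the reducts encountered along the way need not be assumed valid, as their accessibility is already guaranteed.

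Then I would prove the conjunction of $\PropC{\A}{L}{T}$ and $\PropP{\A}{L}{T}$ by this induction, carrying both properties together because each of the two lemmas consumes both. At the current closure the induction hypothesis supplies, for every proper qrst-reduct $\Cl{L_2}{T_2}$, \ie every $\Cl{L_2}{T_2}$ with $\FPBG{L}{T}{L_2}{T_2}$, both $\PropC{\A}{L_2}{T_2}$ and $\PropP{\A}{L_2}{T_2}$; projecting this conjunction gives exactly the two antecedents required above. Feeding them into \thref{}{cnv_cpms_conf_lpr_aux} yields $\PropC{\A}{L}{T}$ and into \thref{}{cnv_cpm_trans_lpr_aux} yields $\PropP{\A}{L}{T}$, closing the induction; unfolding the two predicates on the original data then produces the two asserted statements.

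The genuine difficulty is not in this final assembly but in the auxiliary lemmas it invokes, which we may assume. What makes the choice of induction delicate is that bound rt-reduction enjoys only the kite property rather than the diamond property, so one cannot run a naive strip-lemma induction on reduction length; the subterm steps and the rt-steps for environments packaged into qrst-reduction are exactly what supply a well-founded order on which confluence and preservation can be proved simultaneously. I would also stress that, because $\PropC$ and $\PropP$ each carry their own $\CNV$ guard, the induction hypothesis is vacuously satisfied on non-valid reducts, so applying it at the assembly level requires no separate preservation-of-validity bookkeeping.
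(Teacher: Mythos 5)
Your assembly of \thref{lsubv}{cnv_preserve} coincides with the paper's own proof: a single big-tree induction on the closure $\Cl{L}{T}$, licensed by the very-big-tree \thref{lsubc}{cnv_fwd_fsb}, carrying $\PropC{\A}{L}{T}$ and $\PropP{\A}{L}{T}$ jointly and discharging the shared antecedents of \thref{lsubv}{cnv_cpms_conf_lpr_aux} and \thref{lsubv}{cnv_cpm_trans_lpr_aux} by projecting the inductive hypothesis; your remark that the $\CNV$ guards inside $\PropC$ and $\PropP$ render the hypothesis harmless on invalid reducts is exactly how the development avoids separate validity bookkeeping. One gloss deserves tightening: $\FSB{L}{T}$ is accessibility with respect to \emph{single} proper qrst-steps, so inducting on the proper-sequence relation $\FPBG{L}{T}{L_2}{T_2}$ needs the observation, recorded after \tabref{fsb}, that $\FPB{L_1}{T_1}{L_2}{T_2}$ can be replaced by $\FPBS{L_1}{T_1}{L_2}{T_2}$ in Rule $\ruleref{fsb}{S}$ --- ``accessibility is downward closed'' by itself does not say that the improper q-steps and generalized t-steps interleaved in a proper sequence preserve the measure.

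The genuine gap is one of coverage: the statement you were asked to prove is the whole of \thref{lsubv}{}, items 1--6 included, and your proposal assumes precisely those items (``which we may assume''). This leaves the substantive content unproven. The paper establishes item 1 by induction on the validity derivation with \thref{lsubr}{lsubr_cpm_trans} in the cases of Rule $\ruleref{cnv}{\TAppl{}}$ and Rule $\ruleref{cnv}{\TCast{}}$; items 2 and 3 by cases on $T_1$ with \thref{lsubc}{cnv_fwd_fsb}; item 4 by cases on $T_1$ with \thref{lprs}{cpr_conf_lpr}; item 5 by cases on $T_1$ with item 1, \thref{lprs}{cprs_conf} and \thref{acle}{cnv_fwd_cpms_abst_dx_le}; and item 6 from items 2--5 --- this last assembly is where the restricted diamond property and the relocation of restricted steps to the end of sequences do the work that a naive strip lemma cannot, as you correctly diagnose in your closing paragraph. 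So your text is a faithful reconstruction of the capstone step only; the $\beta$-cases hiding behind the refinement $\LSubV{\A}{K_1}{K_2}$ --- the reason item 1 is on the list at all --- and the case analyses on $T_1$ that constitute the mutual dependency of preservation and confluence remain untouched.
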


\begin{proof}
\thref{}{lsubv_cnv_trans}
is proved by induction on the second premise
and by cases on the first premise
with the help of \thref{lprs}{lsubr_cpm_trans}
in the cases of Rule $\ruleref{cnv}{\TAppl{}}$
and Rule $\ruleref{cnv}{\TCast{}}$.
\thref{}{cnv_cpm_teqx_conf_lpr_aux} and \thref{}{cnv_cpm_teqx_cpm_trans_sub}
are proved by cases on $T_1$
with the help of \thref{lsubc}{cnv_fwd_fsb}.
\thref{}{cnv_cpm_conf_lpr_aux}
is proved by cases on $T_1$
with the help of \thref{lprs}{cpr_conf_lpr}.
\thref{}{cnv_cpm_trans_lpr_aux}
is proved by cases on $T_1$
with the help of \thref{}{lsubv_cnv_trans},
\thref{lprs}{cprs_conf} and \thref{acle}{cnv_fwd_cpms_abst_dx_le}.
\thref{}{cnv_cpms_conf_lpr_aux}
follows from \thref{}{cnv_cpm_teqx_conf_lpr_aux}, \thref{}{cnv_cpm_teqx_cpm_trans_sub},
\thref{}{cnv_cpm_conf_lpr_aux} and \thref{}{cnv_cpm_trans_lpr_aux}.
\thref{}{cnv_preserve} (the preservation theorem
and the confluence of bound rt-reduction)
is proved at once by big-tree induction on the closure $\Cl{L}{T}$
with the help of \thref{}{cnv_cpm_trans_lpr_aux}
and \thref{}{cnv_cpms_conf_lpr_aux}.
\end{proof}

\subsection{RT-Reduction to Normal Form}
\seclabel{cpmuwe}

The fact that we can rt-reduce the valid terms to suitable normal forms
is a key step towards the decidability of parametric validity (\secref{cpes}).
In particular we decide the existence of the common reducts
$W_0$ in Rule $\ruleref{cnv}{\TAppl{}}$
and $U_0$ in Rule $\ruleref{cnv}{\TCast{}}$
by comparing full r-normal forms.
Moreover, we decide if $T$ rt-reduces to a $\TAbst{\Y}{}{}$-form
in Rule $\ruleref{cnv}{\TAppl{}}$
by inspecting its weak head rt-normal form.

The normal forms we need are formally introduced next.
In \tabref{cpmre} we define the predicate $\CNR{L}{T}$,
meaning that $T$ is a full r-normal form in $L$,
and then the relation $\CPMRE{n}{L}{T_1}{T_2}$, meaning that
the term $T_1$ rt-reduces to a full r-normal form $T_2$ in $L$.
Notice that we could use single-step rt-reduction in Rule $\ruleref{cpmre}{N}$
and that this rule follows the pattern of Rule $\ruleref{csx}{N}$ (\secref{fsb}).

\begin{table}
\seccaption{cpmuwe}
{Bound rt-reduction to full r-normal form.}
\tablabel{cpmre}
\begin{tabular}{c}

\infer[\ruleref{cpmre}{N}]
{\CNR{L}{T_1}}
{\MALL{T_2}
 \MATOM{\CPMS{0}{L}{T_1}{T_2}} \MIMP
 \MATOM{\TEQ{T_1}{T_2}}
}
\sep

\infer[\ruleref{cpmre}{E}]
{\CPMRE{n}{L}{T_1}{T_2}}
{\CPMS{n}{L}{T_1}{T_2}&\CNR{L}{T_2}}
\\

\end{tabular}
\end{table}

We follow the same pattern in \tabref{cpmuwe} where we define
the relation $\CPMUWE{n}{L}{T_1}{T_2}$, meaning that
the term $T_1$ rt-reduces to a weak head rt-normal form $T_2$ in $L$.
This is based on the equivalence $\TEQW{T_1}{T_2}$ defined by the rules of \tabref{teqw}.
The reader may want to compare this relation with $\TEQX{T_1}{T_2}$ (\tabref{teqx}).
We stress that we must use multi-step rt-reduction in Rule $\ruleref{cpmuwe}{N}$,
otherwise the term $T_1 = \TAbbr{x}{\TSRef{s}}\TAbst{\Y}{y}{\TSRef{s}}{x}$
would be in weak head rt-normal form since
$\CPM{n}{L}{T_1}{X}$ implies
$X = T_2 = \TAbbr{x}{\TSRef{s}}\TAbst{\Y}{y}{\TSRef{s}}{\TSRef{s}}$
and $\TEQW{T_1}{X}$ holds because of the $\TAbst{}{}{}$-forms.
On the other hand $\CPMS{n}{L}{T_1}{X}$ admits also the solutions
$X = T_1$ and $X = T_3 = \TAbst{\Y}{y}{\TSRef{s}}{\TSRef{s}}$
that is the $\zeta$-contraction of $T_2$.

\begin{table}
\seccaption{cpmuwe}
{Equivalence for weak head rt-normal form.}
\tablabel{teqw}
\begin{tabular}{c}

\infer[\ruleref{teqw}{\TSort{}}]
{\TEQW{\TSRef{s_1}}{\TSRef{s_2}}}
{}
\sep

\infer[\ruleref{teqw}{\TLRef{}}]
{\TEQW{\TNRef{x}}{\TNRef{x}}}
{}
\sep

\infer[\ruleref{teqw}{\TAbst{}{}{}}]
{\TEQW{\TAbst{\Y}{x}{W_1}T_1}{\TAbst{\Y}{x}{W_2}T_2}}
{}
\sep

\infer[\ruleref{teqw}{\TAbbr{}{}}]
{\TEQW{\TAbbr{x}{V_1}T_1}{\TAbbr{x}{V_2}T_2}}
{\TEQW{T_1}{T_2}}
\nl

\infer[\ruleref{teqw}{\TAppl{}}]
{\TEQW{\TAppl{V}T_1}{\TAppl{V}T_2}}
{\TEQW{T_1}{T_2}}
\sep

\infer[\ruleref{teqw}{\TCast{}}]
{\TEQW{\TCast{U_1}T_1}{\TCast{U_2}T_2}}
{\TEQW{U_1}{U_2}&\TEQW{T_1}{T_2}}
\\

\end{tabular}
\end{table}

\begin{table}
\seccaption{cpmuwe}
{Bound rt-reduction to weak head rt-normal form.}
\tablabel{cpmuwe}
\begin{tabular}{c}

\infer[\ruleref{cpmuwe}{N}]
{\CNUW{L}{T_1}}
{\MALL{n,T_2}
 \MATOM{\CPMS{n}{L}{T_1}{T_2}} \MIMP
 \MATOM{\TEQW{T_1}{T_2}}
}
\sep

\infer[\ruleref{cpmuwe}{E}]
{\CPMUWE{n}{L}{T_1}{T_2}}
{\CPMS{n}{L}{T_1}{T_2}&\CNUW{L}{T_2}}
\\

\end{tabular}
\end{table}

The existence and uniqueness conditions for these normal forms 
are listed in the next theorem.
The relation $\CPES{n_1}{n_2}{L}{T_1}{T_2}$ in \thref{}{cnv_cpmuwe_mono}
is the rt-conversion that we introduce in \secref{cpes}.

\begin{theorem}[rt-reduction to normal form]\
\thslabel{cpmuwe}
\begin{enumerate}

\item\thlabel{cpms_total_aaa}
\Caption{terms with an arity rt-reduce for each $n$}
If $\AAA{L}{T_1}{A}$
then there exists $T_2$
such that $\CPMS{n}{L}{T_1}{T_2}$.

\item\thlabel{cpmre_total_aaa}
\Caption{terms with an arity rt-reduce to an r-normal form for each $n$}
If $\AAA{L}{T_1}{A}$
then there exists $T_2$
such that $\CPMRE{n}{L}{T_1}{T_2}$.

\item\thlabel{cpmuwe_total_csx}
\Caption{strongly normalizing terms rt-reduce to a weak head rt-normal form for some $n$}
If $\CSX{L}{T_1}$
then there exist $n$ and $T_2$
such that $\CPMUWE{n}{L}{T_1}{T_2}$.

\item\thlabel{cnv_cpmre_mono}
\Caption{the r-normal form valid terms rt-reduce to is unique for each $n$}
If $\CNV{\A}{L}{T}$
and $\CPMRE{n}{L}{T}{T_1}$
and $\CPMRE{n}{L}{T}{T_2}$
then $\TEQ{T_1}{T_2}$.

\item\thlabel{cnv_cpmuwe_mono}
\Caption{the weak head rt-normal form valid terms rt-reduce to is unique up to rt-conversion}
If $\CNV{\A}{L}{T}$
and $\CPMUWE{n_1}{L}{T}{T_1}$
and $\CPMUWE{n_2}{L}{T}{T_2}$\\
then $\TEQW{T_1}{T_2}$ and $\CPES{n_2-n_1}{n_1-n_2}{L}{T_1}{T_2}$.

\end{enumerate}
\end{theorem}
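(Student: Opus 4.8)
The five items fall into two groups. Items~\thref{}{cpms_total_aaa}, \thref{}{cpmre_total_aaa} and \thref{}{cpmuwe_total_csx} are totality results that I would derive from arity assignment and strong normalization, whereas items~\thref{}{cnv_cpmre_mono} and \thref{}{cnv_cpmuwe_mono} are uniqueness results that are corollaries of the confluence of bound rt-reduction on valid terms, \ie of $\PropC{\A}{L}{T}$ granted by \thref{lsubv}{cnv_preserve}.

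For item~\thref{}{cpms_total_aaa} the plan is to first establish a progress lemma: whenever $\AAA{L}{T}{A}$ there is some $T'$ with $\CPMS{1}{L}{T}{T'}$, \ie an arity-carrying term always admits one inferred type up to r-reduction. I would prove this by well-founded induction on the closure $\Cl{L}{T}$ and by cases on the head of $T$: a sort fires Rule $\ruleref{cpm}{s}$, a declared variable fires Rule $\ruleref{cpm}{l}$, a cast fires Rule $\ruleref{cpm}{e}$; a defined variable is first $\delta$-contracted (Rule $\ruleref{cpm}{\delta}$) and the induction hypothesis is applied to its body across the lookup rule $\ruleref{cpm}{L}$; finally an abstraction or an application propagates a t-step of a subterm through the context rules $\ruleref{cpm}{Pr}$ and $\ruleref{cpm}{\TAppl{}r}$ taken for $n=1$. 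Item~\thref{}{cpms_total_aaa} then follows by induction on $n$: the inductive step invokes the progress lemma once and, since arity is preserved by extended rt-reduction (\thref{lsuba}{cpx_aaa_conf_lpx}) and each bound rt-step is an extended rt-step, the reduct still carries an arity and the construction iterates.

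Item~\thref{}{cpmre_total_aaa} is then a short corollary: item~\thref{}{cpms_total_aaa} yields $\CPMS{n}{L}{T_1}{T'}$, the reduct $T'$ still has an arity and hence lies in a reducibility candidate by \thref{lsubc}{acr_aaa_lsubc_lifts}, so $\CSX{L}{T'}$ by Rule $\ruleref{gcr}{J}$; being strongly rt-normalizing it admits no infinite r-reduction and therefore r-normalizes to some $T_2$ with $\CNR{L}{T_2}$, and composing the two computations gives $\CPMRE{n}{L}{T_1}{T_2}$. For item~\thref{}{cpmuwe_total_csx} I would argue by well-founded induction on $\CSX{L}{T_1}$: if $T_1$ already satisfies $\CNUW{L}{T_1}$ we take $n=0$, otherwise $T_1$ must admit a genuine reduct, \ie some $T'$ with $\CPX{L}{T_1}{T'}$ and $\TNEQX{T_1}{T'}$, because a term all of whose single-step reducts differ from it only in sorts is already weak head rt-normal; recursing on $T'$ produces a weak head rt-normal reduct of $T_1$ and the t-step counts add up. Here one must use multi-step rt-reduction in Rule $\ruleref{cpmuwe}{N}$ for exactly the reason illustrated after \tabref{teqw}.

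The uniqueness items both instantiate $\PropC{\A}{L}{T}$ with $L_1 = L_2 = L$. For item~\thref{}{cnv_cpmre_mono}, with $n_1 = n_2 = n$ the difference $n-n$ is $0$, so confluence delivers a common \emph{r}-reduct $T'$ with $\CPMS{0}{L}{T_1}{T'}$ and $\CPMS{0}{L}{T_2}{T'}$; since $T_1$ and $T_2$ are full r-normal forms, the defining clause of $\CNR$ forces $\TEQ{T_1}{T'}$ and $\TEQ{T_2}{T'}$, whence $\TEQ{T_1}{T_2}$. Item~\thref{}{cnv_cpmuwe_mono} is analogous: confluence supplies $T'$ with $\CPMS{n_2-n_1}{L}{T_1}{T'}$ and $\CPMS{n_1-n_2}{L}{T_2}{T'}$, which is precisely the rt-conversion $\CPES{n_2-n_1}{n_1-n_2}{L}{T_1}{T_2}$ of \secref{cpes}; the weak head normality of $T_1$ and $T_2$ then gives $\TEQW{T_1}{T'}$ and $\TEQW{T_2}{T'}$, and transitivity of the weak head equivalence yields $\TEQW{T_1}{T_2}$. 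The main obstacle I anticipate is the progress lemma behind item~\thref{}{cpms_total_aaa}, in its defined-variable and application cases, where the t-step is produced only after an r-reduction and across an environment extension, so the structural induction must be threaded through the lookup rule $\ruleref{cpm}{L}$ together with a weakening argument; the other delicate point is the non-normal case of item~\thref{}{cpmuwe_total_csx}, where the interplay between $\TEQW$, $\TEQX$ and the need for multi-step reduction must be handled carefully so that the extracted step genuinely decreases the normalization measure, while everything else is a direct assembly of the confluence, strong normalization and arity-preservation results proved in the preceding sections.
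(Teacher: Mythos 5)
Your handling of items \thref{cpmuwe}{cpms_total_aaa}, \thref{cpmuwe}{cpmre_total_aaa}, \thref{cpmuwe}{cnv_cpmre_mono} and \thref{cpmuwe}{cnv_cpmuwe_mono} is essentially the paper's own proof: induction on $n$ with \thref{lsuba}{cpx_aaa_conf_lpx} in the inductive case (your progress lemma merely spells out the step the paper leaves implicit), derivation of the second item from the first together with \thref{gcr}{csx_gcr} and \thref{lsubc}{acr_aaa_lsubc_lifts}, and both uniqueness items as direct instances of the confluence half of \thref{lsubv}{cnv_preserve}.

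Item \thref{cpmuwe}{cpmuwe_total_csx} is where you genuinely diverge. The paper also inducts on the premise $\CSX{L}{T_1}$, but it constructs the weak head rt-normal form by an evaluation-style analysis; this is why its proof cites \thref{lprs}{lpr_cpm_trans}, \thref{lsubr}{lsubr_cpm_trans} and \thref{req}{cpx_req_conf}: when the head of an application reduces to an abstraction, the $\beta$-step moves material into the environment and the refinement machinery re-threads the recursive reductions. You replace this with an abstract normal-or-progress dichotomy, which is viable but leaves two obligations you only gesture at. First, you extract the progress step as $\CPX{L}{T_1}{T'}$; an extended step carries no bound and cannot be composed into the required $\CPMS{n}{L}{T_1}{T_2}$ (Rule $\ruleref{cpx}{s}$ with an arbitrary target sort is not a bound step), so the step must be obtained as $\CPM{m}{L}{T_1}{T'}$ from a bound chain falsifying $\CNUW{L}{T_1}$ --- only then do the t-step counts add up. Second, your key claim that a term all of whose one-step reducts are equivalent up to sort irrelevance is already weak head rt-normal needs (a) a transport lemma allowing a structure-changing step occurring later in the falsifying chain, after a prefix of sort-only steps, to be pulled back to a step from $T_1$ itself with $\TNEQX{T_1}{T'}$, and (b) a positional analysis showing that bound t-steps on sorts occur only where the equivalence of \tabref{teqw} is insensitive --- in particular never under Rule $\ruleref{cpm}{\TAppl{}l}$, whose premise forbids bound-one steps in the argument, which matters because Rule $\ruleref{teqw}{\TAppl{}}$ demands syntactically equal arguments. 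Your choice of comparing reducts via $\TEQX{T_1}{T'}$ rather than $\TEQW{T_1}{T'}$ is exactly what makes the dichotomy survive the paper's own counterexample after \tabref{teqw}, where the unique one-step reduct is weak-head-equivalent to the redex yet structurally distinct. With these points discharged, your route trades the paper's refinement lemmas for the sort-irrelevance machinery of \secref{feqx}; both are inductions on the same premise and of comparable difficulty.
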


\begin{proof}
\thref{}{cpms_total_aaa}
is proved by induction on $n$
with the help of \thref{lsuba}{cpx_aaa_conf_lpx}
in the inductive case.
\thref{}{cpmre_total_aaa}
follows from \thref{}{cpms_total_aaa},
\thref{gcr}{csx_gcr} and \thref{lsubc}{acr_aaa_lsubc_lifts}.
\thref{}{cpmuwe_total_csx}
is proved by induction on the premise with the help of
\thref{lprs}{lpr_cpm_trans}, \thref{lsubr}{lsubr_cpm_trans}
and \thref{req}{cpx_req_conf}.
\thref{}{cnv_cpmre_mono} and \thref{}{cnv_cpmuwe_mono}
follow from \thref{lsubv}{cnv_preserve}.
\end{proof}

\subsection{RT-Conversion and Decidability of Parametric Validity}
\seclabel{cpes}

In this section we discuss the proof of the main \thref{cpes}{cnv_dec}
stating that the condition $\CNV{\A}{L}{T}$ is decidable provided that
the applicability domain $\A$ satisfies the next condition
\eqnref{ad_dec}.
\begin{equation}
\eqnlabel{ad_dec}
\hbox{For all $n_1$ we can decide
$\MEX{n_2\in\A} n_1 \le n_2$}
\end{equation}

\begin{table}
\seccaption{cpes}
{RT-conversion.}
\tablabel{cpes}
\begin{tabular}{c}

\infer[\ruleref{cpes}{I}]
{\CPES{n_1}{n_2}{L}{T_1}{T_2}}
{\CPMS{n_1}{L}{T_1}{T_0}&\CPMS{n_2}{L}{T_2}{T_0}}
\\

\end{tabular}
\end{table}

Firstly, we extend r-conversion by introducing, 
with the rule of \tabref{cpes},
the relation $\CPES{n_1}{n_2}{L}{T_1}{T_2}$ 
stating that the terms $T_1$ and $T_2$ have a common rt-reduct in $L$.
Thus r-conversion becomes $\CPES{0}{0}{L}{T_1}{T_2}$.
We use this notion to gain the next formulation of
Rule $\ruleref{cnv}{\TAppl{}}$ and Rule $\ruleref{cnv}{\TCast{}}$.
The decidability of $\CPES{n_1}{n_2}{L}{T_1}{T_2}$, stated by \thref{cpes}{cnv_cpes_dec},
stands on the comparison of $X_1$ and $X_2$ provided by 
$\CPMRE{n_1}{L}{T_1}{X_1}$ and $\CPMRE{n_2}{L}{T_2}{X_2}$ (\secref{cpmuwe}).
\begin{equation}
\eqnlabel{cnv_appl_cpes}
\vcenter{
\infer[\ruleref{cpes}{\TAppl{}}]
{\CNV{\A}{L}{\TAppl{V}T}}
{\CNV{\A}{L}{V}&\CNV{\A}{L}{T}&n\in\A&\CPMS{n}{L}{T}{\TAbst{\Y}{x}{W}U}&\CPES{1}{0}{L}{V}{W}}
}
\end{equation}
\begin{equation}
\vcenter{
\eqnlabel{cnv_cast_cpes}
\infer[\ruleref{cpes}{\TCast{}}]
{\CNV{\A}{L}{\TCast{U}T}}
{\CNV{\A}{L}{U}&\CNV{\A}{L}{T}&\CPES{0}{1}{L}{U}{T}}
}
\end{equation}

The decidability of the premise $\CPMS{n}{L}{T}{\TAbst{\Y}{x}{W}U}$
of Rule $\ruleref{cpes}{\TAppl{}}$ stands on 
$\CPMUWE{n_0}{L}{T}{X_0}$ for some $\TAbst{}{}{}$-form $X_0$
and for the least $n_0$ (\secref{cpmuwe}).
Now the premises $n\in\A$ and $\CPMS{n}{L}{T}{X}$ with $\TEQW{X}{X_0}$
can or cannot hold together depending on
$\MEX{n\in\A} n_0 \le n$, which is decidable by condition \eqnref{ad_dec}.
Notice that we can find the mentioned least $n_0$ because
the condition $\MEX{U} \CPMUWE{m}{L}{T}{U}$ is decidable for every $m$
provided that $\CNV{\A}{L}{T}$, which implies that $T$ is strongly
rt-normalizing.

\begin{theorem}[decidability of parametric validity]\
\thslabel{cpes}
\begin{enumerate}

\item\thlabel{cnv_R_cpmuwe_dec}
\Caption{for every $n$ we can decide if valid terms rt-reduce to a weak head rt-normal form}
If $\CNV{\A}{L}{T_1}$
then for every $n$ we can decide $\MEX{T_2} \CPMUWE{n}{L}{T_1}{T_2}$.

\item\thlabel{cnv_cpes_dec}
\Caption{rt-conversion of valid terms is decidable}
If $\CNV{\A}{L}{T_1}$
and $\CNV{\A}{L}{T_2}$
then we can decide $\CPES{n_1}{n_2}{L}{T_1}{T_2}$.

\item\thlabel{cnv_dec}
\Caption{validity is decidable}
if $\A$ satisfies condition \eqnref{ad_dec}
then we can decide $\CNV{\A}{L}{T}$.

\end{enumerate}
\end{theorem}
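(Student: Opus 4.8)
The three claims depend on one another in order, and all of them rest on the observation that a valid term is strongly rt-normalizing: from $\CNV{\A}{L}{T}$ we obtain an arity by \thref{lsubc}{cnv_fwd_aaa}, hence $\Cl{L}{T}\in\ACR{A}$ by \thref{lsubc}{acr_aaa_lsubc_lifts}, hence $\CSX{L}{T}$ by Rule $\ruleref{gcr}{J}$ and \thref{gcr}{csx_gcr}. Since bound rt-reduction is deterministic in its redexes and finitely branching, $\CSX{L}{T}$ makes the set of bound rt-reducts of $T$ reachable with a prescribed number of t-steps an effectively computable finite object, and this is what turns all the existence and comparison questions below into terminating searches. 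For \thref{}{cnv_R_cpmuwe_dec} I would proceed by well-founded induction on $\CSX{L}{T_1}$: fixing $n$, I enumerate the finitely many reducts $T_2$ with $\CPMS{n}{L}{T_1}{T_2}$ and decide $\CNUW{L}{T_2}$ for each by inspecting its own reducts $T_3$ and testing the decidable relation $\TEQW{T_2}{T_3}$, the search being bounded by $\CSX{L}{T_2}$. The answer to $\MEX{T_2}\CPMUWE{n}{L}{T_1}{T_2}$ is affirmative precisely when one such $T_2$ is weak head rt-normal; totality for some $n$ is already known from \thref{cpmuwe}{cpmuwe_total_csx}, the new content being decidability for each given $n$.

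For \thref{}{cnv_cpes_dec} the plan is to reduce rt-conversion to syntactic identity of full r-normal forms. As $T_1$ and $T_2$ have arities, \thref{cpmuwe}{cpmre_total_aaa} yields $X_1$ and $X_2$ with $\CPMRE{n_1}{L}{T_1}{X_1}$ and $\CPMRE{n_2}{L}{T_2}{X_2}$, and I claim $\CPES{n_1}{n_2}{L}{T_1}{T_2}$ holds iff $\TEQ{X_1}{X_2}$. The forward direction is immediate, $X_1=X_2$ being a common rt-reduct with the correct degrees. For the converse, any common reduct $T_0$ is itself valid by \thref{lsubv}{cnv_preserve}, so it has a full r-normal form reached with no further t-steps; this exhibits $\CPMRE{n_1}{L}{T_1}{X_0}$ and $\CPMRE{n_2}{L}{T_2}{X_0}$, whence $X_1=X_0=X_2$ by the uniqueness \thref{cpmuwe}{cnv_cpmre_mono}. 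Since equality of terms is decidable, so is $\CPES{n_1}{n_2}{L}{T_1}{T_2}$.

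Finally, \thref{}{cnv_dec} is proved by well-founded recursion on the closure $\Cl{L}{T}$ along the proper subclosure order, following the syntax-directed rules of \tabref{cnv} in the equivalent form \eqnref{cnv_appl_cpes}, \eqnref{cnv_cast_cpes}. A sort is valid; a variable occurrence defers to the validity of the content of the entry it refers to; $\TPair{\Y}{x}{V}T$ defers to the validity of $V$ and of $T$ in the extended environment; and for $\TCast{U}T$, once $U$ and $T$ are known valid, the residual premise $\CPES{0}{1}{L}{U}{T}$ is settled by \thref{}{cnv_cpes_dec}. The application $\TAppl{V}T$ is the only case invoking \eqnref{ad_dec}. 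Having decided validity of $V$ and $T$, I compute the weak head rt-normal form $X_0$ of $T$ for the least $n_0$, testing $n=0,1,\dots$ with \thref{}{cnv_R_cpmuwe_dec} until one succeeds (termination by \thref{cpmuwe}{cpmuwe_total_csx}). If $X_0$ is not a $\TAbst{\Y}{}{}$-form then $T$ never rt-reduces to a functional term, by the uniqueness up to rt-conversion \thref{cpmuwe}{cnv_cpmuwe_mono}, and the application is invalid; otherwise $X_0=\TAbst{\Y}{x}{W_0}U_0$, the premise $\MEX{n\in\A}\CPMS{n}{L}{T}{\TAbst{\Y}{x}{W}U}$ reduces to the decidable $\MEX{n\in\A} n_0\le n$ of \eqnref{ad_dec}, and the remaining $\CPES{1}{0}{L}{V}{W_0}$ is decided by \thref{}{cnv_cpes_dec}, $W_0$ being valid by \thref{lsubv}{cnv_preserve}.

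The main obstacle is this application case, and precisely the claim that the shape of $T$ and the annotation $W_0$ extracted from its least weak head rt-normal form are canonical. I expect the delicate point to be the equivalence between $\MEX{n\in\A}\CPMS{n}{L}{T}{\TAbst{\Y}{x}{W}U}$ and the conjunction of $X_0$ being functional with $\MEX{n\in\A} n_0\le n$, together with the independence of the conversion test $\CPES{1}{0}{L}{V}{W_0}$ from which functional reduct is chosen; both follow from confluence and the uniqueness \thref{cpmuwe}{cnv_cpmuwe_mono} of weak head rt-normal forms up to rt-conversion, with preservation supplying the validity of $W_0$ needed for \thref{}{cnv_cpes_dec} to apply.
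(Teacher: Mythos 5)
Your parts 2 and 3 follow essentially the paper's own route. For \thref{cpes}{cnv_cpes_dec} you reduce rt-conversion to syntactic comparison of the full r-normal forms supplied by \thref{cpmuwe}{cpmre_total_aaa} and identified by \thref{cpmuwe}{cnv_cpmre_mono}, which is precisely how the paper derives it from \thref{lsubv}{cnv_preserve} and \thref{cpmuwe}{cpmre_total_aaa}; for \thref{cpes}{cnv_dec} you perform the same induction on the closure, locate the least $n_0$ with $\CPMUWE{n_0}{L}{T}{X_0}$ by testing $n = 0, 1, \dots$ (termination by \thref{cpmuwe}{cpmuwe_total_csx}), and reduce the application premise to $\MEX{n\in\A}\, n_0 \le n$, decidable by \eqnref{ad_dec} --- exactly the paper's narrative. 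One inaccuracy there: pushing the functional weak head normal form from degree $n_0$ up to an arbitrary $n \ge n_0$ does not follow from confluence and the uniqueness \thref{cpmuwe}{cnv_cpmuwe_mono} alone, since those identify reducts that already exist but do not produce a degree-$n$ reduct; the tool you need is \thref{acle}{cnv_fwd_cpms_abst_dx_le} (resting on \thref{cpmuwe}{cpms_total_aaa}), which the paper states for this purpose and you never invoke.

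Part 1 is where you genuinely depart from the paper, and where there is a gap. You propose to decide $\CNUW{L}{T_2}$ by exhaustively inspecting the reducts $T_3$ with $\CPMS{m}{L}{T_2}{T_3}$, ``the search being bounded by $\CSX{L}{T_2}$''. But $\CNUW{L}{T_2}$ quantifies over \emph{all} $m$, and the t-step $\ruleref{cpm}{s}$ applies endlessly along $\Next{}$: when the $\Next{}$-orbit of a sort is infinite, the set of reducts over all $m$ is an infinite set of pairwise distinct terms. The predicate $\CSX{L}{T_2}$ is strong normalization only up to sort irrelevance (Rule $\ruleref{csx}{S}$ demands $\TNEQX{T_1}{T_2}$ in its recursive premise), so it bounds only the $\TEQX{T_2}{T_3}$-changing steps and does not bound your search as stated. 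The procedure can be repaired by quotienting the reduction graph by sort irrelevance and checking that a refutation of $\TEQW{T_2}{T_3}$ is stable under the residual sort drift --- which holds because in bound rt-reduction t-steps cannot enter the argument position of an application (Rule $\ruleref{cpm}{\TAppl{}l}$ carries bound $0$), the one position where $\TEQW{T_2}{T_3}$ demands syntactic identity of subterms --- but none of this appears in your text. The paper sidesteps the issue entirely by deriving \thref{cpes}{cnv_R_cpmuwe_dec}, like part 2, from \thref{lsubv}{cnv_preserve} and \thref{cpmuwe}{cpmre_total_aaa}, i.e.\ from confluence on valid terms and computation of full r-normal forms, rather than from an enumeration of the reduction graph.
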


\begin{proof}
\thref{}{cnv_R_cpmuwe_dec} and \thref{}{cnv_cpes_dec}
follow from \thref{lsubv}{cnv_preserve} and \thref{cpmuwe}{cpmre_total_aaa}.
\thref{}{cnv_dec} (the decidability theorem)
is proved by induction on the closure $\Cl{L}{T}$
with the help of \thref{}{cnv_R_cpmuwe_dec}, \thref{}{cnv_cpes_dec},
\thref{cpmuwe}{cpmuwe_total_csx} and \thref{cpmuwe}{cnv_cpmuwe_mono}.
\end{proof}

\subsection{General Properties of Type Checking}
\seclabel{types}

In this section we want to discuss the clauses of \thref{types}{}
that lists the properties of the type judgment $\NTA{\A}{L}{T}{U}$
(\secref{nta}, \tabref{nta}) that do not depend on the applicability domain $\A$.

\thref{}{nta_typecheck_dec}:
asserting $\NTA{\A}{L}{T}{U}$ (type checking) corresponds to asserting
$\CNV{\A}{L}{\TCast{U}T}$ by definition, so type checking is decidable
when validity is.
\thref{}{nta_fwd_cnv_dx}:
an immediate consequence of the definition is that types are valid.
\thref{}{nta_fwd_aaa}:
another consequence is that typed terms and their types have the same arity (\secref{lsubc}), 
therefore both are strongly normalizing.
\thref{}{nta_abst_predicative}:
another implication is that  $\TAbst{\Y}{}{}$-quantification is predicative,
as we announced in \secref{introduction},
in the sense that $\NTA{\A}{L}{\TAbst{\Y}{x}{W}T}{W}$ cannot not hold.
The preservation of validity by rt-reduction (\secref{lsubv}) has three
important consequences.
\thref{}{cnv_cpms_nta}:
firstly, as we announced in \secref{cpms},
if $\CPMS{1}{L}{T}{U}$ and $T$ is valid then $U$ is indeed a type for $T$.
\thref{}{nta_fwd_cnv_sn} and \thref{}{cnv_nta_sn}:
secondly, a term is valid if and only if it has a type.
\thref{}{nta_inference_dec}:
it follows immediately that type existence is decidable when validity is.
\thref{}{nta_cprs_conf}:
thirdly, types are indeed preserved by r-reduction.
The confluence of rt-reduction on valid terms (\secref{lsubv})
yields \thref{}{nta_mono}: types are unique up to r-conversion.

Coming now to the problem of axiomatizing $\NTA{\A}{L}{T}{U}$,
we see that, apart from the case of application,
the axioms do not depend on the applicability domain $\A$
and we list them in \tabref{types}
where we recognize the \emph{start} Rule $\ruleref{types}{\TAbst{}{}{}{}}$,
the \emph{weakening} Rule $\ruleref{types}{L}$ 
and the \emph{conversion} Rule $\ruleref{types}{C}$.

We can regard the validity judgments in these rules
as shorthand for the existence of types,
which are not restricted to sorts as in a PTS.
So these types can be $\TAbst{\Y}{}{}$-abstractions as well.

\begin{table}
\seccaption{types}
{Axioms for the type judgment not concerning application.}
\tablabel{types}
\begin{tabular}{c}

\infer[\ruleref{types}{\TAbst{}{}{}{}}]
{\NTA{\A}{K\LAbst{x}{W}}{\TNRef{x}}{W}}
{\CNV{\A}{K}{W}}
\sep

\infer[\ruleref{types}{\LAbbr{}{}}]
{\NTA{\A}{K\LAbbr{x}{V}}{\TNRef{x}}{W}}
{\NTA{\A}{K}{V}{W}}
\sep

\infer[\ruleref{types}{L}]
{\NTA{\A}{K\LPair{y}{V}}{\TNRef{x}}{U}}
{\NTA{\A}{K}{\TNRef{x}}{U}}
\nl

\infer[\ruleref{types}{\TSort{}}]
{\NTA{\A}{L}{\TSRef{s}}{\TSRef{\Next{s}}}}
{}
\sep

\infer[\ruleref{types}{P}]
{\NTA{\A}{L}{\TPair{\Y}{x}{V}T}{\TPair{\Y}{x}{V}U}}
{\CNV{\A}{L}{V}&\NTA{\A}{L\LPair{x}{V}}{T}{U}}
\sep

\infer[\ruleref{types}{\TCast{}}]
{\NTA{\A}{L}{\TCast{U}T}{U}}
{\NTA{\A}{L}{T}{U}}
\nl

\infer[\ruleref{types}{C}]
{\NTA{\A}{L}{T}{U_2}}
{\NTA{\A}{L}{T}{U_1}&\CPES{0}{0}{L}{U_1}{U_2}&\CNV{\A}{L}{U_2}}
\nl

Rule $\ruleref{types}{L}$:
$y \neq x_1$ and $y$ not free in $T_2$.\\

\end{tabular}
\end{table}

\begin{theorem}[general properties of expected type checking]\
\thslabel{types}
\begin{enumerate}

\item\thlabel{nta_fwd_cnv_sn}
\Caption{typed terms are valid}
If $\NTA{\A}{L}{T}{U}$
then $\CNV{\A}{L}{T}$.

\item\thlabel{nta_fwd_cnv_dx}
\Caption{expected types are valid, correctness of types}
If $\NTA{\A}{L}{T}{U}$
then $\CNV{\A}{L}{U}$.

\item\thlabel{nta_fwd_aaa}
\Caption{typed terms and their expected types have the same arity}
If $\NTA{\A}{L}{T}{U}$
then there exists $A$
such that $\AAA{L}{T}{A}$
and $\AAA{L}{U}{A}$.

\item\thlabel{nta_abst_predicative}
\Caption{universal abstraction is predicative}
$\NTA{\A}{L}{\TAbst{\Y}{x}{W}T}{W}$
does not hold.

\item\thlabel{cnv_cpms_nta}
\Caption{the inferred types of valid terms are indeed expected types of such terms}
If $\CNV{\A}{L}{T}$
and $\CPMS{1}{L}{T}{U}$
then $\NTA{\A}{L}{T}{U}$.

\item\thlabel{cnv_nta_sn}
\Caption{valid terms have an expected type}
If $\CNV{\A}{L}{T}$
then there exists $U$
such that $\NTA{\A}{L}{T}{U}$.

\item\thlabel{nta_cprs_conf}
\Caption{r-reduction preserves the expected type, subject reduction}
If $\NTA{\A}{L}{T_1}{U}$
and $\CPMS{0}{L}{T_1}{T_2}$
then $\NTA{\A}{L}{T_2}{U}$.

\item\thlabel{nta_mono}
\Caption{the expected types of a term are unique up to r-conversion}
If $\NTA{\A}{L}{T}{U_1}$
If $\NTA{\A}{L}{T}{U_2}$
then $\CPES{0}{0}{L}{U_1}{U_2}$.

\item\thlabel{nta_typecheck_dec}
\Caption{expected type checking is decidable}
if $\A$ satisfies condition \eqnref{ad_dec}
then we can decide $\NTA{\A}{L}{T}{U}$.

\item\thlabel{nta_inference_dec}
\Caption{expected type inference is decidable}
if $\A$ satisfies condition \eqnref{ad_dec}
then we can decide $\MEX{U} \NTA{\A}{L}{T}{U}$.

\end{enumerate}
\end{theorem}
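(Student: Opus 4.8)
The plan is to read off every clause from the definitional equivalence $\NTA{\A}{L}{T}{U} \Leftrightarrow \CNV{\A}{L}{\TCast{U}T}$ (Rule $\ruleref{nta}{\TCast{}}$) together with the two heavy results already in hand: the preservation and confluence of bound rt-reduction on valid terms (\thref{lsubv}{cnv_preserve}) and the arity theory of \secref{lsubc}. Since the real work lives upstream, each clause is a short corollary; the only care needed is to invoke the correct instance. First I would dispatch the two forward projections. Inverting Rule $\ruleref{cnv}{\TCast{}}$ on $\CNV{\A}{L}{\TCast{U}T}$ immediately yields $\CNV{\A}{L}{T}$ and $\CNV{\A}{L}{U}$, which are exactly \thref{}{nta_fwd_cnv_sn} and \thref{}{nta_fwd_cnv_dx}. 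The same inversion supplies a witness $U_0$ with $\CPMS{1}{L}{T}{U_0}$ and $\CPMS{0}{L}{U}{U_0}$, and I keep this witness throughout since every remaining clause consumes it.

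For \thref{}{nta_fwd_aaa} I would combine validity of $T$ and $U$ with \thref{lsubc}{cnv_fwd_aaa} to obtain arities $A$ for $T$ and $A'$ for $U$, then propagate both to the common reduct $U_0$ using preservation of arity by extended rt-reduction (\thref{lsuba}{cpx_aaa_conf_lpx}, lifted from steps to the sequences $\CPMS{1}{L}{T}{U_0}$ and $\CPMS{0}{L}{U}{U_0}$ with the reflexive environment reduction); uniqueness (\thref{lsuba}{aaa_mono}) then forces $A = A'$. The predicativity clause \thref{}{nta_abst_predicative} is the one genuinely structural argument: were $\NTA{\A}{L}{\TAbst{\Y}{x}{W}T}{W}$ to hold, \thref{}{nta_fwd_aaa} would hand $\TAbst{\Y}{x}{W}T$ and $W$ a common arity; but Rule $\ruleref{aaa}{\TAbst{}{}{}}$ forces the arity of the abstraction to be $\APair{B}{A}$ with $B$ the arity of $W$, and no arity can equal a proper component of itself, a contradiction.

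The remaining clauses all reduce to \thref{lsubv}{cnv_preserve}. For \thref{}{cnv_cpms_nta} I take $U_0 := U$, use the preservation part ($\PropP$) to derive $\CNV{\A}{L}{U}$ from $\CNV{\A}{L}{T}$ and $\CPMS{1}{L}{T}{U}$, and reassemble Rule $\ruleref{cnv}{\TCast{}}$ with the reflexive $\CPMS{0}{L}{U}{U}$. Then \thref{}{cnv_nta_sn} follows by producing some $U$ with $\CPMS{1}{L}{T}{U}$ via \thref{cpmuwe}{cpms_total_aaa} (after \thref{lsubc}{cnv_fwd_aaa}) and feeding it to \thref{}{cnv_cpms_nta}. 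For subject reduction \thref{}{nta_cprs_conf} and uniqueness \thref{}{nta_mono} the lever is the confluence part ($\PropC$) on the valid term $T$: confluence of $\CPMS{1}{L}{T}{U_0}$ against $\CPMS{0}{L}{T}{T_2}$ (respectively against a second $\CPMS{1}$) yields a common rt-reduct, and composing with $\CPMS{0}{L}{U}{U_0}$ by transitivity of $\CPMS{0}{L}{}{}$ rebuilds the annotation $\TCast{U}T_2$ (respectively the witness $\CPES{0}{0}{L}{U_1}{U_2}$); in the subject-reduction case one also invokes $\PropP$ once more for $\CNV{\A}{L}{T_2}$.

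Finally the decidability clauses are immediate. \thref{}{nta_typecheck_dec} is just $\CNV{\A}{L}{\TCast{U}T}$ decided by \thref{cpes}{cnv_dec} under condition \eqnref{ad_dec}, and \thref{}{nta_inference_dec} follows because $\MEX{U}\NTA{\A}{L}{T}{U}$ is equivalent to $\CNV{\A}{L}{T}$ by \thref{}{nta_fwd_cnv_sn} and \thref{}{cnv_nta_sn}, again decided by \thref{cpes}{cnv_dec}. I expect no real obstacle inside this theorem: its entire content is already packaged in \thref{lsubv}{cnv_preserve}, so the only delicacy is bookkeeping the $n$-bounds of $\CPMS{n}{L}{}{}$ so that the truncated arithmetic lines up, e.g. $1-0=1$ and $0-1=0$ in the confluence instances feeding \thref{}{nta_cprs_conf} and \thref{}{nta_mono}.
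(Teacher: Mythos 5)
Your proposal is correct and takes essentially the same route as the paper's proof: every clause is discharged by unfolding Rule $\ruleref{nta}{\TCast{}}$, with \thref{lsubc}{cnv_fwd_aaa} and \thref{lsuba}{aaa_mono} carrying the arity and predicativity clauses, \thref{lsubv}{cnv_preserve} (preservation plus confluence, iterated along sequences) carrying \thref{types}{cnv_cpms_nta}, \thref{types}{cnv_nta_sn}, \thref{types}{nta_cprs_conf} and \thref{types}{nta_mono}, and \thref{cpes}{cnv_dec} carrying the two decidability clauses. Your explicit bookkeeping of the truncated bounds ($1-0=1$, $0-1=0$) in the confluence instances, and the use of \thref{cpmuwe}{cpms_total_aaa} to produce the inferred type for \thref{types}{cnv_nta_sn}, merely make precise what the paper's dependency list leaves implicit.
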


\begin{proof}
\thref{}{nta_fwd_cnv_sn} and
\thref{}{nta_fwd_cnv_dx} (correctness of types)
are immediate.
\thref{}{nta_fwd_aaa}
follows from \thref{lsubc}{cnv_fwd_aaa}.
\thref{}{nta_abst_predicative}
follows from \thref{}{nta_fwd_aaa} and \thref{lsuba}{aaa_mono}.
\thref{}{cnv_cpms_nta},
\thref{}{nta_cprs_conf} (subject reduction)
and \thref{}{nta_mono} (uniqueness of types) 
follow from \thref{lsubv}{cnv_preserve}.
\thref{}{cnv_nta_sn}
follows from \thref{}{cnv_cpms_nta}.
\thref{}{nta_typecheck_dec}
follows from \thref{cpes}{cnv_dec}.
\thref{}{nta_inference_dec}
follows from \thref{}{nta_fwd_cnv_sn}, \thref{}{cnv_nta_sn}
and \thref{cpes}{cnv_dec}.
\end{proof}

\thref{types}{nta_fwd_aaa} tells us that arity,
\ie the functional structure of terms, is preserved by typing.
Thus each type, including each sort,
determines the functional structure of its inhabitants.  
In this consideration lies the deep reason why 
our system cannot support the BHK-based PAT interpretation of logic, \ie
in this interpretation the sort of propositions does not have the mentioned property.

\subsection{Iterated Type Checking for Terms}
\seclabel{ntas}

As of the axioms for typing the application, we notice that
Rule $\ruleref{cnv}{\TAppl{}}$ (\secref{nta})
makes an essential use of the bound $n$,
which appears in the premise $n\in\A$.
If $\A$ is fixed in advance, we can reasonably find a set of axioms
that do not mention $n$ (see for example \appref{ld2a} and \appref{ld1a}),
but in the parametric case we need to define
a generalized type judgment in which $n$ occurs.
This is the relation $\NTAS{\A}{n}{L}{T_0}{T_n}$,
which defines $T_n$ as an expected $n$-type of $T_0$
and holds when we can prove a chain of type checks
$\NTA{\A}{L}{T_i}{T_{i+1}}$ where $0 \le i < n$.
Therefore $\NTAS{\A}{1}{L}{T}{U}$ means $\NTA{\A}{L}{T}{U}$.
Formally speaking, we introduce this notion with the rule of \tabref{ntas},
that generalizes Rule $\ruleref{nta}{\TCast{}}$ by replacing the constant $1$ with $n$
and that immediately gives $\NTAS{\A}{n}{L}{T}{U}$
if $\CNV{\A}{L}{T}$ and $\CPMS{n}{L}{T}{U}$.
This generalization of \thref{types}{cnv_cpms_nta}
states that the inferred $n$-types of valid terms are indeed
expected $n$-types of such terms and is proved in the same way.

\begin{table}
\seccaption{ntas}
{Iterated type checking for terms.}
\tablabel{ntas}
\begin{tabular}{c}

\infer[\ruleref{ntas}{\TCast{}}]
{\NTAS{\A}{n}{L}{T}{U}}
{\CNV{\A}{L}{U}&\CNV{\A}{L}{T}&\CPMS{n}{L}{T}{U_0}&\CPMS{0}{L}{U}{U_0}}
\\

\end{tabular}
\end{table}

Not surprisingly,
we can axiomatize $\NTAS{\A}{n}{L}{T}{U}$
in terms of $\NTA{\A}{L}{T}{U}$
with the next three rules.
\begin{equation}
\eqnlabel{ntas_zero}
\vcenter{
\infer[\ruleref{ntas}{R}]
{\NTAS{\A}{0}{L}{T_1}{T_2}}
{\CNV{\A}{L}{T_1}&\CNV{\A}{L}{T_2}&\CPES{0}{0}{L}{T_1}{T_2}}
}\sep
\vcenter{
\infer[\ruleref{ntas}{I}]
{\NTAS{\A}{1}{L}{T_1}{T_2}}
{\NTA{\A}{L}{T_1}{T_2}}
}\sep
\vcenter{
\infer[\ruleref{ntas}{T}]
{\NTAS{\A}{n_1+n_2}{L}{T_1}{T_2}}
{\NTAS{\A}{n_1}{L}{T_1}{T}&\NTAS{\A}{n_2}{L}{T}{T_2}}
}
\end{equation}
As we see, $\NTAS{\A}{n}{L}{T}{U}$ comprises
convertibility, validity and type checking in a single notion. 

This machinery provides for the next equivalent formulation
of Rule $\ruleref{cnv}{\TAppl{}}$ mentioning types.
\begin{equation}
\eqnlabel{cnv_appl_ntas}
\vcenter{
\infer[\ruleref{ntas}{\TAppl{}}]
{\CNV{\A}{L}{\TAppl{V}T}}
{n\in\A&\NTA{\A}{L}{V}{W}&\NTAS{\A}{n}{L}{T}{\TAbst{\Y}{x}{W}U}}
}
\end{equation}

By this rule, the parametric axioms for typing the application can take the form:
\begin{equation}
\eqnlabel{nta_appl_ntas_zero}
\vcenter{
\infer[\ruleref{ntas}{\TAppl{}0}]
{\NTA{\A}{L}{\TAppl{V}T}{\TAppl{V}\TAbst{\Y}{x}{W}U}}
{0\in\A&\NTA{\A}{L}{V}{W}&\NTAS{\A}{0}{L}{T}{\TAbst{\Y}{x}{W}U_0}&\NTA{\A}{L\LAbst{x}{W}}{U_0}{U}}
}
\end{equation}
\begin{equation}
\eqnlabel{nta_appl_ntas_pos}
\vcenter{
\infer[\ruleref{ntas}{\TAppl{}+}]
{\NTA{\A}{L}{\TAppl{V}T}{\TAppl{V}U}}
{m+1\in\A&\NTA{\A}{L}{V}{W}&\NTA{\A}{L}{T}{U}&\NTAS{\A}{m}{L}{U}{\TAbst{\Y}{x}{W}U_0}}
}
\end{equation}
where we need to make a distinction between
the cases $n = 0$ and $n = m + 1$, \ie $n > 0$.

\subsection{Partial Order for Applicability Domains}
\seclabel{acle}

In this section we present a general condition $\ACLE{\A_1}{\A_2}$
on two applicability domains
by which $\CNV{\A_1}{L}{T}$ implies $\CNV{\A_2}{L}{T}$
for every closure $\Cl{L}{T}$.
The condition is defined by the rule of \tabref{acle}.

\begin{table}
\seccaption{acle}
{Partial order for applicability domains.}
\tablabel{acle}
\begin{tabular}{c}

\infer[\ruleref{acle}{I}]
{\ACLE{\A_1}{\A_2}}
{\MALL{m\in\A_1}\MEX{n\in\A_2} m \le n}
\\

\end{tabular}
\end{table}

The proof of our claim is straightforward and corresponds to \thref{acle}{cnv_acle_trans}.
For example
the following are immediate relationships between some applicability domains
used in \appref{members}:
$\ACLE{\ACOne}{\ACZeroOne}$, $\ACLE{\ACZeroOne}{\ACOne}$,
$\ACLE{\ACZero}{\ACZeroOne}$ and $\ACLE{\ACZeroOne}{\SysOmega}$
where $\SysOmega \defeq \ACAny$.

Obviously, we can prove that $\ACLE{\A_1}{\A_2}$
is a reflexive and transitive relation as one expects.

\begin{theorem}[partial order for applicability domains]\
\thslabel{acle}
\begin{enumerate}

\item\thlabel{cnv_fwd_cpms_abst_dx_le}
\Caption{weakening of bound rt-reduction from a valid term to a universal abstraction}
If $\CNV{\A}{L}{T}$
and $\CPMS{n_1}{L}{T}{\TAbst{\Y}{x}{W}U_1}$
and $n_1 \le n_2$\\
then there exists $U_2$
such that $\CPMS{n_2}{L}{T}{\TAbst{\Y}{x}{W}U_2}$
and $\CPMS{n_2-n_1}{L\LAbst{x}{W}}{U_1}{U_2}$.

\item\thlabel{cnv_acle_trans}
\Caption{weakening of validity through partial order}
If $\CNV{\A_1}{L}{T}$ and $\ACLE{\A_1}{\A_2}$ then $\CNV{\A_2}{L}{T}$.

\end{enumerate}
\end{theorem}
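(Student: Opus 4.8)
The plan is to prove \thref{}{cnv_fwd_cpms_abst_dx_le} first and then read off \thref{}{cnv_acle_trans} almost immediately, since the applicability domain enters the validity predicate only through the premise $n\in\A$ of Rule $\ruleref{cnv}{\TAppl{}}$. For \thref{}{cnv_fwd_cpms_abst_dx_le} the idea is to absorb the extra $n_2-n_1$ type-inference steps into the \emph{body} of the abstraction while leaving its head $\TAbst{\Y}{x}{W}$ untouched. First I would iterate preservation of validity (\thref{lsubv}{cnv_preserve}) along the premise $\CPMS{n_1}{L}{T}{\TAbst{\Y}{x}{W}U_1}$, starting from $\CNV{\A}{L}{T}$, to obtain $\CNV{\A}{L}{\TAbst{\Y}{x}{W}U_1}$; by inversion of Rule $\ruleref{cnv}{P}$ the body $U_1$ is then valid in $L\LAbst{x}{W}$. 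By \thref{lsubc}{cnv_fwd_aaa} it therefore has an arity $A$ in that environment.

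Having the arity, I would apply \thref{cpmuwe}{cpms_total_aaa} with the index $n_2-n_1$ to produce a term $U_2$ with $\CPMS{n_2-n_1}{L\LAbst{x}{W}}{U_1}{U_2}$, which is precisely the second half of the desired conclusion. It then remains to lift this body reduction to the abstraction: mirroring the derivation of the sequence through Rule $\ruleref{cpm}{Pr}$ step by step, with the bound left free so that each single step preserves its index, I obtain $\CPMS{n_2-n_1}{L}{\TAbst{\Y}{x}{W}U_1}{\TAbst{\Y}{x}{W}U_2}$. Concatenating the original $\CPMS{n_1}{L}{T}{\TAbst{\Y}{x}{W}U_1}$ with this sequence yields $\CPMS{n_2}{L}{T}{\TAbst{\Y}{x}{W}U_2}$, using $n_1\le n_2$ so that $n_1+(n_2-n_1)=n_2$ in the truncated arithmetic.

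For \thref{}{cnv_acle_trans} I would argue by induction on the derivation of $\CNV{\A_1}{L}{T}$. All rules except $\ruleref{cnv}{\TAppl{}}$ are insensitive to the applicability domain, so the inductive hypotheses on the subderivations recombine directly under $\A_2$. In the application case the derivation supplies some $n\in\A_1$ together with $\CNV{\A_1}{L}{T}$, $\CPMS{n}{L}{T}{\TAbst{\Y}{x}{W_0}U_0}$ and $\CPMS{1}{L}{V}{W_0}$. By Rule $\ruleref{acle}{I}$ there is $n'\in\A_2$ with $n\le n'$, and applying \thref{}{cnv_fwd_cpms_abst_dx_le} to the valid function $T$ (with applicability domain $\A_1$) upgrades the first reduction to $\CPMS{n'}{L}{T}{\TAbst{\Y}{x}{W_0}U_0'}$ \emph{with the same binder} $W_0$. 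Hence the premise $\CPMS{1}{L}{V}{W_0}$ survives unchanged, and Rule $\ruleref{cnv}{\TAppl{}}$ now fires with $n'\in\A_2$, giving $\CNV{\A_2}{L}{\TAppl{V}T}$.

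The main obstacle is \thref{}{cnv_fwd_cpms_abst_dx_le}: the delicate point is that the additional type steps must be confined to the body without disturbing $W$ or the abstraction head. This is exactly what the combination of Rule $\ruleref{cpm}{Pr}$ (the inferred type of a $\TAbst{\Y}{}{}$-abstraction keeps the same argument type) and the totality of rt-reduction for normable terms (\thref{cpmuwe}{cpms_total_aaa}) secures. Once this is in hand, the preservation step and the inductive recombination in \thref{}{cnv_acle_trans} are routine.
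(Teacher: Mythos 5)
Your part 2 coincides with the paper's proof: induction on the derivation of $\CNV{\A_1}{L}{T}$, with \thref{acle}{cnv_fwd_cpms_abst_dx_le} invoked in the case of Rule $\ruleref{cnv}{\TAppl{}}$ to raise the bound $n$ to some $n'\in\A_2$ while keeping the binder $W_0$ unchanged, so that the premise $\CPMS{1}{L}{V}{W_0}$ survives. Likewise, the second half of your part 1 --- obtain an arity for $U_1$ in $L\LAbst{x}{W}$, apply \thref{cpmuwe}{cpms_total_aaa} with index $n_2-n_1$, lift the body sequence through Rule $\ruleref{cpm}{Pr}$, and concatenate using $n_1+(n_2-n_1)=n_2$ --- is exactly the paper's route. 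The genuine problem is your opening move: you derive $\CNV{\A}{L}{\TAbst{\Y}{x}{W}U_1}$ by iterating the preservation \thref{lsubv}{cnv_preserve} along $\CPMS{n_1}{L}{T}{\TAbst{\Y}{x}{W}U_1}$ and then invert Rule $\ruleref{cnv}{P}$. In this paper that is circular: \thref{lsubv}{cnv_preserve} rests on \thref{lsubv}{cnv_cpm_trans_lpr_aux}, whose proof explicitly uses \thref{acle}{cnv_fwd_cpms_abst_dx_le} --- the very statement you are proving. This is precisely why the paper establishes \thref{acle}{cnv_fwd_cpms_abst_dx_le} by purely arity-theoretic means, its stated dependencies being \thref{lprs}{lpr_cpm_trans}, \thref{lsubc}{cnv_fwd_aaa} and \thref{cpmuwe}{cpms_total_aaa}, none of which involves validity preservation (the certified proof lives in the arity file, not in the preservation file).

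The repair is local. From $\CNV{\A}{L}{T}$ obtain $\AAA{L}{T}{A}$ by \thref{lsubc}{cnv_fwd_aaa}; propagate the arity along the reduction sequence by iterating \thref{lsuba}{cpx_aaa_conf_lpx} with $L_2=L_1$, observing that every bound rt-step is (a composite of) extended rt-steps, which yields $\AAA{L}{\TAbst{\Y}{x}{W}U_1}{A}$; then invert Rule $\ruleref{aaa}{\TAbst{}{}{}}$ to get $A=\APair{B}{A_0}$ with $\AAA{L\LAbst{x}{W}}{U_1}{A_0}$. From there your remaining steps (\thref{cpmuwe}{cpms_total_aaa} at index $n_2-n_1$, the $\ruleref{cpm}{Pr}$ lifting, the concatenation in truncated arithmetic) go through unchanged, with \thref{lprs}{lpr_cpm_trans} available for the environment bookkeeping. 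Since none of these tools depends on \secref{lsubv}, the circularity disappears and your argument then matches the paper's.
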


\begin{proof}
\thref{}{cnv_fwd_cpms_abst_dx_le}
comes from \thref{lprs}{lpr_cpm_trans},
\thref{lsubc}{cnv_fwd_aaa} and \thref{cpmuwe}{cpms_total_aaa}.
\thref{}{cnv_acle_trans}
is proved by induction on the first premise
with the help of \thref{}{cnv_fwd_cpms_abst_dx_le}
in the case of Rule $\ruleref{cnv}{\TAppl{}}$.
\end{proof}

\section{Conclusions}
\seclabel{conclusion}

In this article we presented a system that we term hereafter $\LD{2}{B}$,
which includes and evolves two earlier systems:
$\LD{1}{A}$ of \citeN{lambdadeltaJ1a} and $\LD{2}{A}$ of \citeN{lambdadeltaR2c}.
In regard to their names, $\TAbst{}{}{}$ and $\TAbbr{}{}$ denote
the main binders in their syntax,
while 1A, 2A, 2B identify a major and a minor version.

We stress that the systems of the $\LD{}{}$ family are not
related intentionally to any other system
having (variations of) the symbols $\lambda$ and $\delta$ in its name or syntax.
Examples include, but are not limited to:
$\ChLD$ \cite{Chu41},
$\DL$ \cite{Bru93},
$\lambda_\Delta$ \cite{RS94},
$\PLC$ \cite{RP04},
$\lambda D$ \cite{NG14},
$\LFDelta$ \cite{HLSS18},
$\CLX$ \cite{Bru78},
and others like \cite{Ned79,Ned80,BKN96}.

As $\LD{2}{B}$ is not intended as a logical framework,
we are not suggesting an interpretation of logic into it.
This would mean discussing the notion of proposition, assertion and proof. 

We advocate two original ideas:
on the one hand,
introducing bound rt-reduction (\secref{cpms})
to define our system
and, on the other hand,
introducing sort irrelevance (\secref {feqx})
to avoid degree-based inductions, and the notion of degree in itself,
in the system's theory we developed.

Our main contributions are, on the one hand,
a shorter definition for the system with respect to $\LD{2}{A}$ and, on the other hand,
some important results that we are presenting here for the first time.
The definition (\secref{definition}) 
stands just on two notions, \ie bound rt-reduction and parametric validity.
Our new results are the confluence of bound rt-reduction
on valid terms (\secref{lsubv}), the decidability of validity (\secref{cpes}),
of type inference and type checking (\secref{types})
and an axiomatization of the type judgment for $\LD{2}{A}$ (\appref{ld2a}).
Some results already proved for $\LD{2}{A}$ are included for completeness:
confluence of r-reduction (\secref{lprs}),
strong normalization of qrst-reduction (\secref{lsubc}) and
preservation of validity by bound rt-reduction (\secref{lsubv}).

Moreover,
we specified the whole theory of $\LD{2}{B}$ in the \CoIC{} and
we checked it with the interactive prover Matita (\appref{specification}).
This process took five years.

We want to stress that
the presented framework is a family of systems depending on a parameter $\A$
that rules the so-called applicability condition
(see Rule $\ruleref{cnv}{\TAppl{}}$ of \secref{nta}).
As we show in \appref{members}, both $\LD{1}{A}$ and $\LD{2}{A}$ emerge naturally
by setting this parameter to specific values. 
There, in \appref{rpce}
we conjecture a link between these systems by means of $\eta$-conversion.

Moreover, in \appref{exclusion} we report on the exclusion binder,
a technical device appearing in $\LD{1}{A}$ (\cite{lambdadeltaJ1a}, Section 4).
In the rest of this section we discuss the key features of the
systems in the $\LD{}{}$ family comparing then with similar frameworks.
One remark resulting from the discussion is that our systems borrow
some features from the pure type systems (PTS) \cite{Brn92} 
and some from the languages of the Automath tradition,
but stand outside both families.

Future work on the $\LD{}{}$ family includes proving
the conjecture on $\eta$-conversion we pose in \appref{rpce}
and generalizing the type annotation $\TCast{U}T$ with the
iterated type annotation $\TICast{n}{U}{T}$,
whose validity condition is given by the next Rule $\ruleref{cnv}{\TICast{n}{}}$
(to be compared with Rule $\ruleref{ntas}{\TCast{}}$).
\begin{equation}
\eqnlabel{icast}
\vcenter{
\infer[\ruleref{cnv}{\TICast{n}{}}]
{\CNV{\A}{L}{\TICast{n}{U}T}}
{\CNV{\A}{L}{U}&\CNV{\A}{L}{T}&\CPMS{n}{L}{T}{U_0}&\CPMS{0}{L}{U}{U_0}}
}
\end{equation}
With this device we can define $\NTAS{\A}{n}{L}{T}{U}$
as $\CNV{\A}{L}{\TICast{n}{U}T}$ superseding Rule
$\ruleref{nta}{\TCast{}}$ and Rule $\ruleref{ntas}{\TCast{}}$.
The reader should be aware that reducing the number of auxiliary
notions needed to develop the theory of the $\LD{}{}$ family is a
priority for us and that we will continue to work in this direction.
On the one hand, we are interested in the conditions that allow to regard
our r-conversion as definitional equality in the sense of
\citeN{ML75}. Our objective is to propose one of our systems as
a specification language for the Minimalist Foundation of \citeN{Mai09}.
See the online Appendix A of \citeN{lambdadeltaJ1a} for more details.
On the other hand we are interested in the theory of the layered
abstraction $\TAbst{e}{x}{W}T$ of \citeN{lambdadeltaJ3a}.
Our objective is to develop a system, term it $\LYP$,
that solves some issues with $\AutQE$ \cite{SPAa3},
\ie the logical framework of the Automath family joining $\LY$ of \citeN{SPAc6} and $\LP$.
In particular we expect $\LYP$ to maintain the desired properties of both systems,
especially confluence of r-reduction and uniqueness of types up to r-conversion.
In this situation naively using $\TAbst{\Y}{}{}$ in place of $\TProd{}{}{}$
is not the right solution since $\TAbst{\Y}{}{}$ cannot satisfy
the introductions rules for $\TProd{}{}{}$ that in $\LP$ follow the pattern:
\begin{equation}
\eqnlabel{prod}
\vcenter{
\infer[\mathrm{(product)}]
{\NTA{}{L}{\TProd{}{x}{W}T}{U_2}}
{\NTA{}{L}{W}{U_1}&\NTA{}{L\LAbst{x}{W}}{T}{U_2}}
}
\end{equation}
In $\LYP$ we expect to validate a suitable translation of
the \emph{Grundlagen} \cite{Jut79}, \ie
the only non-trivial development based on de Bruijn's quantifier
validated first in $\AutQE$.

\subsection{Notes on Sorts}
\seclabel{sort}

We designed $\LD{2}{B}$ to avoid the distinction between valid terms and
typable terms by making sure that every valid term has a type.
This distinction, occurring in systems like the $\LCube$
where the sort $\SortB$ is valid but untyped,
complicates the statement and the proof of many system's properties,
and yet is useless since we can inject any system with top sorts
into a system without top sorts.

Moreover $\LD{2}{B}$ enforces uniqueness of types up to r-conversion,
thus a sort determines its type.  

We write a generic sort as $\TSRef{s}$,
where $s$ is a meta-variable ranging over a set $\Sort$ of identifiers.

The type of $\TSRef{s}$ in the $\LD{}{}$ family is $\TSRef{\Next{s}}$
where $\Next{}$ is a parameter function from $\Sort$ to $\Sort$.

Restrictions on $\Sort$ and $\Next{}$ appear in $\LD{1}{A}$ and $\LD{2}{A}$,
but both $\Sort$ and $\Next{}$ can be chosen at will in $\LD{2}{B}$  
provided that $\Sort \neq \emptyset$ and equality in $\Sort$ is decidable.
As observed by \citeN{Web20b} for $\Sd$, 
even choosing $\Sort \defeq \{\TSort{}\}$ and $\Next{\TSort{}} \defeq \TSort{}$
does not break the properties of the system.

In particular valid terms are strongly normalizing regardless of the
chosen sort structure because sorts do not type $\TAbst{\Y}{}{}$-expressions.
This is opposed to PTS's where sorts do type $\TProd{}{}{}$-expressions. 

Certainly, many predicates we are presenting in this article
take $\Sort$ and $\Next{}$ as parameters,
but for the reader's convenience we hide them
in the notation we are proposing for these predicates.

\subsection{Notes on Variables}
\seclabel{variable}

As we show in Rule $\ruleref{types}{\TAbst{}{}{}{}}$
of \secref{types} and line \eqnref{start},
the expected type of a valid declared variable, \ie $W$,
must be valid in its environment.
but it does not need to be a sort nor to have a sort type.
\begin{equation}
\eqnlabel{start}
\vcenter{
\infer[\ruleref{types}{\TAbst{}{}{}{}}]
{\NTA{\A}{K\LAbst{x}{W}}{\TNRef{x}}{W}}
{\CNV{\A}{K}{W}}
}
\end{equation}
In this way, and contrary to a PTS, type chains like
$\CTypeV{L}{\TNRef{x_1}}{\TNRef{x_n}}{W}$ are possible for any $n$ 
as in some Automath-related systems.
See the $\AutIV$ family of \citeN{SPAb3} for applications.

In $\LD{2}{B}$ all variables are bound locally.
However, global constants are unavoidable in realistic mathematical developments
and we plan to support them in future extensions of our system.

\subsection{Notes on Abstraction and Quantification}
\seclabel{abstraction}

Universal abstraction, which we write $\TAbst{\Y}{x}{W}$,
is the construction for abstracting a variable $\TNRef{x}$ of type $W$ in the
system $\DL$ \cite{Bru93} and in simpler systems of the Automath
tradition such as: $\SLambda$ \cite{Ned73}, $\VrLL$ \cite{SPAc4}, $\LY$ \cite{SPAc6}.
As designed by de Bruijn, this construction is characterized by the next requirements
mentioned in \secref{introduction}.
\begin{enumerate}

\item
$\TAppl{V}\TAbst{\Y}{x}{W}T$ is a $\beta$-redex
(in our system: Rule $\ruleref{cpm}{\beta}$ of \secref{cpms});

\item
$\TAbst{\Y}{x}{W}T$ is a weak head normal form (see \citeN{PJ87} 11.3.1);

\item
a type of $\TAbst{\Y}{x}{W}T$ has the form $\TAbst{\Y}{x}{W}U$
(in our system: Rule $\ruleref{types}{P}$ of \secref{types}).

\end{enumerate}

Outside the project Automath, we find it in $\GrLL$ \cite{Gro93}
and in $\Sd$ \cite{Web20b}.

Universal abstraction arises naturally when we consider the
simplest syntactical modification that turns untyped $\lambda$-calculus
into Church-typed $\lambda$-calculus, \ie
adding at least one sort, decorating $\lambda$-abstractions with
an expected type and taking the resulting terms as the types.
The well-known result of this modification is a $\lambda$-calculus with
uniformly dependent types, whose valid terms (and types) are also valid in
a system of simple types referred to as norms or arities.

We can explain universal abstraction as follows: 
the multiple type judgment 
$\CTypeI{L\LAbst{x}{W}}{T}{U_1}{U_2}$ under the condition $\CNV{}{L}{W}$
gives
$\CTypeI{L}{\TAbst{\Y}{x}{W}T}{\TAbst{\Y}{x}{W}U_1}{\TAbst{\Y}{x}{W}U_2}$
meaning that the function $\TAbst{\Y}{x}{W}T$
belongs to the dependent function space $\TAbst{\Y}{x}{W}U_1$,
which belongs to the family of function spaces $\TAbst{\Y}{x}{W}U_2$,
all whose members have $W$ as their source. 

The importance of $\TAbst{\Y}{}{}$-abstraction appears clearly
from the \emph{propositions-as-types} (PAT) perspective
(see for example \cite{KLN04})
since it provides for a predicative universal quantification on sorts.
In fact $\CType{L}{\TAbst{\Y}{x}{W}T}{W}$ does not hold 
regardless of the term $W$, including any sort.

Higher-order universal quantifications,
for instance on the sort of propositions,
are unavoidable in formal logic
(inference rules do contain propositional variables)
and a predicativist logician accepts them provided that
they are schematic, \ie they occur just in the meta-language.

When such quantifications are formalized explicitly in typed
$\lambda$-calculus following the PAT interpretation of logic,
the predicativist logician expects to render them with predicative means,
and $\TAbst{\Y}{}{}$-abstraction is one of such means indeed.
Moreover we notice that the $\TAbst{\Y}{}{}$-quantification of schematic variables
is standard practice in the languages of the Automath family from the start.

Other constructions serving the same purpose are
the parametric constants of $\CpPTS$ \cite{KLN01} and
the parametric abstraction and quantification ($\S$ and $\P$) of $\LLXVIII$ \cite{KLN03}.
In this respect, we can see $\TAbst{\Y}{}{}$
as a construction playing the role of both $\S$ and $\P$.

Some systems of the Automath family enforce another well-known
identification: the one concerning functional abstraction and universal quantification
(the constructions $\lambda$ and $\Pi$ of the $\LCube$).
If in $\Aut$ \cite{SPAb1} this identification is just syntactical, 
for we can recover the role of each unified binder by looking at its degree,
the situation is different is $\AutQE$ \cite{SPAa3},
where the unified binders of degree $2$ can play both roles at once.

For the reader's convenience we recall that a \emph{degree} 
is an integer indicating the position of a valid term in a type hierarchy.
Its definition depends on the specific type system, but it is always
the case that a term has degree $d$ when its type has degree $d-1$.
For example in the systems related to the $\LCube$
the definition assigns degree $0$ to sort $\SortB$.
Thus every term of type $\SortB$, including sort $\SortA$, has degree $1$
and so on.
On the other hand, the Automath tradition assigns degree $1$ to its sort $\tau$.

Outside the Automath family, the unification of $\lambda$ and $\Pi$
is investigated by \citeN{Kam05}.

Although discussing $\lambda$ and $\Pi$ goes beyond the scope of this article,
we notice that $\TAbst{\Y}{}{}$ can unify just the $\TAbst{}{}{}{}$
and the $\TProd{}{}{}$ of category $\ShapeSS$
according to the classification of \citeN{Brn92}.
Thus, we stress that the $\LD{}{}$-family does not aim at
the unification of $\TAbst{}{}{}{}$ and $\TProd{}{}{}$
in the general sense.

\subsection{Notes on Application}
\seclabel{application}

As its predecessors, $\LD{2}{B}$ displays application according to the
so-called item notation of \citeN{KN96b} in order to improve the visual
understanding of redexes.
In particular, $\TAppl{V}T$ denotes
the function $T$ applied to the argument $V$,
whose mainstream notation is $\Appl{(V)}{T}$. 

The condition on which an application is valid,
\ie the so-called \emph{applicability} condition,
deserves some general comments.
An important observation concerns the case
$\CNV{}{L}{\TAppl{V}\TNRef{x_1}}$.
In a PTS-like type system, where the types of types are sorts or
(by our mistake!) in $\LD{1}{A}$,
we just require that $\TNRef{x_1}$ inhabits a function space,
for instance by postulating $\CType{L}{\TNRef{x_1}}{\TProd{}{x}{W}U}$
or $\CType{L}{\TNRef{x_1}}{\TAbst{\Y}{x}{W}U}$ (provided that $\CType{L}{V}{W}$).
On the other hand, in an Automath-like type system as $\AutQE$ or $\LD{2}{A}$,
where type chains of the kind $\CTypeV{L}{\TNRef{x_1}}{\TNRef{x_n}}{\cdots}$ are possible,
we must extend the previous condition by postulating that there
exists some $n$ for which $\CTypeV{L}{\TNRef{x_1}}{\TNRef{x_n}}{\TAbst{\Y}{x}{W}U}$.

In order to catch both the restricted (PTS-like) condition
and the extended (Automath-like) condition,
$\LD{2}{B}$ allows to choose $n$ in a parametric subset $\A$ of natural numbers.
So $\A = \SUBSET{\hbox{any}\ n}{}$ implies the extended condition,
$\A = \SUBSET{1}{}$ implies the restricted condition,
if $\A = \SUBSET{0}{}$ only the applications of explicit functions are valid,
\ie $\TAppl{V}\TAbst{\Y}{x}{W}T$,
if $\A = \SUBSET{}{}$ no application is valid. 

The applicability condition in Automath tradition
allows $\CNV{}{L}{\TAppl{V}\TAbst{\Y}{x}{W}T}$ of every degree
which allows $\CNV{}{L}{\TAbst{\Y}{x}{W}T}$
(provided that $\CType{L}{V}{W}$).
The reader should notice that $T$ can be a type and, in this case,
$\TAbst{\Y}{x}{W}T$ is a dependent function space.
The benefit of this choice is that every typable term $X$ has a unique canonical type $\Typ{X}$
whose construction is syntax-directed and does not involve $\beta\zeta\theta\epsilon$-reduction.
Interestingly, \citeN{KBN99} show that the same thing works in the $\LCube$ extended with
$\Pi$-application, $\Pi$-reduction and definitions in environments. 

Comforted by these results, the approach of $\LD{2}{B}$ is to
allow $\CNV{}{L}{\TAppl{V}\TAbst{\Y}{x}{W}T}$ of any degree.

Finally we want to stress that
the applicability condition in this article is strong in that
$\CNV{}{L}{\TAppl{V}T}$ implies $\CNV{}{L}{V}$ and $\CNV{}{L}{T}$.
This is opposed to the weak applicability of $\DL$ \cite{Bru93},
where $\CNV{}{L}{\TAppl{V}T}$ implies $\CNV{}{L}{V}$ but not
necessarily $\CNV{}{L}{T}$.
In particular, the weak validity of $\TAppl{V}\TAbst{\Y}{x}{W}T$
corresponds to the strong validity of the $\beta$-reduct $\TAbbr{x}{\TCast{W}{V}}T$
which does not imply the validity of $\TAbst{\Y}{x}{W}T$.
Notice that $\DL$ does not have $\TAbbr{x}{\TCast{W}{V}}T$ and uses
$\TAppl{V}\TAbst{\Y}{x}{W}T$ instead.
On the contrary, we distinguish these constructions.
For the reader's convenience, here is a term that
is weakly valid but not strongly valid.
For $\CTypeI{L}{V}{W_1}{W_2}$ and $\CType{L}{T}{U}$,
take $\TAppl{W_1}\TAbst{}{y}{W_2}\TAppl{V}\TAbst{}{x}{y}T$.
The subterm $\TAbst{}{y}{W_2}\TAppl{V}\TAbst{}{x}{y}T$
is not valid because of the $\beta$-redex,
but the whole term is valid because so is its outer $\beta$-reduct
$\TAbbr{y}{W_1}\TAppl{V}\TAbst{}{x}{y}T$.

\subsection{Notes on Local Definitions}
\seclabel{substitution}

The $\LD{}{}$ family,
aiming to be a realistic tool for the specification of mathematics,
features non-recursive local definitions in terms
because of their undeniable importance in the practice of mathematical development.
Indeed, mathematics is unimaginable without some support for definitions.
For instance they appear in functional programming languages,
in practical implementations of \CoIC{} \cite{CoqArt}
and of other logical frameworks.

In particular the construction $\TAbbr{x}{V}T$ represents the
definition of $\TNRef{x}$ as $V$ in $T$
and allows the $\LD{}{}$ family to delay the substitution of $V$ in $T$
during $\beta$-reduction, that is realized in call-by-name style.

We notice that a sequence of adjacent definitions,
like $\TAbbr{x_1}{V_1} \ldots \TAbbr{x_n}{V_n}T$,
realizes what is known in the Automath tradition as a telescopic
(or dependent) explicit substitution,
\ie a multiple explicit substitution in which every term $V_i$ may depend on
each variable $x_j$ such that $j < i$.

We also notice that the presence of definitions allows
to develop the whole meta-theory of $\LD{2}{B}$
(and of $\LD{2}{A}$ and contrary to $\LD{1}{A}$)
without the meta-linguistic substitution, \ie $\Subst{x}{V}{T}$.

\subsection{Notes on Expected Type Annotation}
\seclabel{annotation}

Another linguistic feature that the $\LD{}{}$ family takes from real-world
programming languages and logical frameworks
is the expected type annotation, which is also known as the explicit type cast.

This construction appears as $\TCast{U}T$ in the present article
and may appear as $\Cast{U}{T}$ elsewhere.
Both notations mean that $U$ is an expected type of $T$.
Its importance lies in the possibility of expressing a type checking
problem as a validity asserting problem, \ie  
$\CType{L}{T}{U}$ iff $\CNV{}{L}{\TCast{U}T}$.

Moreover the variant of $\TCast{}$ termed $\TICast{0}{}$,
that we will define with Rule \eqnref{icast}
in the forthcoming systems of the $\LD{}{}$ family,
will allow to express with validity even the convertibility of valid terms.

\subsection{Notes on Environments}
\seclabel{environment}

In the $\LD{}{}$ family an environment is the structure containing the
information on the free variables of a valid term.
In the literature it may appear as a context,
but we follow \citeN{SU06} that use this term for a different structure.
There are well-established motivations
to allow more than just type declarations in environments
and in the systems of the $\LD{}{}$ family we allow non-recursive definitions as well.
As in the system of \citeN{KBN99},
this feature enables small-step $\beta$-reduction,
which in turn helps to achieve type checking by separating
type construction and type conversion
while the desired properties are maintained (see \appref{cpts}).

Some systems, see for instance $\LY$ of \citeN{SPAc6} and $\MF$ of \citeN{Mai09},
provide for the notion of valid environment, \ie $\CValid{}{L}$,
and are designed so as to ensure that $\CValid{L}{T}$ implies $\CValid{}{L}$.
If on the one hand this notion simplifies the proofs of some
system's properties and for instance is reasonable in the scenario of
categorical semantics (where objects correspond to valid environments),
on the other hand it is redundant in many cases and
forces a mutual recursion between the definitions of $\CValid{}{L}$
and $\CValid{L}{T}$ that the systems of the $\LD{}{}$ family aim to avoid.

\begin{acks}

I am grateful to C. Sacerdoti Coen and S. Solmi
for their support and valuable advice on the matter of this article.
I thank the anonymous referees
for their helpful suggestions with which I could improve this text.
I wish to dedicate this work to a very special lady,
T. Ilie, for her constant closeness
and for the joyful moments we shared in these years
during the development of the $\LD{}{}$ systems.

\end{acks}

\bibliography{%
lambdadelta,%
spa,%
kamareddine,%
helm,%
inductives,%
mtt,%
accattoli,%
coq,%
references%
}

\newpage

\appendix

\section{Some Members of the Framework}
\seclabel{members}

As we saw in \secref{nta}, the definition of our framework depends on a
parameter $\A$ possibly satisfying condition \eqnref{ad_dec}.
Thus, what we defined is a system $\LD{2}{B}\ad\A$ for each specific choice of $\A$, 

In \secref{acle} we developed a general tool for relating these systems,
by which we argue that some of them are equivalent,
\ie they validate the same closures $\Cl{L}{T}$.
Three choices for $\A$ deserve special attention here.
Firstly,
the system $\LD{2}{B}\ad\SysOmega$, where $\SysOmega = \ACAny$,
is essentially the system $\LD{2}{A}$ of \citeN{lambdadeltaR2c} (\secref{ld2a}).
Secondly,
the system $\LD{2}{B}\ad\ACOne$, which is equivalent to $\LD{2}{B}\ad\ACZeroOne$,
is essentially the system $\LD{1}{A}$ of \citeN{lambdadeltaJ1a} (\secref{ld1a}).
Finally,
the system $\LD{2}{B}\ad\ACZero$ comes into play
when we try to relate $\LD{2}{B}\ad\SysOmega$ with $\LD{2}{B}\ad{\SUBSET{1}{}}$
using $\eta$-conversion (\secref{rpce}).

\subsection{Iterated Inferred Type Assignment for Terms}
\seclabel{cpts}

Generally speaking and following the widely accepted terminology of \citeN{Cos96},
the inferred (or preferred) type of a term $T$ is a specific representative in the
equivalence class of the types of $T$, which are defined up to r-conversion.
It is provided by the function $\Typ{}$ of the Automath tradition \cite{Dln80}
and it is the canonical type of \citeN{KN96a} in a PTS.
It appears as the static type in $\LD{1}{A}$ and in $\LD{2}{A}$,
and plays a central role in the latter system being part of its definition.
Here, we introduce it as an auxiliary notion for the sole purpose of
relating $\LD{2}{A}$ with $\LD{2}{B}\ad\SysOmega$ in \secref{ld2a}
and we make some changes with respect to the former presentations.
Firstly, we do not define a single inferred type of $T$,
but a family of alternatives that comprise the (different) choices
we made in $\LD{1}{A}$ and in $\LD{2}{A}$.
Secondly, we do not require an inferred type of $T$ in $L$ to be closed in $L$.
This approach brings a simplification and works without loss of generality
since an inferred type $U$ of $T$ in $L$ is indeed a type of $T$ in $L$ under 
the assumption that $T$ is valid in $L$, which already ensures that $U$ is closed in $L$.
Notice that in $\LD{2}{A}$ the inferred type $U$ must be closed in $L$
so as to ensure that $U$ has a degree in $L$,
but, thanks to sort irrelevance of \secref{feqx},
we saw that the theory of $\LD{2}{B}$ can stand without the notion of degree.  

In the current setting an inferred $n$-type of $T_1$ is 
a term $T_2$ we obtain from $T_1$ by iterating type inference $n$ times
in the sense of \secref{cpms}
(this is related to $\NTAS{\A}{n}{L}{T_1}{T_2}$ of \secref{ntas}). 

In particular we define a relation $\CPT{n}{L}{T_1}{T_2}$
by picking the rules of $\CPM{n}{L}{T_1}{T_2}$ (\tabref{cpm}, \secref{cpms})
that make it sensible to consider $T_2$ as the inferred $n$-type of $T_1$ in $L$.
These are the rules of \tabref{cpt}.
What is missing from bound rt-reduction are the $\beta\zeta\theta\epsilon$-steps.
In other words we consider just the t-steps and the $\delta$-steps.
It follows by construction that $\CPT{n}{L}{T_1}{T_2}$ implies
$\CPM{n}{L}{T_1}{T_2}$, thus by \thref{types}{cnv_cpms_nta},
$\CNV{\A}{L}{T}$ and $\CPT{1}{L}{T}{U}$ imply $\NTA{\A}{L}{T}{U}$
as one expects.

\begin{table}
\appcaption{cpts}
{Bound t-reduction for terms (one step).}
\tablabel{cpt}
\begin{tabular}{c}

\infer[\ruleref{cpt}{L}]
{\CPT{n}{K\LPair{y}{V}}{\TNRef{x}}{T}}
{\CPT{n}{K}{\TNRef{x}}{T}}
\sep

\infer[\ruleref{cpt}{\delta}]
{\CPT{0}{K\LAbbr{x}{V}}{\TNRef{x}}{V}}
{}
\sep

\infer[\ruleref{cpt}{l}]
{\CPT{1}{K\LAbst{x}{W}}{\TNRef{x}}{W}}
{}
\nl

\infer[\ruleref{cpt}{e}]
{\CPT{1}{L}{\TCast{U}T}{U}}
{}
\sep

\infer[\ruleref{cpt}{Pl}]
{\CPT{0}{L}{\TPair{\Y}{x}{V_1}T}{\TPair{\Y}{x}{V_2}T}}
{\CPT{0}{L}{V_1}{V_2}}
\sep

\infer[\ruleref{cpt}{Pr}]
{\CPT{n}{L}{\TPair{\Y}{x}{V}T_1}{\TPair{\Y}{x}{V}T_2}}
{\CPT{n}{L\LPair{x}{V}}{T_1}{T_2}}
\nl

\infer[\ruleref{cpt}{s}]
{\CPT{1}{L}{\TSRef{s}}{\TSRef{\Next{s}}}}
{}
\sep

\infer[\ruleref{cpt}{\TAppl{}l}]
{\CPT{0}{L}{\TAppl{V_1}T}{\TAppl{V_2}T}}
{\CPT{0}{L}{V_1}{V_2}}
\sep

\infer[\ruleref{cpt}{\TAppl{}r}]
{\CPT{n}{L}{\TAppl{V}T_1}{\TAppl{V}T_2}}
{\CPT{n}{L}{T_1}{T_2}}
\nl

\infer[\ruleref{cpt}{\TCast{}l}]
{\CPT{0}{L}{\TCast{U_1}T}{\TCast{U_2}T}}
{\CPT{0}{L}{U_1}{U_2}}
\sep

\infer[\ruleref{cpt}{\TCast{}r}]
{\CPT{0}{L}{\TCast{U}T_1}{\TCast{U}T_2}}
{\CPT{0}{L}{T_1}{T_2}}
\sep

\infer[\ruleref{cpt}{\TCast{}b}]
{\CPT{1}{L}{\TCast{U_1}T_1}{\TCast{U_2}T_2}}
{\CPT{1}{L}{U_1}{U_2}&\CPT{1}{L}{T_1}{T_2}}
\nl

Rule $\ruleref{cpt}{L}$:
$y \neq x_1$ and $y$ not free in $T_2$.\\

\end{tabular}
\end{table}

Then we define $\CPTS{n}{L}{T_1}{T_2}$ following the pattern of $\CPMS{n}{L}{T_1}{T_2}$
with the rules of \tabref{cpts}.
This is the general inferred $n$-type assignment
whose desired properties are listed in \thref{cpts}{}.

Notice that we can gain more invariants by strengthening the
inferred type assignment.
These may include the uniqueness and closeness properties we
required both in $\LD{1}{A}$ and $\LD{2}{A}$.

\begin{theorem}[iterated inferred type assignment and bound rt-reduction]\
\thslabel{cpts}
\begin{enumerate}

\item\thlabel{cpts_cpms_conf_eq}
\Caption{iterated inferred types and bound rt-reducts are r-convertible on valid terms}
If $\CNV{\A}{L}{T_0}$
and $\CPTS{n}{L}{T_0}{T_1}$
and $\CPMS{n}{L}{T_0}{T_2}$
then $\CPES{0}{0}{L}{T_1}{T_2}$.

\item\thlabel{cpts_cprs_trans}
\Caption{iterated type inference and r-reduction composed as bound rt-reduction}
If $\CPTS{n}{L}{T_1}{T}$ and $\CPMS{0}{L}{T}{T_2}$
then $\CPMS{n}{L}{T_1}{T_2}$.

\end{enumerate}
\end{theorem}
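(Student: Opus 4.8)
The plan rests on one structural observation, already flagged in the surrounding text: every rule of \tabref{cpt} is literally one of the rules of \tabref{cpm}, so $\CPT{n}{L}{T_1}{T_2}$ implies $\CPM{n}{L}{T_1}{T_2}$. Since $\CPTS{n}{L}{T_1}{T_2}$ is the reflexive--transitive closure of $\CPT{n}{L}{T_1}{T_2}$ built with exactly the same closure rules (R, I, T) as $\CPMS{n}{L}{T_1}{T_2}$, an immediate induction on the derivation lifts this to $\CPTS{n}{L}{T_1}{T_2}$ implies $\CPMS{n}{L}{T_1}{T_2}$, crucially with the \emph{same} index $n$. With this inclusion in hand, both clauses reduce to facts about bound rt-reduction that are already available.

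For \thref{}{cpts_cprs_trans} I would argue directly. From $\CPTS{n}{L}{T_1}{T}$ the inclusion gives $\CPMS{n}{L}{T_1}{T}$; composing this with the hypothesis $\CPMS{0}{L}{T}{T_2}$ through the transitivity rule $\ruleref{cpms}{T}$ produces $\CPMS{n+0}{L}{T_1}{T_2}$, \ie $\CPMS{n}{L}{T_1}{T_2}$. No validity assumption is needed here, as expected, since we are merely concatenating two rt-reduction sequences.

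For \thref{}{cpts_cpms_conf_eq} the decisive point is that both given reductions carry the index $n$. Indeed $\CPTS{n}{L}{T_0}{T_1}$ yields $\CPMS{n}{L}{T_0}{T_1}$, so together with the hypothesis $\CPMS{n}{L}{T_0}{T_2}$ we obtain two bound rt-reduction sequences issuing from the valid term $T_0$ with \emph{equal} indices. I would then invoke the confluence half of the preservation theorem \thref{lsubv}{cnv_preserve}, namely $\PropC{\A}{L}{T_0}$, instantiated with the reflexive environment reductions $\LPR{L}{L}$ and with $n_1 = n_2 = n$. Because subtraction is truncated, $n - n = 0$, so the theorem supplies a common term $T$ with $\CPMS{0}{L}{T_1}{T}$ and $\CPMS{0}{L}{T_2}{T}$. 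This pair is exactly the premise of Rule $\ruleref{cpes}{I}$ for $\CPES{0}{0}{L}{T_1}{T_2}$, which closes the argument.

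The genuine difficulty lives not in this proof but in the result it leans on: the full confluence of bound rt-reduction on valid terms, \thref{lsubv}{cnv_preserve}, which carries the premise $\CNV{\A}{L}{T_0}$ precisely to tame the critical pair $(\epsilon, e)$. Given that theorem, the only load-bearing observations here are the bookkeeping of indices --- that t-reduction and rt-reduction share the same $n$, and that the symmetric case $n_1 = n_2 = n$ collapses to a common r-reduct via $n - n = 0$ --- after which both clauses follow as direct instances of earlier results.
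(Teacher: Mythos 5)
Your proof is correct, and it is essentially a faithful expansion of the paper's very terse one. For \thref{cpts}{cpts_cprs_trans} you match the paper exactly: the inclusion $\CPT{n}{L}{T_1}{T_2}$ implies $\CPM{n}{L}{T_1}{T_2}$ is asserted ``by construction'' in the text just before the theorem, it lifts to the reflexive--transitive closures with the same index, and composing by Rule $\ruleref{cpms}{T}$ gives the bound $n+0=n$; the paper simply calls this immediate. For \thref{cpts}{cpts_cpms_conf_eq} there is a marginal divergence in which earlier result is invoked: the paper's one-line proof cites \thref{cpes}{cnv_dec}, i.e.\ it points at the \secref{cpes} apparatus (rt-reduction of valid terms to unique normal forms and decidable rt-conversion), whereas you instantiate the confluence half of \thref{lsubv}{cnv_preserve} directly, taking $\PropC{\A}{L}{T_0}$ with $n_1=n_2=n$ and $L_1=L_2=L$ (reflexivity of $\LPR$, which holds once one supplies the atom rule evidently omitted from \tabref{lpr}, or which is automatic in the parallel formulation of the certified specification), so that truncated subtraction gives $n-n=0$, hence $\CPMS{0}{L}{T_1}{T}$ and $\CPMS{0}{L}{T_2}{T}$, and Rule $\ruleref{cpes}{I}$ closes. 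The two routes coincide in substance, since the \secref{cpes} results are themselves consequences of \thref{lsubv}{cnv_preserve}; if anything your direct instantiation is the cleaner reading, as decidability of validity by itself could not produce the conversion $\CPES{0}{0}{L}{T_1}{T_2}$, so the paper's citation is best understood as shorthand for the confluence machinery standing behind it.
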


\begin{proof}
\thref{}{cpts_cpms_conf_eq}
follows from \thref{cpes}{cnv_dec}.
\thref{}{cpts_cprs_trans}
is immediate.
\end{proof}

\subsection{A System with Automath-Like Applicability}
\seclabel{ld2a}

In this section we argue that
the system $\LD{2}{B}\ad\SysOmega$ we introduced in this article
is equivalent to the system $\LD{2}{A}$ of \citeN{lambdadeltaR2c}.
This system has the same constructions, rt-reduction rules and validity
rules (because the premise $n\in\SysOmega$ of Rule $\ruleref{cnv}{\TAppl{}}$
always holds) apart from the fact that bound rt-reduction is replaced
by iterated type inference (\secref{cpts}) followed by r-reduction
in Rule $\ruleref{cnv}{\TAppl{}}$ and Rule $\ruleref{cnv}{\TCast{}}$.
Our main point here is that \thref{cpts}{} guarantees the equivalence of
these reduction sequences, given that we apply these sequences just to valid terms.

Moreover, we can provide for the first time a set of axioms that 
fully describe the type relation of $\LD{2}{B}\ad\SysOmega$.
They are the rules of \tabref{types} and the next.
Rule $\ruleref{types}{\TAppl{}\omega}$ is advocated by \citeN{Bru91}.
\begin{equation}
\eqnlabel{nta_2a}
\vcenter{
\infer[\ruleref{types}{\TAppl{}0}]
{\NTA{\SysOmega}{L}{\TAppl{V}\TAbst{\Y}{x}{W}T}{\TAppl{V}\TAbst{\Y}{x}{W}U}}
{\NTA{\SysOmega}{L}{V}{W}&\NTA{\SysOmega}{L\LAbst{x}{W}}{T}{U}}
}\sep
\vcenter{
\infer[\ruleref{types}{\TAppl{}\omega}]
{\NTA{\SysOmega}{L}{\TAppl{V}T}{\TAppl{V}U}}
{\NTA{\SysOmega}{L}{T}{U}&\CNV{\SysOmega}{L}{\TAppl{V}U}}
}
\end{equation}

\subsection{A System with PTS-Like Applicability}
\seclabel{ld1a}

In this section we argue that
the system $\LD{2}{B}\ad\ACOne$ we introduced in this article
is equivalent to the system $\LD{1}{A}$ of \citeN{lambdadeltaJ1a}.
This system has the same terms and additional environment constructions,
which are unused in conversion and typing.
The reductions are the same except for $\beta$-contraction,
$\CStep{L}{\TAppl{V}\TAbst{\Y}{x}{W}T}{}{\beta}{\TAbbr{x}{V}T}$, 
in which the expected type $W$ is lost.
The single r-reduction step allows for less parallelism,
but the r-reduction sequences reach the same normal forms,
thus convertibility is preserved.
Since we specified $\LD{1}{A}$ and $\LD{2}{B}\ad\ACOne$ in different
versions of $\CIC$ (see \appref{specification}),
we are not able to verify this equivalence formally at the moment.

The main point here is that the type relation of $\LD{1}{A}$ is primitive
and we can prove that its axioms fully describe the type relation of $\LD{2}{B}\ad\ACOne$.
They are the rules of \tabref{types} and the next:
\begin{equation}
\eqnlabel{nta_appl_old}
\vcenter{
\infer[\ruleref{types}{\TAppl{}1}]
{\NTA{\ACOne}{L}{\TAppl{V}T}{\TAppl{V}\TAbst{\Y}{x}{W}U}}
{\NTA{\ACOne}{L}{V}{W}&\NTA{\ACOne}{L}{T}{\TAbst{\Y}{x}{W}U}}
}
\end{equation}

In $\LD{1}{A}$ we use the next type rule for $\TCast{U}T$,
which is equivalent to Rule $\ruleref{types}{\TCast{}}$.
\begin{equation}
\eqnlabel{nta_cast_old}
\vcenter{
\infer[\ruleref{types}{\TCast{}p}]
{\NTA{\ACOne}{L}{\TCast{U}T}{\TCast{W}U}}
{\NTA{\ACOne}{L}{T}{U}&\NTA{\ACOne}{L}{U}{W}}
}
\end{equation}

These considerations lead us to reserve a special name for the system
$\LD{2}{B}\ad\ACOne$, \ie $\LD{1}{B}$. 

Interestingly, the observations in \secref{acle}
imply the equivalence of $\LD{2}{B}\ad\ACOne$ and $\LD{2}{B}\ad\ACZeroOne$.

Finally, we remark that in this system, as in a PTS,
a structural induction on the type judgment is enough to prove
the preservation of type by r-reduction, thus the big-tree theorem is not needed.

\subsection{A Conjecture on Eta-Conversion}
\seclabel{rpce}

In this section we formulate a conjecture about the relation between
$\LD{2}{B}\ad\SysOmega$ and $\LD{2}{B}\ad\ACZero$.
Informally speaking, if a closure $\Cl{L}{T}$ is valid in $\LD{2}{B}\ad\SysOmega$,
then a suitable $\eta$-expansion of it is valid in $\LD{2}{B}\ad\ACZero$.
This $\eta$-expansion, which concerns the term $T$ and the terms in $L$
recursively referred by $T$, must be applied systematically on every
variable instance $x$ referring to an abstraction $\TAbst{\Y}{y}{W}$
or to a declaration $\LAbst{y}{W}$
where the expected type $W$ is functional, \ie rt-reduces to a function.

Formally, we apply the $\eta$-expansion to $\Cl{L_1}{T_1}$
with two functional relations.
One on terms: $\CPCE{L_1}{T_1}{T_2}$ (\tabref{cpce})
and one on environments $\RPCE{T}{L_1}{L_2}$ (\tabref{rpce}).
Both relations are total when $\CNV{\A}{L_1}{T_1}$ and
our expectation is:
$\MATOM{\CNV{\SysOmega}{L_1}{T_1}} \MIMP
 \MATOM{\RPCE{T_1}{L_1}{L_2}} \MIMP
 \MATOM{\CPCE{L_1}{T_1}{T_2}} \MIMP
 \MATOM{\CNV{\ACZero}{L_2}{T_2}}
$.

\begin{table}
\appcaption{rpce}
{$\eta$-expansion on declared variable occurrences for terms (one parallel step).}
\tablabel{cpce}
\begin{tabular}{c}

\infer[\ruleref{cpce}{\TSort{}}]
{\CPCE{L}{\TSRef{s}}{\TSRef{s}}}
{}
\sep

\infer[\ruleref{cpce}{L}]
{\CPCE{K\LPair{y}{V}}{\TNRef{x_1}}{T_2}}
{\CPCE{K}{\TNRef{x_1}}{T_2}}
\sep

\infer[\ruleref{cpce}{\LAbbr{}{}}]
{\CPCE{K\LAbbr{x}{V}}{\TNRef{x}}{x}}
{}
\nl

\infer[\ruleref{cpce}{\TAbst{}{}{}{}}]
{\CPCE{K\LAbst{x}{V}}{\TNRef{x}}{x}}
{\MALL{n,W,U}\NCPMS{n}{K}{V}{\TAbst{\Y}{y}{W}U}}
\sep

\infer[\ruleref{cpce}{\eta}]
{\CPCE{K\LAbst{x}{V}}{\TNRef{x}}{\TAbst{\Y}{y}{W}\TAppl{y}x}}
{\CPMS{n}{K}{V}{\TAbst{\Y}{y}{W}U}}
\nl

\infer[\ruleref{cpce}{P}]
{\CPCE{L}{\TPair{\Y}{x}{V_1}T_1}{\TPair{\Y}{x}{V_2}T_2}}
{\CPCE{L}{V_1}{V_2}&\CPCE{L\LPair{x}{V_1}}{T_1}{T_2}}
\sep

\infer[\ruleref{cpce}{F}]
{\CPCE{L}{\TFlat{V_1}T_1}{\TFlat{V_2}T_2}}
{\CPCE{L}{V_1}{V_2}&\CPCE{L}{T_1}{T_2}}
\nl

Rule $\ruleref{cpce}{L}$:
$y \neq x$ and $y$ not free in $T_2$.\sep
Rule $\ruleref{cpce}{\eta}$:
$y \neq x$ and $x$ not free in $W$.\\

\end{tabular}
\end{table}

\begin{table}
\appcaption{rpce}
{$\eta$-expansion on declared variable occurrences for environments (on selected entries).}
\tablabel{rpce}
\begin{tabular}{c}

\infer[\ruleref{rpce}{\LAtom}]
{\RPCE{\f}{\LAtom}{\LAtom}}
{}
\sep

\infer[\ruleref{rpce}{B}]
{\RPCE{\f}{K_1\LPair{y}{V_1}}{K_2\LPair{y}{V_2}}}
{\RPCE{\f}{K_1}{K_2}}
\nl

\infer[\ruleref{rpce}{P}]
{\RPCE{\f\SOR\SUBSET{y}{}}{K_1\LPair{y}{V_1}}{K_2\LPair{y}{V_2}}}
{\RPCE{\f}{K_1}{K_2}&\CPCE{K_1}{V_1}{V_2}}
\nl

Rules $\ruleref{rpce}{B}$ and $\ruleref{rpce}{P}$: $y \notin \f$.
\sep
$\RPCE{T}{L_1}{L_2}$ means $\RPCE{\FreeP{L_1}{T}}{L_1}{L_2}$.
\\

\end{tabular}
\end{table}

The work on the proof is in progress and seems to involve
some notions, as standard rt-reduction sequences,
whose exposition deserves more space than the one
at our disposal in this article.

We stress that
\citeN{lambdadeltaJ3a} supports our conjecture
by showing that the \emph{Grundlagen} can be translated
from an extension of $\LD{2}{B}\ad\SysOmega$
to the corresponding extension of $\LD{2}{B}\ad\ACOne$,
which contains $\LD{2}{B}\ad\ACZero$ as of \secref{acle},
by $\eta$-expanding 21 declared variable instances as
we do in Rule $\ruleref{cpce}{\eta}$.

\section{Auxiliary definitions}

In this appendix we list the auxiliary definitions
we did not include in the body of the article.

\begin{table}[!ht]
\seccaption{fqus}
{Proper subclosure (sequence of s-steps).}
\tablabel{fqup}
\begin{tabular}{c}

\infer[\ruleref{fqup}{I}]
{\FQUP{L_1}{T_1}{L_2}{T_2}}
{\FQU{L_1}{T_1}{L_2}{T_2}}
\sep

\infer[\ruleref{fqup}{T}]
{\FQUP{L_1}{T_1}{L_2}{T_2}}
{\FQUP{L_1}{T_1}{L}{T}&\FQUP{L}{T}{L_2}{T_2}}
\\

\end{tabular}
\end{table}

\begin{table}[!ht]
\seccaption{fqus}
{Reflexive subclosure (reflexive sequence of s-steps).}
\tablabel{fqus}
\begin{tabular}{c}

\infer[\ruleref{fqus}{R}]
{\FQUS{L}{T}{L}{T}}
{}
\sep

\infer[\ruleref{fqus}{I}]
{\FQUS{L_1}{T_1}{L_2}{T_2}}
{\FQU{L_1}{T_1}{L_2}{T_2}}
\sep

\infer[\ruleref{fqus}{T}]
{\FQUS{L_1}{T_1}{L_2}{T_2}}
{\FQUS{L_1}{T_1}{L}{T}&\FQUS{L}{T}{L_2}{T_2}}
\\

\end{tabular}
\end{table}

\begin{table}[!ht]
\seccaption{cpxs}
{Extended rt-reduction for terms (one step).}
\tablabel{cpx}
\begin{tabular}{c}

Reduction rules and type inference rules (r-steps and t-steps)\nl

\infer[\ruleref{cpx}{\beta}]
{\CPX{L}{\TAppl{V}\TAbst{\Y}{x}{W}T}{\TAbbr{x}{\TCast{W}V}T}}
{}
\sep

\infer[\ruleref{cpx}{\theta}]
{\CPX{L}{\TAppl{V}\TAbbr{x}{W}T}{\TAbbr{x}{W}\TAppl{V}T}}
{}
\nl

\infer[\ruleref{cpx}{\#}]
{\CPX{K\LPair{x}{V}}{\TNRef{x}}{V}}
{}
\sep

\infer[\ruleref{cpx}{\zeta}]
{\CPX{L}{\TAbbr{x}{V}T}{T}}
{}
\sep

\infer[\ruleref{cpx}{\epsilon}]
{\CPX{L}{\TCast{U}T}{T}}
{}
\nl

\infer[\ruleref{cpx}{s}]
{\CPX{L}{\TSRef{s_1}}{\TSRef{s_2}}}
{}
\sep

\infer[\ruleref{cpx}{e}]
{\CPX{L}{\TCast{U}T}{U}}
{}
\nl

Context rules\nl

\infer[\ruleref{cpx}{L}]
{\CPX{K\LPair{y}{V}}{\TNRef{x}}{T}}
{\CPX{K}{\TNRef{x}}{T}}
\sep

\infer[\ruleref{cpx}{\TAppl{}l}]
{\CPX{L}{\TAppl{V_1}T}{\TAppl{V_2}T}}
{\CPX{L}{V_1}{V_2}}
\sep

\infer[\ruleref{cpx}{\TAppl{}r}]
{\CPX{L}{\TAppl{V}T_1}{\TAppl{V}T_2}}
{\CPX{L}{T_1}{T_2}}
\nl

\infer[\ruleref{cpx}{Pl}]
{\CPX{L}{\TPair{\Y}{x}{V_1}T}{\TPair{\Y}{x}{V_2}T}}
{\CPX{L}{V_1}{V_2}}
\sep

\infer[\ruleref{cpx}{Pr}]
{\CPX{L}{\TPair{\Y}{x}{V}T_1}{\TPair{\Y}{x}{V}T_2}}
{\CPX{L\LPair{x}{V}}{T_1}{T_2}}
\nl

\infer[\ruleref{cpx}{\TCast{}l}]
{\CPX{L}{\TCast{U_1}T}{\TCast{U_2}T}}
{\CPX{L}{U_1}{U_2}}
\sep

\infer[\ruleref{cpx}{\TCast{}r}]
{\CPX{L}{\TCast{U}T_1}{\TCast{U}T_2}}
{\CPX{L}{T_1}{T_2}}
\nl

Rule $\ruleref{cpx}{\zeta}$:
$x$ not free in $T$.
\sep
Rule $\ruleref{cpx}{L}$:
$y \neq x$ and $y$ not free in $T$.
\\

\end{tabular}
\end{table}

\begin{table}[!ht]
\seccaption{cpxs}
{Extended rt-reduction for terms (sequence of steps).}
\tablabel{cpxs}
\begin{tabular}{c}

\infer[\ruleref{cpxs}{R}]
{\CPXS{L}{T}{T}}
{}
\sep

\infer[\ruleref{cpxs}{I}]
{\CPXS{L}{T_1}{T_2}}
{\CPX{L}{T_1}{T_2}}
\sep

\infer[\ruleref{cpxs}{T}]
{\CPXS{L}{T_1}{T_2}}
{\CPXS{L}{T_1}{T}&\CPXS{L}{T}{T_2}}
\\

\end{tabular}
\end{table}

\begin{table}[!ht]
\seccaption{lprs}
{R-reduction for environments (sequence of steps on all entries).}
\tablabel{lprs}
\begin{tabular}{c}

\infer[\ruleref{lprs}{I}]
{\LPRS{L_1}{L_2}}
{\LPR{L_1}{L_2}}
\sep

\infer[\ruleref{lprs}{T}]
{\LPRS{L_1}{L_2}}
{\LPRS{L_1}{L}&\LPRS{L}{L_2}}
\\

\end{tabular}
\end{table}

\begin{table}[!ht]
\seccaption{lpxs}
{Extended rt-reduction for environments (one step on all entries).}
\tablabel{lpx}
\begin{tabular}{c}

\infer[\ruleref{lpx}{B}]
{\LPX{K_1\LPair{y}{V}}{K_2\LPair{y}{V}}}
{\LPX{K_1}{K_2}}
\sep

\infer[\ruleref{lpx}{P}]
{\LPX{K\LPair{y}{V_1}}{K\LPair{y}{V_2}}}
{\CPX{K}{V_1}{V_2}}
\\

\end{tabular}
\end{table}

\begin{table}[!ht]
\seccaption{lpxs}
{Extended rt-reduction for environments (sequence of steps on all entries).}
\tablabel{lpxs}
\begin{tabular}{c}

\infer[\ruleref{lpxs}{R}]
{\LPXS{L}{L}}
{}
\sep

\infer[\ruleref{lpxs}{I}]
{\LPXS{L_1}{L_2}}
{\LPX{L_1}{L_2}}
\sep

\infer[\ruleref{lpxs}{T}]
{\LPXS{L_1}{L_2}}
{\LPXS{L_1}{L}&\LPXS{L}{L_2}}
\\

\end{tabular}
\end{table}

\begin{table}[!ht]
\seccaption{fpbs}
{Extended qrst-reduction for closures (sequence of steps).}
\tablabel{fpbs}
\begin{tabular}{c}

\infer[\ruleref{fpbs}{I}]
{\FPBS{L_1}{T_1}{L_2}{T_2}}
{\FPB{L_1}{T_1}{L_2}{T_2}}
\sep

\infer[\ruleref{fpbs}{T}]
{\FPBS{L_1}{T_1}{L_2}{T_2}}
{\FPBS{L_1}{T_1}{L}{T}&\FPBS{L}{T}{L_2}{T_2}}
\\

\end{tabular}
\end{table}

\begin{table}[!ht]
\seccaption{gcr}
{Neutral (or simple) term.}
\tablabel{simple}
\begin{tabular}{c}

\infer[\ruleref{simple}{\TSRef{}}]
{\Simple{\TSRef{s}}}
{}
\sep

\infer[\ruleref{simple}{\TLRef{}}]
{\Simple{\TLRef{x}}}
{}
\sep

\infer[\ruleref{simple}{F}]
{\Simple{\TFlat{V}T}}
{}
\\

\end{tabular}
\end{table}

\begin{table}[!ht]
\appcaption{cpts}
{Bound t-reduction for terms (sequence of steps).}
\tablabel{cpts}
\begin{tabular}{c}

\infer[\ruleref{cpts}{R}]
{\CPTS{0}{L}{T}{T}}
{}
\sep

\infer[\ruleref{cpts}{I}]
{\CPTS{n}{L}{T_1}{T_2}}
{\CPT{n}{L}{T_1}{T_2}}
\sep

\infer[\ruleref{cpts}{T}]
{\CPTS{n_1+n_2}{L}{T_1}{T_2}}
{\CPTS{n_1}{L}{T_1}{T}&\CPTS{n_2}{L}{T}{T_2}}
\\

\end{tabular}
\end{table}

\section{Certified Specification}
\seclabel{specification}

Contrary to a common practice,
we developed the theory of $\LD{2}{B}$ from the start
as the machine-checked specification of \citeN{lambdadeltaV2b},
which is not the formalized counterpart of some previous informal material.
As we can see in \tabref{specification},
the proofs of the few main results presented in this article
break down to 2000 lemmas that were not practical to handle just with pen and paper.

\begin{table}[!ht]
\caption
{Summary of the certified specification (October 2015 to September 2020).}
\tablabel{specification}
\begin{tabular}{lrr}
\toprule
Branch&
Definitions&
Propositions\\
\midrule
Shared structures for the $\LD{}{}$ family&
142&
989\\
Specific structures for $\LD{2}{B}$&
46&
1065\\
\bottomrule
\end{tabular}
\end{table}

The specification currently exists only in its digital version
and is developed within the {\CoIC} ($\CIC$) of \citeN{CP90}
with the help of the interactive prover Matita of \citeN{ARST11}.
This article \emph{informalizes} some selected definitions and proofs.
The corresponding proof objects are available in full as resources of the
Hypertextual Electronic Library of Mathematics (HELM) of \citeN{APSGS03}
(see \appref{pointers}).

Information on the current status of the $\LD{}{}$ family
is available at \DURL{\LDHome}.

The reader should be aware that the specification differs from the
theory presented here in some minor respects. 
For example, we represent variable occurrences with
position indexes by depth \cite{SPAc2} rather than with names
in order to turn $\alpha$-equivalence into syntactical equivalence.
Moreover, we use parallel reduction \cite{Tak95} rather than sequential reduction
in order to reduce the number of rules in some definitions and, thus,
the number of cases in some proofs.
In addition, we consider excluded entries in environments
as we explain in the next \appref{exclusion}.

\subsection{Environments with Excluded Entries}
\seclabel{exclusion}

\begin{flushright}
--- Why don't you show your face to your king?\\
--- Sire, because I do not exist!\\
Italo Calvino, The Nonexistent Knight\\
\end{flushright}

Even if free variable occurrences are unlikely to appear in the theory
of a typed $\lambda$-calculus, where all terms of interest are typed
and thus are closed in their environment, there are cases in which they do occur.
In our case we must consider Rule $\ruleref{rsx}{P\dx}$ of \secref{fsb} line $\eqnref{rsxP}$
and next line $\eqnref{rsxPdx}$,
where the environment of $T$ in the premise is $L\LPair{x}{V}$
and $x$ may be free in $T$,
while the environment of $T$ in the conclusion is just $L$
and $x$ is not bound by $L$.
This is an issue in the certified specification of $\LD{2}{B}$
because we refer to a variable by position via its depth index according to \citeN{SPAc2}.
In this situation the variable references of $T$
in the premise of the rule and in its conclusion are not related
by the well-established functions introduced by \citeN{SPAc2}.
\begin{equation}
\eqnlabel{rsxPdx}
\vcenter{
\infer[\ruleref{rsx}{P\dx}]
{\RSX{T}{L}}
{\RSX{\TPair{\Y}{x}{V}T}{L}}
}
\end{equation}

Certainly we could solve the issue by changing the way we refer to
variables, or we could set up an ad-hoc correlation function, or we
could even bind $x$ to a fake declaration in the conclusion of the rule,
but the most elegant solution to us lies on considering
the exclusion binder of \citeN{lambdadeltaJ1a}:
a device we removed from $\LD{2}{A}$, but that
the ongoing discussion fully justifies.

In particular we extend environments with the clause $L \GDEF K\LVoid{x}$
and we pose that an occurrence of $x$ formally bound by $\LVoid{x}$ is free.
The letter $\chi$ is taken after $\chi\acute{\alpha}o\sigma$:
Greek for \emph{gaping void}.

This extension requires to add specific rules
that take care of the entry $\LVoid{x}$.
This leads to pose that $\LPair{y}{V}$ includes $\LVoid{y}$
in the $L$-rules and in the $B$-rules.
Then we need to introduce the $X$-rules of \tabref{X}
and the definition:
$\FreeP{K\LVoid{x}}{\TNRef{x}} \defeq \SUBSET{x}{}$
$\MATOM{\ruleref{freep}{X}}$.
In the end Rule $\ruleref{rsx}{P\dx}$ looks as follows.
\begin{equation}
\eqnlabel{rsxXdx}
\vcenter{
\infer[\ruleref{rsx}{X\dx}]
{\RSX{T}{L\LVoid{x}}}
{\RSX{\TPair{\Y}{x}{V}T}{L}}
}
\end{equation}

\begin{table}
\appcaption{exclusion}
{Rules for environments with excluded entries.}
\tablabel{X}
\begin{tabular}{c}

\infer[\ruleref{lpr}{X}]
{\LPR{K_1\LVoid{y}}{K_2\LVoid{y}}}
{\LPR{K_1}{K_2}}
\sep

\infer[\ruleref{lpx}{X}]
{\LPX{K_1\LVoid{y}}{K_2\LVoid{y}}}
{\LPX{K_1}{K_2}}
\sep

\infer[\ruleref{jsx}{X}]
{\JSX{K_1\LPair{y}{V}}{K_2\LVoid{y}}}
{\JSX{K_1}{K_2}&\RSX{K_2}{V}}
\nl

\infer[\ruleref{req}{X}]
{\REQ{\f\SOR\SUBSET{y}{}}{K_1\LVoid{y}}{K_2\LVoid{y}}}
{\REQ{\f}{K_1}{K_2}}
\sep

\infer[\ruleref{reqx}{X}]
{\REQX{\f\SOR\SUBSET{y}{}}{K_1\LVoid{y}}{K_2\LVoid{y}}}
{\REQX{\f}{K_1}{K_2}}
\sep

\infer[\ruleref{rpce}{X}]
{\RPCE{\f\SOR\SUBSET{y}{}}{K_1\LVoid{y}}{K_2\LVoid{y}}}
{\RPCE{\f}{K_1}{K_2}}
\nl

Rule $\ruleref{jsx}{X}$ replaces Rule $\ruleref{jsx}{P}$.
\sep
Rules $\ruleref{req}{X}$, $\ruleref{reqx}{X}$, $\ruleref{rpce}{X}$: $y \notin \f$.
\\

\end{tabular}
\end{table}

The reader should notice that in this extension
equation $\eqnref{freep}$ of \secref{freep} takes the next form.
\begin{equation}
\eqnlabel{freepX}
\FreeP{L}{\TPair{\Y}{x}{V}T} =
\FreeP{L}{V} \SOR \FreeP{L\LVoid{x}}{T}
\SDIFF \SUBSET{x}{}
\end{equation}

Observing that the excluded entry $\LVoid{x}$ always appears at the right-hand
side of an environment, \ie not in the middle of it, we could
avoid it by referring to $x$ by level rather than by depth.
This observation leads us quite naturally to pose the general question whether
the theory of a typed $\lambda$-calculus is formalized more conveniently
by referring to variables by level or by depth.

\subsection{Pointers to the Certified Specification}
\seclabel{pointers}

At the moment of writing this article,
the certified specification of $\LD{2}{B}$ is available on the Web at
\DURL{http://helm.cs.unibo.it/lambdadelta/download/lambdadelta_2B.tar.bz2}
as a bundle of script files
for the i.t.p. Matita version 0.99.4.
For each proposition stated in this article
we give a pointer consisting of a path with four components:
a two-level directory inside the bundle,
a file name inside this directory
and a proved statement inside this file.
Notice that the notation in the files and in the article
may differ because of incompatibilities between
the characters available for \LaTeX{} and for Matita.
Moreover, we might modify these pointers in the forthcoming revisions of $\LD{2}{B}$.
Here we are referring to
the revision 2020-12-08 19:00:50
of the directory \DURL{/matita/matita/contribs/lambdadelta/}
of the \verb+helm.git+ repository
at \DURL{http://matita.cs.unibo.it/gitweb/}.
\begin{itemize}

\pointer{lsubr}{lsubr_cpm_trans}{\BII}{\RTTran}{cpm\_lsubr}{lsubr\_cpm\_trans}
\pointer{lsubr}{lsubr_cpx_trans}{\BII}{\RTTran}{cpx\_lsubr}{lsubr\_cpx\_trans}

\pointer{lprs}{lpr_cpm_trans}{\BII}{\RTComp}{cpms\_lpr}{lpr\_cpm\_trans}
\pointer{lprs}{cpr_conf_lpr}{\BII}{\RTTran}{lpr\_lpr}{cpr\_conf\_lpr}
\pointer{lprs}{cprs_conf}{\BII}{\RTComp}{cprs\_cprs}{cprs\_conf}
\pointer{lprs}{lpr_conf}{\BII}{\RTTran}{lpr\_lpr}{lpr\_conf}
\pointer{lprs}{lprs_conf}{\BII}{\RTComp}{lprs\_lprs}{lprs\_conf}

\pointer{lpxs}{fqu_cpx_trans}{\BII}{\RTTran}{cpx\_fqus}{fqu\_cpx\_trans}
\pointer{lpxs}{lpx_fqu_trans}{\BII}{\RTTran}{lpx\_fquq}{lpx\_fqu\_trans}
\pointer{lpxs}{lpx_cpx_trans}{\BII}{\RTComp}{cpxs\_lpx}{lpx\_cpx\_trans}

\pointer{req}{req_fqu_trans}{\SII}{\Static}{reqg\_fqus}{reqg\_fqu\_trans}
\pointer{req}{cpx_req_conf_sn}{\BII}{\RTTran}{cpx\_reqg}{cpx\_reqg\_conf\_sn}
\pointer{req}{cpx_req_conf}{\BII}{\RTTran}{rpx\_reqg}{cpx\_teqg\_repl\_reqg}
\pointer{req}{lpx_req_conf}{\BII}{\RTTran}{rpx\_reqg}{rpx\_reqg\_conf}

\pointer{fpbs}{fpbs_inv_star}{\BII}{\RTComp}{fpbs\_lpxs}{fpbs\_inv\_star}
\pointer{fpbs}{fpbs_intro_star}{\BII}{\RTComp}{fpbs\_lpxs}{fpbs\_intro\_star}


\pointer{fsb}{rsx_cpx_trans_jsx}{\BII}{\RTComp}{jsx\_rsx}{rsx\_cpx\_trans\_jsx}
\pointer{fsb}{rsx_lref_pair_lpxs}{\BII}{\RTComp}{rsx\_csx}{rsx\_lref\_pair\_lpxs}
\pointer{fsb}{csx_rsx}{\BII}{\RTComp}{rsx\_csx}{csx\_rsx}
\pointer{fsb}{csx_fsb_fpbs}{\BII}{\RTComp}{fsb\_csx}{csx\_fsb\_fpbs}

\pointer{gcr}{cpxs_fwd_beta}{\BII}{\RTComp}{cpxs\_teqo}{cpxs\_fwd\_beta}
\pointer{gcr}{csx_gcr}{\BII}{\RTComp}{csx\_gcr}{csx\_gcr}
\pointer{gcr}{acr_gcr}{\SII}{\Static}{gcp\_cr}{acr\_gcr}

\pointer{lsuba}{aaa_mono}{\SII}{\Static}{aaa\_aaa}{aaa\_mono}
\pointer{lsuba}{aaa_dec}{\SII}{\Static}{aaa\_dec}{aaa\_dec}
\pointer{lsuba}{lsuba_aaa_conf}{\SII}{\Static}{lsuba\_aaa}{lsuba\_aaa\_conf}
\pointer{lsuba}{lsuba_aaa_trans}{\SII}{\Static}{lsuba\_aaa}{lsuba\_aaa\_trans}
\pointer{lsuba}{cpx_aaa_conf_lpx}{\BII}{\RTTran}{lpx\_aaa}{cpx\_aaa\_conf\_lpx}

\pointer{lsubc}{cnv_fwd_aaa}{\BII}{\Dynamic}{cnv\_aaa}{cnv\_fwd\_aaa}
\pointer{lsubc}{acr_aaa_lsubc_lifts}{\SII}{\Static}{gcp\_aaa}{acr\_aaa\_lsubc\_lifts}
\pointer{lsubc}{cnv_fwd_fsb}{\BII}{\Dynamic}{cnv\_fsb}{cnv\_fwd\_fsb}

\pointer{lsubv}{lsubv_cnv_trans}{\BII}{\Dynamic}{lsubv\_cnv}{lsubv\_cnv\_trans}
\pointer{lsubv}{cnv_cpm_teqx_conf_lpr_aux}{\BII}{\Dynamic}{cnv\_cpm\_teqx\_conf}{cnv\_cpm\_teqx\_conf\_lpr\_aux}
\pointer{lsubv}{cnv_cpm_teqx_cpm_trans_sub}{\BII}{\Dynamic}{cnv\_cpm\_teqx\_trans}{cnv\_cpm\_teqx\_cpm\_trans\_sub}
\pointer{lsubv}{cnv_cpm_conf_lpr_aux}{\BII}{\Dynamic}{cnv\_cpm\_conf}{cnv\_cpm\_conf\_lpr\_aux}
\pointer{lsubv}{cnv_cpm_trans_lpr_aux}{\BII}{\Dynamic}{cnv\_cpm\_trans}{cnv\_cpm\_trans\_lpr\_aux}
\pointer{lsubv}{cnv_cpms_conf_lpr_aux}{\BII}{\Dynamic}{cnv\_cpms\_conf}{cnv\_cpms\_conf\_lpr\_aux}
\pointer{lsubv}{cnv_preserve}{\BII}{\Dynamic}{cnv\_preserve}{cnv\_preserve}

\pointer{cpmuwe}{cpms_total_aaa}{\BII}{\RTComp}{cpms\_aaa}{cpms\_total\_aaa}
\pointer{cpmuwe}{cpmre_total_aaa}{\BII}{\RTComp}{cpmre\_aaa}{cpmre\_total\_aaa}
\pointer{cpmuwe}{cpmuwe_total_csx}{\BII}{\RTComp}{cpmuwe\_csx}{cpmuwe\_total\_csx}
\pointer{cpmuwe}{cnv_cpmre_mono}{\BII}{\Dynamic}{cnv\_cpmre}{cnv\_cpmre\_mono}
\pointer{cpmuwe}{cnv_cpmuwe_mono}{\BII}{\Dynamic}{cnv\_cpmuwe}{cnv\_cpmuwe\_mono}

\pointer{cpes}{cnv_R_cpmuwe_dec}{\BII}{\Dynamic}{cnv\_cpmuwe\_cpmre}{cnv\_R\_cpmuwe\_dec}
\pointer{cpes}{cnv_cpes_dec}{\BII}{\Dynamic}{cnv\_preserve\_cpes}{cnv\_cpes\_dec}
\pointer{cpes}{cnv_dec}{\BII}{\Dynamic}{cnv\_eval}{cnv\_dec}

\pointer{types}{nta_fwd_cnv_sn}{\BII}{\Dynamic}{nta}{nta\_fwd\_cnv\_sn}
\pointer{types}{nta_fwd_cnv_dx}{\BII}{\Dynamic}{nta}{nta\_fwd\_cnv\_dx}
\pointer{types}{nta_fwd_aaa}{\BII}{\Dynamic}{nta\_aaa}{nta\_fwd\_aaa}
\pointer{types}{nta_abst_predicative}{\BII}{\Dynamic}{nta\_aaa}{nta\_abst\_predicative}
\pointer{types}{cnv_cpms_nta}{\BII}{\Dynamic}{nta\_preserve}{cnv\_cpms\_nta}
\pointer{types}{cnv_nta_sn}{\BII}{\Dynamic}{nta\_preserve}{cnv\_nta\_sn}
\pointer{types}{nta_cprs_conf}{\BII}{\Dynamic}{nta\_preserve}{nta\_cprs\_conf}
\pointer{types}{nta_mono}{\BII}{\Dynamic}{nta\_preserve}{nta\_mono}
\pointer{types}{nta_typecheck_dec}{\BII}{\Dynamic}{nta\_eval}{nta\_typecheck\_dec}
\pointer{types}{nta_inference_dec}{\BII}{\Dynamic}{nta\_eval}{nta\_inference\_dec}

\pointer{acle}{cnv_fwd_cpms_abst_dx_le}{\BII}{\Dynamic}{cnv\_aaa}{cnv\_fwd\_cpms\_abst\_dx\_le}
\pointer{acle}{cnv_acle_trans}{\BII}{\Dynamic}{cnv\_acle}{cnv\_acle\_trans}

\pointer{cpts}{cpts_cpms_conf_eq}{\BII}{\Dynamic}{cnv\_cpts}{cpts\_cpms\_conf\_eq}
\pointer{cpts}{cpts_cprs_trans}{\BII}{\RTComp}{cpts\_cpms}{cpts\_cprs\_trans}

\end{itemize}

\end{document}